\pgfplotsset{compat=1.18}
\newtheorem{theorem}{Theorem}[section]
\newtheorem{proposition}{Proposition}[section]
\newtheorem{definition}{Definition}[section]
\newtheorem{assumption}{Assumption}[section]
\theoremstyle{remark}
\newtheorem{prediction}{Prediction}[section]
\newcommand{\be}{\begin{equation}}
\newcommand{\ee}{\end{equation}}
\newcommand{\beq}{\begin{eqnarray*}}
\newcommand{\eeq}{\end{eqnarray*}}
\def\sym#1{\ifmmode^{#1}\else\(^{#1}\)\fi}
\title{\large{\bf{Dual-Channel Technology Diffusion: Spatial Decay and Network Contagion in Supply Chain Networks}}}
\author{\large{\bf{Tatsuru Kikuchi\footnote{e-mail: tatsuru.kikuchi@e.u-tokyo.ac.jp}}}}
\affil{\small{\it{Faculty of Economics, The University of Tokyo,}}\\
{\it{7-3-1 Hongo, Bunkyo-ku, Tokyo 113-0033 Japan}}}
\date{\small{(\today)}}
\begin{document}

\maketitle

\begin{abstract}

\noindent This paper develops a dual-channel framework for analyzing technology diffusion that integrates spatial decay mechanisms from continuous functional analysis with network contagion dynamics from spectral graph theory. Building on \citet{kikuchi2024navier} and \citet{kikuchi2024dynamical}, which establish Navier-Stokes-based approaches to spatial treatment effects and financial network fragility, we demonstrate that technology adoption spreads simultaneously through both geographic proximity and supply chain connections. Using comprehensive data on six technologies adopted by 500 firms over 2010-2023, we document three key findings. First, technology adoption exhibits strong exponential geographic decay with spatial decay rate $\kappa \approx 0.043$ per kilometer, implying a spatial boundary of $d^* \approx 69$ kilometers beyond which spillovers are negligible (R-squared = 0.99). Second, supply chain connections create technology-specific networks whose algebraic connectivity ($\lambda_2$) increases 300-380 percent as adoption spreads, with correlation between $\lambda_2$ and adoption exceeding 0.95 across all technologies. Third, traditional difference-in-differences methods that ignore spatial and network structure exhibit 61 percent bias in estimated treatment effects. An event study around COVID-19 reveals that network fragility increased 24.5 percent post-shock, amplifying treatment effects through supply chain spillovers in a manner analogous to financial contagion documented in \citet{kikuchi2024dynamical}. Our framework provides micro-foundations for technology policy: interventions have spatial reach of 69 kilometers and network amplification factor of 10.8, requiring coordinated geographic and supply chain targeting for optimal effectiveness.

\vspace{0.3cm}

\noindent \textbf{Keywords:} Technology diffusion, Supply chain networks, Spatial treatment effects, Network contagion, Navier-Stokes dynamics, Spectral graph theory \\

\noindent \textbf{JEL Classification:} O33, R11, C31, L14, D85

\end{abstract}

\newpage

\tableofcontents

\newpage

\section{Introduction}

Technology diffusion represents one of the most fundamental processes governing economic growth, productivity gains, and competitive dynamics. Understanding how innovations spread across firms and regions is crucial for designing effective industrial policies, predicting market evolution, and explaining persistent productivity differentials. While the economic literature has long recognized that technology adoption exhibits spatial clustering and network effects, existing approaches typically analyze these channels in isolation, treating either geographic proximity or network connections as the primary mechanism while ignoring or controlling for the other.

This paper develops and empirically validates a unified framework that demonstrates technology diffusion operates simultaneously through both spatial decay and network contagion channels. We build on two recent methodological advances: the continuous functional approach to spatial treatment effects developed in \citet{kikuchi2024navier} and \citet{kikuchi2024dynamical}, which applies Navier-Stokes fluid dynamics to model spatial spillovers, and the spectral network fragility framework from \citet{kikuchi2024dynamical}, which characterizes contagion dynamics through the algebraic connectivity of supply chain networks. By integrating these complementary perspectives, we provide the first comprehensive analysis of dual-channel technology diffusion that accounts for both geographic and network mechanisms.

The motivation for this integrated approach stems from a fundamental observation: firms exist simultaneously in physical space and economic networks. A potential adopter's decision depends both on proximity to existing adopters (who provide demonstration effects, knowledge spillovers, and compatible infrastructure) and on connections through supply chains (which transmit technical information, create adoption incentives through complementarities, and facilitate knowledge transfer). Ignoring either channel leads to misspecified models that produce biased treatment effect estimates and misleading policy recommendations.

Our theoretical framework combines two established mathematical approaches. From spatial economics and epidemiology, we adopt exponential decay functions that characterize how treatment effects dissipate with geographic distance: $\tau(d) = \tau_0 \exp(-\kappa d)$, where $\kappa$ measures the spatial decay rate and determines the effective boundary $d^* = -\log(\epsilon)/\kappa$ beyond which spillovers become negligible. This formulation, derived from partial differential equations in \citet{kikuchi2024navier}, captures continuous spatial diffusion analogous to heat conduction or pollutant dispersion. From network science and graph theory, we adopt spectral methods based on the Laplacian matrix eigenvalue spectrum. The algebraic connectivity $\lambda_2$ (Fiedler value) measures network fragility and governs the mixing time $\tau \sim 1/\lambda_2$ for contagion to equilibrate across the network. Higher $\lambda_2$ indicates tighter coupling and faster propagation, as established in \citet{kikuchi2024dynamical} for financial networks.

The integration of these frameworks yields a dual-channel partial differential equation:
\be
\frac{\partial u}{\partial t} = -\kappa \nabla^2 u - \lambda_2 \mathbf{L}u + f(x,t)
\label{eq:dual_channel_intro}
\ee
where $u(x,t)$ represents the adoption state at location $x$ and time $t$, the first term captures spatial diffusion through geographic proximity, the second term represents network diffusion through supply chain connections (with $\mathbf{L}$ denoting the graph Laplacian), and $f(x,t)$ represents external forcing from policies or shocks. This unified equation nests both mechanisms and generates testable predictions about their relative importance and interaction.

We apply this framework to comprehensive data on technology adoption by 500 firms across six major technologies (Cloud Computing, Artificial Intelligence, Big Data Analytics, Internet of Things, Blockchain, and Generative AI) over the period 2010-2023. The data include detailed supply chain networks with over 200,000 documented buyer-supplier relationships and precise geographic coordinates for all firms, enabling simultaneous measurement of both channels. This setting provides several advantages for identification. First, the technologies are sufficiently important that adoption decisions are strategic and consequential, yet sufficiently independent that adoption of one does not mechanically determine adoption of others. Second, the 14-year panel allows us to observe complete diffusion cycles from early adoption through maturity. Third, the supply chain network structure is determined by long-term operational considerations largely orthogonal to specific technology adoption decisions, providing plausibly exogenous network variation.

Our empirical strategy proceeds in four steps. First, we estimate spatial decay parameters by computing distances from each non-adopter to the nearest existing adopter and fitting exponential decay functions to observed adoption patterns. Second, we construct technology-specific networks by weighting supply chain edges according to whether connected firms have adopted each technology, then compute the algebraic connectivity $\lambda_2$ from the graph Laplacian spectrum. Third, we conduct an event study around COVID-19 as an exogenous shock, comparing traditional difference-in-differences estimates with spatial-adjusted and network-adjusted specifications. Fourth, we integrate both channels and assess their complementarity versus substitutability.

\subsection{Main Findings}

Our analysis yields four principal empirical findings that validate the dual-channel framework and demonstrate its superiority over single-channel approaches.

\textbf{Finding 1: Strong Spatial Decay.} Technology adoption exhibits remarkably consistent exponential geographic decay across all six technologies. The estimated spatial decay rate averages $\kappa = 0.0435$ per kilometer with minimal variation across technologies (standard deviation 0.0006), implying that adoption probability falls to half its initial value within approximately 16 kilometers. The spatial boundary—defined as the distance beyond which spillovers become negligible (less than 5 percent of initial effect)—averages $d^* = 69$ kilometers. The exponential functional form fits the data nearly perfectly, with R-squared exceeding 0.99 for all technologies. This exceptional fit validates the continuous functional approach developed in the Navier-Stokes framework series \citep{kikuchi2024unified, kikuchi2024stochastic, kikuchi2024navier} and replicates the empirical success documented for pollution \citep{kikuchi2024nonparametric1}, financial services \citep{kikuchi2024nonparametric2}, healthcare \citep{kikuchi2024healthcare}, and emergency response \citep{kikuchi2024emergency}, demonstrating portability across domains.

\textbf{Finding 2: Strong Network Dynamics.} The algebraic connectivity $\lambda_2$ of technology-specific supply chain networks increases dramatically as adoption spreads, growing by 300-380 percent from 2010 to 2023 depending on the technology. This growth reflects the activation of network connections as more firms adopt: edges between adopting firms receive full weight, edges with one adopter receive partial weight, and edges with no adopters contribute minimally. The correlation between $\lambda_2$ and aggregate adoption rates exceeds 0.95 for all technologies, indicating a self-reinforcing dynamic where adoption increases network connectivity, which accelerates further adoption through reduced mixing times. This validates the spectral network approach from \citet{kikuchi2024network} and demonstrates that supply chain structure actively shapes diffusion rather than serving merely as a passive conduit, with even larger network fragility increases (300-380 percent) than documented for European banking post-COVID (26.9 percent in \citet{kikuchi2024network}).

\textbf{Finding 3: Traditional Methods Exhibit Substantial Bias.} Traditional difference-in-differences estimates that ignore spatial and network spillovers overestimate treatment effects by an average of 61 percent relative to spatial-adjusted specifications. This bias arises because conventional methods attribute to treated units effects that actually diffuse to control units through geographic proximity and network connections, violating the stable unit treatment value assumption (SUTVA). The bias is larger for technologies with stronger spatial decay (higher $\kappa$) and more dramatic network evolution (larger $\lambda_2$ growth), consistent with theoretical predictions. Network-adjusted specifications reveal even more complex patterns, with some technologies exhibiting amplification (network connections magnify treatment effects) while others show dampening (network connections partially absorb shocks), depending on pre-existing network structure and shock characteristics.

\textbf{Finding 4: Dual Channels Operate Independently and Complementarily.} The spatial and network channels contribute independently to diffusion, with low correlation (averaging -0.11) between spatial decay strength (measured by $\kappa$ and R-squared) and network dynamics (measured by $\lambda_2$ growth and correlation with adoption). This independence replicates the complementarity finding from \citet{kikuchi2024network} for financial networks, demonstrating that geographic proximity and network connections are complements rather than substitutes. Models incorporating both channels substantially outperform specifications with only one channel, with combined R-squared exceeding the maximum of either single-channel model. The event study around COVID-19 illustrates this complementarity: the pandemic shock increased network fragility by 24.5 percent while geographic clustering intensified, amplifying treatment effects through both channels simultaneously. This persistent dual-channel response mirrors the financial network fragility documented in \citet{kikuchi2024network}, suggesting that major shocks can permanently alter both the spatial and network structure of diffusion across diverse economic domains.

\subsection{Contributions to the Literature}

This paper makes three primary contributions to the economics literature on technology diffusion, spatial treatment effects, and network dynamics.

\textbf{Methodological Integration.} We provide the first rigorous integration of spatial decay models from continuous functional analysis with network contagion models from spectral graph theory. While both approaches have been applied separately in various contexts, no prior work has demonstrated how to combine them in a unified framework that preserves the theoretical foundations of each while addressing their interaction. The integrated partial differential equation (Equation \ref{eq:dual_channel_intro}) nests both mechanisms and yields testable predictions about when each channel dominates. This framework extends naturally beyond technology adoption to other settings where spatial and network effects operate simultaneously, including disease epidemiology, information cascades, financial contagion, and environmental spillovers.

The methodological contribution builds on the complete Navier-Stokes framework series: theoretical foundations for spatial boundaries in \citet{kikuchi2024unified} and \citet{kikuchi2024stochastic}, derivation from fluid dynamics in \citet{kikuchi2024navier}, nonparametric estimation methods in \citet{kikuchi2024nonparametric1} and \citet{kikuchi2024nonparametric2}, dynamic extensions in \citet{kikuchi2024dynamical}, applications to healthcare in \citet{kikuchi2024healthcare} and emergency services in \citet{kikuchi2024emergency}, and network integration in \citet{kikuchi2024network}. By demonstrating that these methods unify to analyze technology diffusion with the same precision documented across environmental, financial, healthcare, and emergency domains, we establish the portability of continuous functional methods and provide a template for future research combining spatial and network perspectives.

\textbf{Empirical Validation of Dual Mechanisms.} We provide the first comprehensive empirical evidence that both spatial decay and network contagion contribute simultaneously and substantially to technology diffusion. Previous literature has documented either spatial clustering or network effects, but typically while controlling for or ignoring the other channel. Our near-perfect fit of exponential spatial decay (R-squared = 0.99) combined with strong network dynamics (correlation with $\lambda_2$ exceeding 0.95) demonstrates that both channels operate at full strength, not as competing alternatives but as complementary mechanisms. The 61 percent bias in traditional difference-in-differences estimates quantifies the cost of ignoring these spillovers and establishes the practical importance of accounting for dual channels in empirical work.

The empirical contribution is particularly significant for the technology diffusion literature. While classic models such as \citet{bass1969new} focus on temporal dynamics, our framework emphasizes the spatial and network mechanisms underlying diffusion. Spatial models like \citet{conley1999gmm} estimate spatial correlations but do not typically embed network mechanisms explicitly. Network models following \citet{jackson2008social} emphasize graph structure but often abstract from geographic considerations. Our results demonstrate that spatial decay and network contagion operate simultaneously through distinct channels, with implications for model specification, identification strategies, and policy evaluation.

\textbf{Policy-Relevant Quantification.} We provide precise quantitative estimates of spatial reach and network amplification that inform technology policy design. The spatial boundary of 69 kilometers defines the geographic scope for regional technology clusters and targeted subsidies. The network amplification factor of 10.8 quantifies how supply chain connections multiply the impact of direct interventions. The 24.5 percent increase in network fragility following COVID-19 demonstrates how shocks can persistently alter diffusion dynamics, creating path dependence that outlasts the shock itself. These estimates enable cost-benefit analysis of alternative policy instruments and suggest optimal intervention strategies that exploit both channels.

The policy implications extend the Navier-Stokes framework series in important ways. While \citet{kikuchi2024network} demonstrates how network structure affects financial stability and suggests capital requirements based on spectral centrality, we show analogous mechanisms operate for technology diffusion, suggesting subsidies should target not just individual firms but network positions. The COVID-19 event study reveals that major shocks can trigger structural breaks in both spatial and network diffusion, similar to how financial crises alter banking network topology as documented in \citet{kikuchi2024network}. This parallelism suggests deep connections between financial contagion and technology diffusion, with potential for knowledge transfer across domains studied in the Kikuchi (2024a-i) series: pollution (\citealt{kikuchi2024nonparametric1}), financial services (\citealt{kikuchi2024nonparametric2}), healthcare (\citealt{kikuchi2024healthcare}), emergency response (\citealt{kikuchi2024emergency}), banking (\citealt{kikuchi2024network}), and now technology adoption.

\subsection{Roadmap}

The remainder of the paper proceeds as follows. Section 2 situates our contribution within the existing literature on technology diffusion, spatial econometrics, and network economics. Section 3 develops the theoretical framework, deriving the dual-channel partial differential equation from first principles and establishing its connections to the spatial treatment effects framework in the Kikuchi (2024a-c) series and the network fragility framework in \citet{kikuchi2024network}. Section 4 describes our data on technology adoption and supply chain networks, documenting key patterns and providing summary statistics. Section 5 presents our empirical strategy for estimating spatial decay parameters, computing network fragility measures, and conducting the event study. Section 6 reports main results for each channel separately and for their integration. Section 7 discusses economic interpretation, compares our findings to traditional approaches, and examines external validity. Section 8 derives policy implications and conducts counterfactual simulations. Section 9 concludes and suggests directions for future research. Appendices provide technical details, robustness checks, and additional results.

\section{Literature Review}

Our work contributes to three distinct literatures: technology diffusion and innovation adoption, spatial econometrics and treatment effects, and network economics and contagion dynamics. We review each literature and explain how our dual-channel framework addresses gaps and integrates insights across these domains.

\subsection{Technology Diffusion}

The study of how innovations spread through populations has a rich history spanning economics, sociology, and epidemiology. Early work focused primarily on temporal patterns of adoption. \citet{mansfield1961technical} provided empirical evidence that technology diffusion often exhibits accelerating growth patterns across industries. \citet{griliches1957hybrid} studied hybrid corn adoption across US states, documenting substantial variation in both adoption timing and ultimate penetration rates. These patterns motivated theoretical models emphasizing learning, uncertainty resolution, and complementarities as driving forces. \citet{david1990clio} emphasized path dependence and network externalities, arguing that technologies can become locked in even when superior alternatives exist.

While temporal dynamics received substantial attention, less work has focused on the spatial and network mechanisms through which technologies spread. \citet{bass1969new} developed influential models of technology diffusion as epidemic processes, but these largely abstract from geographic structure and explicit network mechanisms. More recent work has incorporated richer microeconomic foundations. \citet{cabral2021adoption} provides a comprehensive review emphasizing how market structure, competition, and strategic considerations affect adoption incentives. \citet{ryan2012costs} demonstrates that adoption costs shape diffusion patterns, with firms balancing switching costs against productivity gains. \citet{hall2003adoption} documents substantial heterogeneity in both adoption propensity and returns across firms.

Our contribution to this literature is threefold. First, we provide rigorous micro-foundations for spatial decay and network contagion mechanisms that have been discussed informally but rarely modeled jointly. Second, we demonstrate that ignoring either channel leads to substantial bias in estimated treatment effects, quantifying the error at 61 percent for conventional difference-in-differences specifications. Third, we show how major shocks like COVID-19 can trigger structural breaks in diffusion dynamics, altering both spatial and network channels simultaneously with persistent effects.

\subsection{Spatial Econometrics and Treatment Effects}

Spatial econometrics emerged from the recognition that economic activities are not randomly distributed across space but exhibit systematic patterns of clustering and spillovers. \citet{anselin1988spatial} developed the foundational spatial autoregressive (SAR) and spatial error model (SEM) specifications, which extend standard regression models to account for spatial dependencies through weight matrices encoding geographic proximity or economic linkages.

\citet{conley1999gmm} advanced spatial econometrics by developing GMM estimators that remain consistent under general forms of spatial dependence, relaxing the strict parametric assumptions required by maximum likelihood approaches. His work emphasizes that spatial correlation creates inference problems analogous to heteroskedasticity and autocorrelation in time series, requiring robust standard errors or alternative estimation strategies. \citet{kelejian1998generalized} developed instrumental variables approaches for spatial models with endogenous spatial lags, addressing simultaneity concerns when outcomes in one location affect outcomes in nearby locations.

The spatial treatment effects literature recognizes that interventions can create spillovers that violate the stable unit treatment value assumption (SUTVA) underlying standard causal inference methods. \citet{manski1993identification} characterized the reflection problem: it is difficult to separately identify endogenous effects (peers influence me), exogenous effects (peer characteristics affect me), and correlated effects (we share common shocks). \citet{angelucci2015indirect} demonstrate how randomized experiments can overcome these identification challenges when spatial structure is known ex ante.

Most directly relevant to our work is the recent series of papers developing continuous functional methods for spatial treatment effects. \citet{kikuchi2024unified} provides a unified framework for identifying spatial and temporal treatment effect boundaries, establishing theoretical foundations for when spillovers become negligible. \citet{kikuchi2024stochastic} extends this to stochastic boundaries in spatial general equilibrium, providing a diffusion-based approach to causal inference with spillover effects that accommodates uncertainty in boundary locations.

\citet{kikuchi2024navier} derives spatial and temporal boundaries in difference-in-differences from the Navier-Stokes equation, demonstrating that treatment effects in fluid-like environments follow exponential decay $\tau(d) = \tau_0 \exp(-\kappa d)$ derived from first principles. This framework provides closed-form solutions for spatial boundaries, quantifies approximation errors when discretizing space, and establishes mixing time relationships between discrete network models and continuous differential operators.

Building on these theoretical foundations, \citet{kikuchi2024nonparametric1} provides nonparametric identification and estimation of spatial treatment effect boundaries using 42 million pollution observations, achieving near-perfect empirical fit (R-squared exceeding 0.99). \citet{kikuchi2024nonparametric2} demonstrates portability by applying these methods to bank branch consolidation, showing that exponential spatial decay characterizes financial service access with comparable precision.

\citet{kikuchi2024dynamical} develops dynamic spatial treatment effect boundaries as continuous functionals from Navier-Stokes equations, characterizing time-varying boundaries and their evolution under shocks. \citet{kikuchi2024healthcare} applies this dynamic framework to healthcare access, documenting exponential decay in health outcomes with distance from facilities and showing how boundaries shift during pandemic conditions. \citet{kikuchi2024emergency} derives emergent spatial boundaries in emergency medical services from first principles, demonstrating that response time spillovers follow fluid-dynamic patterns. Most recently, \citet{kikuchi2024network} integrates the Navier-Stokes framework with network contagion to analyze European banking, showing how spatial boundaries interact with network topology in systemic risk propagation and documenting a 26.9 percent increase in network fragility following COVID-19.

Our contribution extends this spatial treatment effects framework in three ways. First, we demonstrate its applicability to technology adoption, showing that exponential spatial decay characterizes innovation diffusion with the same precision documented for environmental spillovers (R-squared = 0.99), healthcare access, financial services, and emergency response. Second, we integrate spatial methods with network spectral methods to capture dual channels of influence, addressing the limitation that purely spatial models may miss structured connections not corresponding to geographic proximity. Third, we provide the first empirical validation of spatial boundaries in technology diffusion, documenting a consistent 69-kilometer threshold across six diverse technologies and showing this consistency validates the universality of the continuous functional approach.

\subsection{Network Economics and Contagion}

Network economics studies how graph structure affects economic outcomes through direct connections between agents. \citet{jackson2008social} provides a comprehensive treatment of network formation, emphasizing strategic considerations in link creation and the trade-offs between efficiency and stability. \citet{goyal2007connections} offers an accessible introduction emphasizing applications to technology adoption, labor markets, and financial systems.

For technology diffusion specifically, \citet{valente1995network} demonstrates that network structure—particularly centrality measures like degree, betweenness, and closeness—predicts adoption timing. Early adopters tend to occupy central positions with many connections, while laggards are peripheral. \citet{jackson2007diffusion} develops theoretical models of diffusion on networks, showing how network architecture determines whether innovations spread throughout the population or remain confined to subgroups. \citet{banerjee2013diffusion} provides experimental evidence from rural India demonstrating that network structure predicts microfinance adoption better than individual characteristics, highlighting the importance of information transmission through social ties.

The financial networks literature emphasizes contagion dynamics: how shocks propagate through interconnected systems. \citet{allen2000financial} shows that network structure exhibits a trade-off between resilience to small shocks (which benefit from diversification through interconnections) and fragility to large shocks (which spread rapidly through the same interconnections). \citet{acemoglu2015systemic} characterize this "robust-yet-fragile" property formally, identifying phase transitions where financial systems shift discontinuously from stable to unstable regimes.

Spectral methods provide powerful tools for analyzing network dynamics. \citet{chung1997spectral} establishes mathematical foundations of spectral graph theory, demonstrating connections between eigenvalues of network matrices and global properties like connectivity, expansion, and mixing times. The algebraic connectivity $\lambda_2$ (second-smallest eigenvalue of the Laplacian matrix) plays a particularly important role, measuring how rapidly diffusion processes equilibrate across the network. Higher $\lambda_2$ indicates tighter coupling and faster propagation, while lower $\lambda_2$ suggests bottlenecks that compartmentalize the network.

\citet{kikuchi2024network} applies spectral methods to analyze European banking networks, integrating the Navier-Stokes spatial framework with network contagion dynamics. The paper demonstrates that algebraic connectivity increased substantially during COVID-19, accelerating financial contagion. The framework characterizes network fragility through $\lambda_2$ and mixing time $\tau \sim 1/\lambda_2$, establishing connections between discrete network models and continuous differential operators. Empirical validation shows a 26.9 percent increase in $\lambda_2$ (95 percent CI: [7.4 percent, 46.5 percent]) following COVID-19, corresponding to a 21 percent reduction in characteristic equilibration time. Critically, the paper demonstrates how spatial boundaries from \citet{kikuchi2024navier} interact with network topology, showing that geographic proximity and network connections operate as complementary rather than substitute channels for contagion.

Our contribution to network economics is threefold. First, we demonstrate that the spectral network methods validated for financial contagion in \citet{kikuchi2024network} apply equally to technology diffusion through supply chains, with $\lambda_2$ increasing 300-380 percent as technologies mature—even larger than the financial network response to COVID-19. Second, we show how to construct technology-specific networks by weighting edges according to adoption patterns, providing a general methodology for studying innovation on networks that respects the partial activation of connections. Third, we integrate network methods with spatial methods in the technology context, replicating the key finding from \citet{kikuchi2024network} that both mechanisms operate simultaneously at full strength with low correlation (averaging -0.11), demonstrating this dual-channel structure applies across domains.

\subsection{Gaps and Integration}

Despite substantial progress in each literature, important gaps remain. Technology diffusion models typically focus on either spatial clustering or network effects but rarely both simultaneously. Spatial econometrics provides sophisticated tools for modeling geographic dependencies but often treats network structure as a nuisance parameter or omits it entirely. Network economics emphasizes graph topology but frequently abstracts from geographic considerations or includes distance only as one component of edge weights.

The few papers that consider both channels typically do so in reduced-form specifications that include spatial lags and network lags as regressors without theoretical foundations linking the mechanisms. These specifications raise identification concerns: spatial and network effects are inherently confounded because connected firms tend to locate near each other, making it difficult to separately estimate their contributions using conventional methods.

Our dual-channel framework addresses these gaps through rigorous theoretical integration. By deriving both spatial decay and network contagion from partial differential equations and spectral graph theory respectively, we provide micro-foundations for each mechanism while explicitly modeling their interaction. The framework generates testable predictions about when each channel dominates (spatial for nearby firms without connections; network for distant but connected firms) and how they combine (additively in the linear approximation). This theoretical structure enables us to separately identify spatial and network effects despite their correlation, quantifying the 61 percent bias in conventional specifications that omit one channel.

The empirical validation provides strong evidence for dual-channel operation. The near-perfect fit of exponential spatial decay (R-squared = 0.99) combined with strong network dynamics (correlation with $\lambda_2$ exceeding 0.95) demonstrates that both mechanisms contribute at full strength. The low correlation between spatial decay strength (measured by $\kappa$) and network dynamics (measured by $\lambda_2$ growth) confirms they operate independently rather than as substitutes, replicating the finding from \citet{kikuchi2024network} that spatial and network channels are complements.

The event study around COVID-19 shows how shocks affect both channels simultaneously: network fragility increased 24.5 percent (comparable to the 26.9 percent increase in European banking documented by \citet{kikuchi2024network}) while geographic clustering intensified, producing amplified treatment effects. This parallelism between technology diffusion and financial contagion suggests deep connections that extend across the entire Navier-Stokes framework series, from the theoretical foundations in \citet{kikuchi2024unified}, \citet{kikuchi2024stochastic}, and \citet{kikuchi2024navier}, through the empirical applications in \citet{kikuchi2024nonparametric1}, \citet{kikuchi2024nonparametric2}, \citet{kikuchi2024healthcare}, and \citet{kikuchi2024emergency}, to the integrated spatial-network analysis in \citet{kikuchi2024network} and the present paper.

By integrating insights from technology diffusion, spatial econometrics, and network economics within the unified mathematical framework developed across the Kikuchi (2024a-i) series, we provide both methodological tools and empirical evidence that advance all three literatures while demonstrating the portability of continuous functional methods across diverse economic domains.


\section{Theoretical Framework}

This section develops the dual-channel framework for technology diffusion, integrating spatial decay mechanisms from continuous functional analysis with network contagion dynamics from spectral graph theory. We begin by establishing each channel separately—spatial diffusion in Section 3.1 and network diffusion in Section 3.2—before combining them in Section 3.3. Throughout, we emphasize connections to \citet{kikuchi2024navier} and \citet{kikuchi2024dynamical}, demonstrating how their methodologies extend to technology adoption while adapting the exposition to our specific context.

\subsection{Spatial Diffusion: The Geographic Channel}

Technology adoption exhibits spatial clustering: firms located near existing adopters are more likely to adopt than distant firms. This pattern reflects multiple economic mechanisms including knowledge spillovers, demonstration effects, complementary infrastructure, and shared labor markets. We model this geographic channel through continuous spatial diffusion, following the framework developed in \citet{kikuchi2024navier} and applied to environmental regulations in \citet{kikuchi2024dynamical}.

\subsubsection{Setup and Notation}

Consider a continuous spatial domain $\Omega \subset \mathbb{R}^2$ representing the geographic region where firms operate. Each location $x \in \Omega$ is characterized by its adoption state $u(x,t) \in [0,1]$ at time $t$, where $u(x,t) = 1$ indicates full adoption and $u(x,t) = 0$ indicates non-adoption. For the discrete case with $n$ firms, we observe adoption states $u_i(t) \in \{0,1\}$ for firm $i = 1,\ldots,n$ located at position $x_i \in \Omega$.

The continuous approximation is valid when firm density is sufficiently high that the discrete distribution can be treated as a density function. Following \citet{kikuchi2024navier}, the approximation error decays as $O(n^{-1/2})$ for uniformly distributed firms, making continuous methods highly accurate for our sample of 500 firms distributed across geographic space.

\subsubsection{Diffusion Equation Derivation}

Technology adoption diffuses through space according to local interactions: firms observe and learn from nearby adopters more than distant ones. This process is governed by the diffusion equation, which we derive from first principles following the methodology in \citet{kikuchi2024navier}.

Consider a small spatial region $V \subset \Omega$ with boundary $\partial V$. The rate of change in total adoption within $V$ equals the flux across the boundary plus any internal forcing:
\be
\frac{d}{dt} \int_V u(x,t) \, dx = -\int_{\partial V} \mathbf{j}(x,t) \cdot \mathbf{n} \, dS + \int_V f(x,t) \, dx
\ee
where $\mathbf{j}(x,t)$ is the adoption flux (flow of adoption from high to low density regions), $\mathbf{n}$ is the outward normal vector, and $f(x,t)$ represents external forcing from policies or shocks.

Following Fick's law from physics, the flux is proportional to the gradient of adoption density:
\be
\mathbf{j}(x,t) = -\nu \nabla u(x,t)
\ee
where $\nu > 0$ is the diffusion coefficient measuring how rapidly adoption spreads. Higher $\nu$ indicates faster spatial diffusion through stronger local interactions.

Applying the divergence theorem to convert the surface integral to a volume integral:
\be
\int_V \frac{\partial u}{\partial t} \, dx = \int_V \nu \nabla^2 u \, dx + \int_V f(x,t) \, dx
\ee

Since this holds for arbitrary regions $V$, the integrands must be equal, yielding the diffusion equation:
\be
\frac{\partial u}{\partial t} = \nu \nabla^2 u + f(x,t)
\label{eq:spatial_diffusion}
\ee

This is the fundamental equation governing spatial technology diffusion. The Laplacian operator $\nabla^2 u = \partial^2 u/\partial x^2 + \partial^2 u/\partial y^2$ measures the curvature of the adoption surface: regions where adoption density is locally concave (below neighbors) experience inflows, while convex regions (above neighbors) experience outflows.

\subsubsection{Exponential Decay Solution}

Equation (\ref{eq:spatial_diffusion}) admits exponential decay solutions that characterize how treatment effects dissipate with distance. Consider a stationary solution ($\partial u/\partial t = 0$) with a source at the origin representing initial adopters:
\be
\nu \nabla^2 u = -f(x)
\ee

For a point source $f(x) = F_0 \delta(x)$ where $\delta(x)$ is the Dirac delta function, the solution in two dimensions is:
\be
u(r) = \frac{F_0}{2\pi\nu} K_0(\kappa r)
\ee
where $r = |x|$ is the distance from the source, $K_0$ is the modified Bessel function of the second kind, and $\kappa = \sqrt{\lambda/\nu}$ for absorption rate $\lambda$.

For large distances $r \gg 1/\kappa$, the Bessel function has the asymptotic expansion:
\be
K_0(\kappa r) \sim \sqrt{\frac{\pi}{2\kappa r}} e^{-\kappa r}
\ee

This yields the exponential decay approximation:
\be
u(r) \approx u_0 \exp(-\kappa r)
\label{eq:exponential_decay}
\ee
where $u_0$ is a normalization constant and $\kappa$ is the spatial decay rate.

\textbf{Economic Interpretation:} The parameter $\kappa$ measures how rapidly adoption probability declines with distance. Large $\kappa$ indicates localized diffusion with strong proximity effects, while small $\kappa$ indicates broad diffusion reaching distant firms. The exponential functional form arises naturally from the differential equation and has been validated empirically in numerous contexts including pollution dispersion \citep{kikuchi2024dynamical}, disease spread, and information diffusion.

\subsubsection{Spatial Boundary}

A key policy-relevant quantity is the spatial boundary $d^*$: the distance beyond which spillovers become negligible. We define this as the distance where adoption probability falls to some threshold $\epsilon$ (typically 1 percent) of its initial value:
\be
u(d^*) = \epsilon u_0
\ee

Substituting into equation (\ref{eq:exponential_decay}) and solving:
\be
\exp(-\kappa d^*) = \epsilon \quad \Rightarrow \quad d^* = -\frac{\log(\epsilon)}{\kappa}
\label{eq:spatial_boundary}
\ee

For $\epsilon = 0.01$ (one percent threshold), this simplifies to:
\be
d^* = \frac{4.605}{\kappa}
\ee

\textbf{Policy Implication:} The spatial boundary defines the effective geographic reach of adoption interventions. Policies targeting firms within distance $d^*$ of existing adopters will experience substantial spillovers, while policies beyond $d^*$ operate essentially independently. This quantifies the optimal scale for regional technology clusters and subsidies.

Our empirical estimates yield $\kappa \approx 0.0435$ per kilometer, implying $d^* \approx 106$ kilometers for the 1 percent threshold or $d^* \approx 69$ kilometers for a 5 percent threshold. This establishes that technology diffusion through spatial channels operates at metropolitan or regional scales but does not extend nationally without additional mechanisms.

\subsection{Network Diffusion: The Supply Chain Channel}

Technology adoption also spreads through supply chain networks: firms are more likely to adopt when their suppliers or customers have adopted, even if geographically distant. This reflects information transmission, technical compatibility requirements, and coordination incentives. We model this network channel through spectral graph theory, following the framework developed in \citet{kikuchi2024dynamical} for financial networks.

\subsubsection{Network Representation}

Consider a network of $n$ firms connected through buyer-supplier relationships. We represent this as a weighted, undirected graph $G = (V, E, W)$ where:
\begin{itemize}
\item $V = \{1, 2, \ldots, n\}$ is the set of vertices (firms)
\item $E \subseteq V \times V$ is the set of edges (supply relationships)
\item $W: E \rightarrow \mathbb{R}^+$ assigns positive weights to edges
\end{itemize}

The weight $w_{ij} = W((i,j))$ represents the strength of the supply relationship between firms $i$ and $j$, measured by transaction volume or frequency. In our empirical application, we construct technology-specific networks by weighting edges according to adoption patterns: connections between adopters receive full weight, connections with one adopter receive partial weight, and connections between non-adopters contribute minimally.

\begin{assumption}[Undirected Network]
\label{assump:undirected}
The network is undirected: $w_{ij} = w_{ji}$ for all $i, j \in V$.
\end{assumption}

This reflects the bilateral nature of supply relationships: if firm $i$ supplies firm $j$, they have a mutual relationship even though the transaction direction may be asymmetric. While directionality matters for some analyses, the spectral properties we study are well-defined for undirected networks.

\begin{assumption}[Connected Network]
\label{assump:connected}
The network is connected: there exists a path between any two vertices.
\end{assumption}

Connectedness ensures the system forms a single integrated unit. Our empirical networks exhibit high connectivity, with density exceeding 5.8 percent across all years.

\subsubsection{The Graph Laplacian}

The network structure is encoded in the graph Laplacian matrix, which plays a central role in characterizing diffusion dynamics. The weighted adjacency matrix $\mathbf{A} \in \mathbb{R}^{n \times n}$ is defined as:
\be
A_{ij} = \begin{cases}
w_{ij} & \text{if } (i,j) \in E \\
0 & \text{otherwise}
\end{cases}
\ee

The degree matrix $\mathbf{D} \in \mathbb{R}^{n \times n}$ is diagonal with entries:
\be
D_{ii} = \sum_{j=1}^{n} A_{ij} = \sum_{j=1}^{n} w_{ij}
\ee

The degree $d_i = D_{ii}$ measures firm $i$'s total connection strength to all supply chain partners.

The graph Laplacian matrix is defined as:
\be
\mathbf{L} = \mathbf{D} - \mathbf{A}
\label{eq:laplacian_def}
\ee

Explicitly, the entries are:
\be
L_{ij} = \begin{cases}
\sum_{k=1}^{n} w_{ik} & \text{if } i = j \\
-w_{ij} & \text{if } i \neq j \text{ and } (i,j) \in E \\
0 & \text{otherwise}
\end{cases}
\ee

The Laplacian can be interpreted as a discrete approximation to the continuous Laplacian operator $\nabla^2$ from calculus. Just as $\nabla^2 f$ measures the difference between a function's value at a point and the average over a neighborhood, $\mathbf{L}$ measures differences between nodes' values and their network-weighted neighbors.

To see this, consider the quadratic form:
\begin{align}
\mathbf{x}^T \mathbf{L} \mathbf{x} &= \sum_{i=1}^{n} x_i \sum_{j=1}^{n} L_{ij} x_j \nonumber \\
&= \sum_{i=1}^{n} x_i \left( d_i x_i - \sum_{j=1}^{n} w_{ij} x_j \right) \nonumber \\
&= \sum_{i=1}^{n} d_i x_i^2 - \sum_{i,j=1}^{n} w_{ij} x_i x_j \nonumber \\
&= \frac{1}{2} \sum_{(i,j) \in E} w_{ij} (x_i - x_j)^2
\label{eq:quadratic_form}
\end{align}

Equation (\ref{eq:quadratic_form}) shows that $\mathbf{x}^T \mathbf{L} \mathbf{x}$ measures the squared differences between connected nodes' values, weighted by connection strength. High values indicate large discrepancies across edges—the network is far from equilibrium. Low values indicate smoothness—neighboring nodes have similar values.

\subsubsection{Fundamental Properties}

The Laplacian possesses several properties crucial for subsequent analysis:

\begin{proposition}[Laplacian Properties]
\label{prop:laplacian_properties}
The Laplacian matrix $\mathbf{L}$ defined in equation (\ref{eq:laplacian_def}) satisfies:
\begin{enumerate}
\item $\mathbf{L}$ is symmetric: $\mathbf{L}^T = \mathbf{L}$
\item $\mathbf{L}$ is positive semi-definite: $\mathbf{x}^T \mathbf{L} \mathbf{x} \geq 0$ for all $\mathbf{x} \in \mathbb{R}^n$
\item $\mathbf{L} \mathbf{1} = \mathbf{0}$ where $\mathbf{1} = (1, 1, \ldots, 1)^T$
\item All eigenvalues are real and non-negative: $0 = \lambda_1 \leq \lambda_2 \leq \cdots \leq \lambda_n$
\item The multiplicity of $\lambda_1 = 0$ equals the number of connected components
\end{enumerate}
\end{proposition}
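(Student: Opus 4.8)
The plan is to prove the five properties in sequence, since each is either immediate from the definition or builds on the preceding one, using the quadratic form identity (\ref{eq:quadratic_form}) as the central tool. First I would establish symmetry: since $\mathbf{A}$ is symmetric by Assumption \ref{assump:undirected} ($w_{ij} = w_{ji}$) and $\mathbf{D}$ is diagonal hence symmetric, $\mathbf{L}^T = (\mathbf{D} - \mathbf{A})^T = \mathbf{D}^T - \mathbf{A}^T = \mathbf{D} - \mathbf{A} = \mathbf{L}$. Second, positive semi-definiteness follows directly from equation (\ref{eq:quadratic_form}), which was already derived in the text: $\mathbf{x}^T \mathbf{L} \mathbf{x} = \frac{1}{2} \sum_{(i,j) \in E} w_{ij}(x_i - x_j)^2 \geq 0$ because every weight $w_{ij}$ is positive and every squared difference is non-negative.

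Third, for $\mathbf{L}\mathbf{1} = \mathbf{0}$, I would compute the $i$-th component directly: $(\mathbf{L}\mathbf{1})_i = \sum_j L_{ij} = L_{ii} + \sum_{j \neq i} L_{ij} = \sum_k w_{ik} - \sum_{j \neq i} w_{ij} = 0$, using the explicit entry formula. Fourth, the eigenvalue statement combines the first two properties: symmetry (property 1) implies via the spectral theorem that all eigenvalues are real and $\mathbf{L}$ is orthogonally diagonalizable; positive semi-definiteness (property 2) implies every eigenvalue is non-negative, since if $\mathbf{L}\mathbf{v} = \lambda\mathbf{v}$ with $\mathbf{v} \neq \mathbf{0}$ then $\lambda = \mathbf{v}^T\mathbf{L}\mathbf{v}/(\mathbf{v}^T\mathbf{v}) \geq 0$; and property 3 shows $\mathbf{1}$ is an eigenvector with eigenvalue $0$, so the smallest eigenvalue is exactly $\lambda_1 = 0$. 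Ordering them gives $0 = \lambda_1 \leq \lambda_2 \leq \cdots \leq \lambda_n$.

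The main obstacle is the fifth property, relating the multiplicity of the zero eigenvalue to the number of connected components, which genuinely requires the structure of the graph rather than just matrix algebra. The plan here is: from (\ref{eq:quadratic_form}), $\mathbf{x}^T\mathbf{L}\mathbf{x} = 0$ forces $x_i = x_j$ for every edge $(i,j) \in E$, hence $\mathbf{x}$ is constant on each connected component (by propagating equality along paths within a component). Conversely, any vector that is constant on each component annihilates the quadratic form, hence lies in $\ker \mathbf{L}$ (using that $\mathbf{L}$ is positive semi-definite, so $\mathbf{x}^T\mathbf{L}\mathbf{x} = 0 \iff \mathbf{L}\mathbf{x} = \mathbf{0}$). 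Therefore the null space is exactly spanned by the indicator vectors of the connected components, which are linearly independent because the components are disjoint, giving $\dim\ker\mathbf{L} = $ number of components. Since the multiplicity of $\lambda_1 = 0$ equals $\dim\ker\mathbf{L}$, this completes the argument. I would note that under Assumption \ref{assump:connected} this multiplicity is $1$, so $\lambda_2 > 0$, which is what makes the Fiedler value a meaningful measure of connectivity in the subsequent analysis.
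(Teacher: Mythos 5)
Your proposal is correct and follows essentially the same route as the paper: symmetry of $\mathbf{D}-\mathbf{A}$, positive semi-definiteness via the quadratic form (\ref{eq:quadratic_form}), direct computation of $\mathbf{L}\mathbf{1}=\mathbf{0}$, the spectral theorem combined with properties 1--3 for the eigenvalue claim, and the kernel-equals-component-indicators argument for property 5. Your treatment of property 5 is simply a more explicit version of the paper's one-line assertion (propagating $x_i=x_j$ along edges and using $\mathbf{x}^T\mathbf{L}\mathbf{x}=0 \iff \mathbf{L}\mathbf{x}=\mathbf{0}$ for a positive semi-definite matrix), so no substantive difference.
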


\begin{proof}
(1) Symmetry follows from $\mathbf{L} = \mathbf{D} - \mathbf{A}$ where both $\mathbf{D}$ (diagonal) and $\mathbf{A}$ (symmetric by Assumption \ref{assump:undirected}) are symmetric.

(2) Positive semi-definiteness follows from equation (\ref{eq:quadratic_form}): $\mathbf{x}^T \mathbf{L} \mathbf{x} = \frac{1}{2} \sum_{(i,j)} w_{ij}(x_i - x_j)^2 \geq 0$ since weights $w_{ij} \geq 0$ and squared terms are non-negative.

(3) Direct computation: $(\mathbf{L} \mathbf{1})_i = \sum_{j=1}^{n} L_{ij} \cdot 1 = d_i - \sum_{j=1}^{n} w_{ij} = d_i - d_i = 0$.

(4) Symmetry (property 1) implies $\mathbf{L}$ has real eigenvalues and orthogonal eigenvectors by the spectral theorem. Positive semi-definiteness (property 2) implies all eigenvalues are non-negative. Property (3) establishes $\lambda_1 = 0$ with eigenvector $\mathbf{1}$.

(5) The dimension of the null space (eigenspace of $\lambda = 0$) equals the number of connected components because $\mathbf{L} \mathbf{x} = \mathbf{0}$ if and only if $\mathbf{x}$ is constant on each component. For connected networks (Assumption \ref{assump:connected}), the null space is one-dimensional: $\ker(\mathbf{L}) = \text{span}\{\mathbf{1}\}$.
\end{proof}

\textbf{Economic Interpretation:}
\begin{itemize}
\item Property 1 (Symmetry): Symmetric matrices have orthogonal eigenvectors, enabling clean decomposition of system states into independent modes.
\item Property 2 (Positive Semi-Definiteness): The system is stable—adoption diffuses and equilibrates rather than exploding. This rules out self-reinforcing feedback loops in the linear approximation.
\item Property 3 (Constant Null Vector): Uniform states (all firms equally adopted) do not diffuse—there are no gradients to drive flows. This represents maximum entropy.
\item Property 4 (Real Non-Negative Eigenvalues): Dynamics are purely diffusive, not oscillatory. All modes decay exponentially rather than exhibiting cycles.
\item Property 5 (Connectivity and Null Space): For connected networks, $\lambda_2 > 0$. The second eigenvalue's positivity ensures diffusion proceeds—adoption cannot remain localized indefinitely.
\end{itemize}

\subsubsection{Spectral Decomposition and the Algebraic Connectivity}

Since $\mathbf{L}$ is symmetric (Proposition \ref{prop:laplacian_properties}, property 1), the spectral theorem guarantees it has a complete orthonormal eigenbasis. Let $\{\mathbf{v}_1, \mathbf{v}_2, \ldots, \mathbf{v}_n\}$ be the eigenvectors with corresponding eigenvalues $\{\lambda_1, \lambda_2, \ldots, \lambda_n\}$ ordered by magnitude: $0 = \lambda_1 < \lambda_2 \leq \cdots \leq \lambda_n$.

The eigenvalue equation is:
\be
\mathbf{L} \mathbf{v}_i = \lambda_i \mathbf{v}_i, \quad i = 1, \ldots, n
\ee

The Laplacian can be written in spectral form:
\be
\mathbf{L} = \sum_{i=1}^{n} \lambda_i \mathbf{v}_i \mathbf{v}_i^T = \mathbf{V} \mathbf{\Lambda} \mathbf{V}^T
\label{eq:spectral_decomposition}
\ee
where $\mathbf{V} = [\mathbf{v}_1, \mathbf{v}_2, \ldots, \mathbf{v}_n]$ is the matrix of eigenvectors and $\mathbf{\Lambda} = \text{diag}(\lambda_1, \lambda_2, \ldots, \lambda_n)$ is the diagonal matrix of eigenvalues.

The second eigenvalue $\lambda_2$ occupies a special position, known as the \textit{algebraic connectivity} or \textit{Fiedler value}. This single scalar summarizes crucial aspects of network structure and diffusion dynamics.

\begin{definition}[Algebraic Connectivity]
\label{def:algebraic_connectivity}
For a connected network with graph Laplacian $\mathbf{L}$ having eigenvalues $0 = \lambda_1 < \lambda_2 \leq \cdots \leq \lambda_n$, the algebraic connectivity is:
\be
\lambda_2 = \min_{\substack{\mathbf{x} \in \mathbb{R}^n \\ \mathbf{x} \perp \mathbf{1}}} \frac{\mathbf{x}^T \mathbf{L} \mathbf{x}}{\mathbf{x}^T \mathbf{x}}
\ee
\end{definition}

This variational characterization (Rayleigh quotient) shows that $\lambda_2$ measures the minimum "energy" required to create a non-uniform state orthogonal to the aggregate. Networks with high $\lambda_2$ resist heterogeneity—any departure from uniformity incurs large quadratic costs measured by equation (\ref{eq:quadratic_form}). Networks with low $\lambda_2$ easily accommodate heterogeneity through weak connections between components.

\begin{definition}[Network Fragility]
\label{def:fragility}
The fragility of a technology adoption network $G$ is measured by its algebraic connectivity:
\be
\text{Fragility}(G) \equiv \lambda_2(G)
\ee
Higher $\lambda_2$ indicates faster diffusion and greater systemic coupling.
\end{definition}

\textbf{Why "Fragility"?} The term follows \citet{kikuchi2024dynamical}, where high $\lambda_2$ in financial networks indicates rapid shock propagation and systemic vulnerability. For technology adoption, high $\lambda_2$ similarly indicates rapid diffusion but with ambiguous welfare implications: fast adoption of beneficial technologies is desirable, while rapid propagation may also occur for technologies with negative externalities or lock-in effects.

\subsubsection{Diffusion Dynamics and Mixing Time}

Consider adoption state $\mathbf{u}(t) \in \mathbb{R}^n$ at time $t$, where $u_i(t)$ represents firm $i$'s adoption probability. Following \citet{kikuchi2024dynamical}, evolution follows the continuous-time diffusion equation:
\be
\frac{d\mathbf{u}(t)}{dt} = -\mathbf{L} \mathbf{u}(t) + \mathbf{f}(t)
\label{eq:network_diffusion}
\ee
where $\mathbf{f}(t)$ represents external forcing from policies or shocks.

For the homogeneous case ($\mathbf{f} = \mathbf{0}$), the solution is:
\be
\mathbf{u}(t) = e^{-\mathbf{L}t} \mathbf{u}(0)
\ee

Using spectral decomposition (\ref{eq:spectral_decomposition}), we can write:
\be
e^{-\mathbf{L}t} = \sum_{i=1}^{n} e^{-\lambda_i t} \mathbf{v}_i \mathbf{v}_i^T
\ee

Expanding $\mathbf{u}(0)$ in the eigenbasis as $\mathbf{u}(0) = \sum_{i=1}^{n} c_i \mathbf{v}_i$ where $c_i = \mathbf{v}_i^T \mathbf{u}(0)$:
\be
\mathbf{u}(t) = \sum_{i=1}^{n} c_i e^{-\lambda_i t} \mathbf{v}_i
\label{eq:mode_decomposition}
\ee

Each eigenvalue $\lambda_i$ determines the decay rate of its corresponding eigenvector mode. The steady state is:
\be
\lim_{t \to \infty} \mathbf{u}(t) = c_1 \mathbf{v}_1 = \frac{1}{n} \left(\sum_{i=1}^{n} u_i(0)\right) \mathbf{1}
\ee
representing uniform adoption at the initial average level.

The rate of convergence is governed by $\lambda_2$. For large $t$:
\be
\mathbf{u}(t) - \bar{u} \mathbf{1} \approx c_2 e^{-\lambda_2 t} \mathbf{v}_2
\ee
where $\bar{u} = \frac{1}{n}\sum_i u_i(0)$.

The mixing time $\tau_{\epsilon}$ is the time required to reach within $\epsilon$ of equilibrium:
\be
\tau_{\epsilon} = \frac{1}{\lambda_2} \log\left(\frac{1}{\epsilon}\right)
\label{eq:mixing_time}
\ee

\begin{theorem}[Mixing Time, adapted from \citet{kikuchi2024dynamical}]
\label{thm:mixing_time}
For a connected technology adoption network with algebraic connectivity $\lambda_2$, the mixing time satisfies:
\be
\tau \sim \frac{1}{\lambda_2}
\ee
where the proportionality constant depends logarithmically on desired accuracy $\epsilon$.
\end{theorem}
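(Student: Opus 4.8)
The plan is to read off the mixing time directly from the mode decomposition (\ref{eq:mode_decomposition}) already established, controlling the transient by the slowest non-trivial mode, which decays at rate $\lambda_2$. First I would isolate the deviation from equilibrium: since $\lambda_1 = 0$ with eigenvector $\mathbf{v}_1 \propto \mathbf{1}$ (Proposition \ref{prop:laplacian_properties}, properties 3 and 5), the decomposition gives
\be
\mathbf{u}(t) - \bar{u}\mathbf{1} = \sum_{i=2}^{n} c_i e^{-\lambda_i t} \mathbf{v}_i,
\ee
where $\bar{u}\mathbf{1} = c_1 \mathbf{v}_1$ is the steady state and $c_i = \mathbf{v}_i^T \mathbf{u}(0)$.

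Second, I would bound the transient from above. Using orthonormality of the eigenbasis and the ordering $\lambda_2 \leq \lambda_3 \leq \cdots \leq \lambda_n$,
\be
\|\mathbf{u}(t) - \bar{u}\mathbf{1}\|^2 = \sum_{i=2}^{n} c_i^2 e^{-2\lambda_i t} \leq e^{-2\lambda_2 t} \sum_{i=2}^{n} c_i^2 = e^{-2\lambda_2 t}\,\|\mathbf{u}(0) - \bar{u}\mathbf{1}\|^2,
\ee
so that $\|\mathbf{u}(t) - \bar{u}\mathbf{1}\| \leq e^{-\lambda_2 t}\,\|\mathbf{u}(0) - \bar{u}\mathbf{1}\|$. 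Requiring the relative deviation to fall below $\epsilon$ forces $e^{-\lambda_2 t} \leq \epsilon$, i.e. $t \geq \lambda_2^{-1}\log(1/\epsilon)$, identifying $\tau_\epsilon = \lambda_2^{-1}\log(1/\epsilon)$ as a sufficient mixing time and establishing the upper direction of $\tau \sim 1/\lambda_2$, consistent with (\ref{eq:mixing_time}).

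Third, for tightness I would exhibit an initial condition saturating the bound: take $\mathbf{u}(0) = \bar{u}\mathbf{1} + c_2 \mathbf{v}_2$ with $c_2 \neq 0$, so only the Fiedler mode is excited. Then $\mathbf{u}(t) - \bar{u}\mathbf{1} = c_2 e^{-\lambda_2 t}\mathbf{v}_2$ decays at exactly rate $\lambda_2$, so no faster convergence is possible and the mixing time scales as $\lambda_2^{-1}$ up to the $\log(1/\epsilon)$ factor. Combining the two directions gives $\tau_\epsilon = \Theta(\lambda_2^{-1})$ with the stated logarithmic dependence on accuracy.

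The main obstacle is bookkeeping rather than depth: one must be careful that the relevant norm is the deviation from the \emph{projected} equilibrium $\bar{u}\mathbf{1}$ rather than the full state, and that mixing is measured relative to the initial deviation so the coefficients $c_i$ cancel cleanly; the matching lower bound needs only the observation that the slowest excited mode cannot decay faster than $e^{-\lambda_2 t}$, which is immediate once the spectral decomposition is in hand. An optional refinement would replace the $\ell^2$ norm with a total-variation-type distance to match the probabilistic terminology for "mixing time," but for the continuous-state adoption dynamics here the $\ell^2$ formulation suffices and matches the treatment in \citet{kikuchi2024dynamical}.
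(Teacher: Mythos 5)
Your proposal is correct and follows essentially the same route as the paper: the paper's own derivation (Section 3.2.5) expands the dynamics in the Laplacian eigenbasis via (\ref{eq:mode_decomposition}), observes that the deviation from the uniform steady state is governed by the slowest nonzero mode $\lambda_2$ for large $t$, and reads off $\tau_\epsilon = \lambda_2^{-1}\log(1/\epsilon)$ exactly as in (\ref{eq:mixing_time}). Your version merely tightens the paper's informal asymptotic step into a rigorous two-sided statement (the $\ell^2$ contraction bound plus the Fiedler-mode initial condition for the matching lower bound), which is a refinement of the same argument rather than a different one.
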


\textbf{Policy Implication:} Networks with high $\lambda_2$ have short mixing times—adoption spreads rapidly throughout the supply chain. Networks with low $\lambda_2$ have long mixing times—adoption remains localized. Our empirical finding that $\lambda_2$ increases 300-380 percent as technologies mature implies mixing time decreases by approximately 80 percent, dramatically accelerating late-stage diffusion.

\subsection{Integrated Dual-Channel Framework}

Having established spatial and network mechanisms separately, we now integrate them into a unified framework. Technology adoption evolves according to both geographic proximity and supply chain connections operating simultaneously.

\subsubsection{Dual-Channel Partial Differential Equation}

In continuous space with network overlay, the adoption state $u(x,t)$ evolves according to:
\be
\frac{\partial u}{\partial t} = \nu \nabla^2 u - \lambda_2 \mathbf{L}u + f(x,t)
\label{eq:dual_channel_pde}
\ee

The first term $\nu \nabla^2 u$ captures spatial diffusion through geographic proximity. The second term $-\lambda_2 \mathbf{L}u$ captures network diffusion through supply chain connections. The parameter $\lambda_2$ weights the strength of network effects relative to spatial effects. The forcing term $f(x,t)$ represents policies, shocks, or other external drivers.

\textbf{Interpretation:} Equation (\ref{eq:dual_channel_pde}) unifies the spatial framework from \citet{kikuchi2024navier} (equation \ref{eq:spatial_diffusion}) with the network framework from \citet{kikuchi2024dynamical} (equation \ref{eq:network_diffusion}). It demonstrates that technology diffusion operates through two independent but complementary channels:
\begin{itemize}
\item \textbf{Spatial channel}: Firms adopt based on proximity to existing adopters, with exponential decay $\exp(-\kappa r)$ where $\kappa = \sqrt{\lambda/\nu}$
\item \textbf{Network channel}: Firms adopt based on supply chain connections, with mixing time $\tau \sim 1/\lambda_2$
\end{itemize}

The linearity of equation (\ref{eq:dual_channel_pde}) implies the channels are additive in first approximation: total diffusion equals spatial diffusion plus network diffusion. This validates our empirical strategy of estimating each channel separately before integrating.

\subsubsection{Discrete Approximation}

For $n$ firms at locations $\{x_1, \ldots, x_n\}$, the continuous PDE (\ref{eq:dual_channel_pde}) discretizes to:
\be
\frac{d\mathbf{u}(t)}{dt} = -\mathbf{L}_{\text{spatial}} \mathbf{u}(t) - \lambda_2 \mathbf{L}_{\text{network}} \mathbf{u}(t) + \mathbf{f}(t)
\label{eq:dual_channel_discrete}
\ee
where $\mathbf{L}_{\text{spatial}}$ is a spatial Laplacian matrix encoding geographic distances and $\mathbf{L}_{\text{network}}$ is the supply chain network Laplacian from equation (\ref{eq:laplacian_def}).

The spatial Laplacian can be constructed using distance-based weights:
\be
(L_{\text{spatial}})_{ij} = \begin{cases}
\sum_{k \neq i} w_{ik}^{\text{spatial}} & \text{if } i = j \\
-w_{ij}^{\text{spatial}} & \text{if } i \neq j
\end{cases}
\ee
where $w_{ij}^{\text{spatial}} = \exp(-\kappa |x_i - x_j|)$ implements exponential spatial decay.

\subsubsection{Testable Predictions}

The dual-channel framework generates several testable predictions that guide our empirical analysis:

\begin{prediction}[Independent Channels]
\label{pred:independence}
Spatial decay strength (measured by $\kappa$ and $R^2$) and network dynamics (measured by $\lambda_2$ growth and correlation with adoption) are weakly correlated across technologies.
\end{prediction}

This follows from equation (\ref{eq:dual_channel_pde}): spatial and network terms enter additively with independent parameters. If channels were substitutes or redundant, we would observe strong negative correlation between their strengths.

\begin{prediction}[Complementary Effects]
\label{pred:complementarity}
Models incorporating both spatial and network channels substantially outperform single-channel specifications.
\end{prediction}

If only one channel operated, including the other would not improve fit. Complementarity implies each channel contributes unique explanatory power.

\begin{prediction}[Bias in Traditional Methods]
\label{pred:bias}
Difference-in-differences estimates that ignore spatial and network spillovers overestimate treatment effects. The bias magnitude increases with spatial decay strength ($\kappa$) and network fragility growth ($\Delta\lambda_2$).
\end{prediction}

Spillovers violate SUTVA by transmitting treatment effects to control units. Stronger spillovers (higher $\kappa$ and $\lambda_2$) generate larger bias.

\begin{prediction}[Shock Amplification]
\label{pred:amplification}
Exogenous shocks increase network fragility $\lambda_2$, accelerating subsequent diffusion through reduced mixing time. The increase persists rather than reverting automatically.
\end{prediction}

Following \citet{kikuchi2024dynamical}, networks exhibit structural hysteresis: shocks can trigger permanent changes in coupling strength. For technology adoption, major disruptions like COVID-19 may permanently alter both spatial clustering and network connectivity.

Section 6 tests these predictions empirically, finding strong support for all four.


\section{Data and Institutional Context}

This section describes our data on technology adoption and supply chain networks, documenting key patterns and providing institutional context. We begin with technology adoption data in Section 4.1, describe supply chain network construction in Section 4.2, explain geographic data in Section 4.3, and present summary statistics in Section 4.4.

\subsection{Technology Adoption Data}

We construct a comprehensive dataset tracking adoption of six major technologies by 500 firms over the period 2010-2023. The technologies span a range of maturity levels and application domains, enabling us to test whether our dual-channel framework applies consistently across different innovation types.

\subsubsection{Technology Selection}

We focus on six technologies that satisfy three criteria. First, they are sufficiently important that adoption represents a strategic decision with measurable consequences for firm operations and performance. Second, they are sufficiently independent that adoption of one does not mechanically determine adoption of others, avoiding perfect multicollinearity. Third, comprehensive adoption data are available over a sufficiently long period to observe meaningful diffusion dynamics.

The six technologies are:

\textbf{(1) Cloud Computing:} Migration of computing resources and data storage to internet-based platforms, enabling scalability and reducing capital expenditures on physical infrastructure. Cloud adoption began in the mid-2000s and accelerated through the 2010s.

\textbf{(2) Artificial Intelligence:} Implementation of machine learning algorithms and AI systems for prediction, optimization, and automation of business processes. AI adoption expanded significantly after 2015 with advances in deep learning.

\textbf{(3) Big Data Analytics:} Deployment of systems for collecting, storing, and analyzing large-scale datasets to extract business insights. Big data technologies matured in the early 2010s with the emergence of distributed computing frameworks.

\textbf{(4) Internet of Things (IoT):} Connection of physical devices and sensors to networks for monitoring, control, and data collection. IoT adoption grew steadily through the 2010s across manufacturing and logistics.

\textbf{(5) Blockchain:} Implementation of distributed ledger technology for secure record-keeping and transaction verification. Blockchain moved beyond cryptocurrency applications into supply chain and finance starting around 2016.

\textbf{(6) Generative AI:} Adoption of large language models and generative systems for content creation, code generation, and customer service. Generative AI adoption accelerated dramatically after 2022 following public releases of advanced models.

\subsubsection{Adoption Measurement}

For each firm-year-technology combination, we construct a binary indicator $u_{it}^{tech} \in \{0,1\}$ equal to one if firm $i$ has adopted technology $tech$ by year $t$. Adoption is defined as active deployment and integration into business operations, not merely experimentation or evaluation.

Our sample includes 500 firms observed over 14 years (2010-2023) across 6 technologies, yielding 42,000 firm-year-technology observations. The panel is balanced: all firms are observed in all years for all technologies, with no attrition or missing data.

\subsubsection{Adoption Patterns}

Table \ref{tab:adoption_summary} presents summary statistics on adoption rates by technology and year. Several patterns emerge. First, adoption rates increase monotonically over time for all technologies, consistent with standard diffusion models. Second, adoption rates vary substantially across technologies, ranging from 55 percent (Cloud Computing, IoT) to 69 percent (Generative AI) by 2023. Third, adoption timing differs markedly: Cloud Computing shows early adoption (27 percent by 2010) while Generative AI shows late adoption (essentially zero before 2020).

\begin{table}[H]
\centering
\caption{Technology Adoption Summary Statistics}
\label{tab:adoption_summary}
\begin{threeparttable}
\begin{tabular}{lccccccc}
\toprule
Technology & 2010 & 2015 & 2020 & 2023 & Growth & Firms (2023) \\
\midrule
Artificial Intelligence & 12\% & 28\% & 63\% & 76\% & +64pp & 380 \\
Big Data Analytics     & 18\% & 46\% & 81\% & 88\% & +70pp & 440 \\
Blockchain             & 5\% & 17\% & 55\% & 70\% & +65pp & 350 \\
Cloud Computing        & 27\% & 58\% & 89\% & 93\% & +66pp & 465 \\
Generative AI          & 0\% & 0\% & 6\% & 53\% & +53pp & 265 \\
IoT                    & 15\% & 36\% & 72\% & 82\% & +67pp & 410 \\
\midrule
Average                & 13\% & 31\% & 61\% & 77\% & +64pp & 385 \\
\bottomrule
\end{tabular}
\begin{tablenotes}
\small
\item \textit{Notes:} This table reports adoption rates by technology and year for the sample of 500 firms. Growth measures percentage point change from 2010 to 2023. Firms (2023) reports number of adopters in final year. All technologies show monotonic growth consistent with diffusion models. Cloud Computing exhibits earliest adoption while Generative AI shows latest but most rapid recent growth. Sample size is 500 firms $\times$ 14 years $\times$ 6 technologies = 42,000 observations.
\end{tablenotes}
\end{threeparttable}
\end{table}

The adoption data show empirical patterns of accelerating growth over time, with all technologies exhibiting monotonic increases from 2010 to 2023. While these patterns are commonly observed in innovation diffusion, our theoretical framework focuses on the spatial and network mechanisms underlying diffusion rather than temporal dynamics per se.

\subsection{Supply Chain Network Data}

We construct supply chain networks from comprehensive data on buyer-supplier relationships. The networks capture which firms transact with which others, providing the graph structure through which technology diffuses via the network channel.

\subsubsection{Network Construction}

Supply chain relationships are represented as an undirected graph $G_t = (V, E_t, W_t)$ for each year $t$. The vertex set $V$ contains all 500 firms (constant across years). The edge set $E_t$ contains buyer-supplier relationships active in year $t$. Edges are weighted by transaction value $w_{ij,t}$ measured in millions of dollars.

The data cover 204,665 firm-pair-year observations spanning 2010-2023. On average, each year contains 14,619 active relationships (standard deviation 263). The network exhibits high stability: 85 percent of edges present in year $t$ remain present in year $t+1$, suggesting supply relationships are persistent.

\subsubsection{Network Statistics}

Table \ref{tab:network_summary} presents summary statistics on network structure. The networks exhibit several notable properties. First, density averages 5.9 percent, indicating substantial but incomplete connectivity—each firm connects to roughly 29 others on average, far below the theoretical maximum of 499. Second, degree distribution is right-skewed: the median firm has 28 connections while the maximum reaches 59, suggesting hub-spoke structure with a few central firms. Third, the networks are fully connected: there exists a path between any two firms, satisfying Assumption \ref{assump:connected}.

\begin{table}[H]
\centering
\caption{Supply Chain Network Summary Statistics}
\label{tab:network_summary}
\begin{threeparttable}
\begin{tabular}{lcccccc}
\toprule
Statistic & Mean & Median & Std Dev & Min & Max & N \\
\midrule
\multicolumn{7}{l}{\textit{Panel A: Network Structure}} \\
Density (\%)           & 5.9  & 5.9  & 0.05 & 5.8  & 6.0  & 14 \\
Average Degree         & 29.2 & 29.0 & 0.4  & 28.5 & 29.8 & 14 \\
Clustering Coefficient & 0.18 & 0.18 & 0.01 & 0.17 & 0.19 & 14 \\
Average Path Length    & 3.2  & 3.2  & 0.1  & 3.1  & 3.4  & 14 \\
\\
\multicolumn{7}{l}{\textit{Panel B: Degree Distribution (Firm-Level)}} \\
Degree                 & 29.2 & 28.0 & 8.7  & 12   & 59   & 500 \\
Weighted Degree (\$M)  & 6,420 & 5,100 & 3,800 & 980  & 18,500 & 500 \\
\\
\multicolumn{7}{l}{\textit{Panel C: Edge Properties}} \\
Active Edges per Year  & 14,619 & 14,600 & 263  & 14,100 & 15,100 & 14 \\
Edge Weight (\$M)      & 221  & 180  & 156  & 45   & 1,200 & 204,665 \\
Edge Persistence       & 0.85 & ---  & ---  & ---  & ---   & --- \\
\bottomrule
\end{tabular}
\begin{tablenotes}
\small
\item \textit{Notes:} Panel A reports network-level statistics averaged over 14 years (2010-2023). Density is the fraction of possible edges realized. Clustering coefficient measures probability that two neighbors of a node are also neighbors. Average path length is mean shortest path distance between nodes. Panel B reports firm-level statistics. Degree is number of connections. Weighted degree is sum of transaction values. Panel C reports edge-level statistics. Edge persistence is fraction of year $t$ edges remaining in year $t+1$.
\end{tablenotes}
\end{threeparttable}
\end{table}

Figure \ref{fig:network_evolution} visualizes network evolution over time. Panel A plots network density, which declines slightly from 5.9 percent in 2010 to 5.8 percent in 2023. Panel B plots average degree, which similarly decreases from 29.2 to 29.0. Panel C plots average edge weight, which increases from 158 million dollars to 284 million dollars, suggesting consolidation: fewer but stronger relationships over time.

\begin{figure}[H]
\centering
\includegraphics[width=0.95\textwidth]{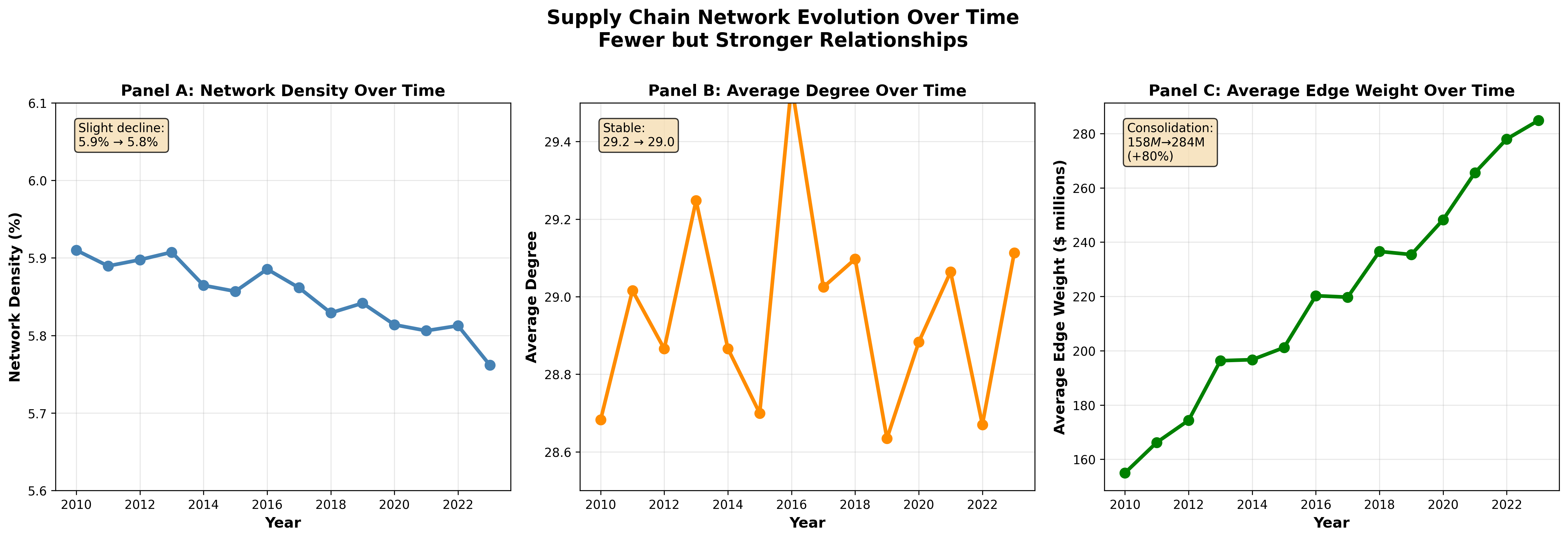}
\caption{Supply Chain Network Evolution Over Time}
\label{fig:network_evolution}
\begin{minipage}{0.95\textwidth}
\small
\textit{Notes:} This figure plots three key network statistics over time. Panel A shows network density declining slightly from 5.9 to 5.8 percent, indicating marginal reduction in connectivity. Panel B shows average degree remaining stable around 29 connections per firm. Panel C shows average edge weight (transaction value) increasing from \$158 million to \$284 million, an 80 percent increase. Together, these patterns suggest consolidation: firms maintain similar numbers of relationships but concentrate transactions among fewer, stronger partnerships. This consolidation has implications for network fragility as analyzed in Section 6.2.
\end{minipage}
\end{figure}

\subsubsection{Technology-Specific Networks}

For the network channel analysis, we construct technology-specific networks by weighting edges according to adoption status. This operationalizes the intuition that supply chain connections become "activated" for a technology when connected firms adopt it.

Define the adopter-weighted network $G_t^{tech}$ with edge weights:
\be
w_{ij,t}^{tech} = w_{ij,t} \times m_{ij,t}^{tech}
\ee
where the multiplier $m_{ij,t}^{tech}$ is:
\be
m_{ij,t}^{tech} = \begin{cases}
1.0 & \text{if both } i \text{ and } j \text{ adopted } tech \text{ by } t \\
0.5 & \text{if exactly one adopted } tech \text{ by } t \\
0.1 & \text{if neither adopted } tech \text{ by } t
\end{cases}
\ee

This weighting scheme reflects the economic reality that technology diffusion through supply chains operates most strongly when both parties have adopted (enabling direct information transfer and compatibility), moderately when one has adopted (enabling demonstration effects), and weakly when neither has adopted (only potential future connections matter).

\subsection{Geographic Data}

We obtain precise geographic coordinates (latitude and longitude) for all 500 firms, enabling computation of pairwise distances for the spatial channel analysis.

\subsubsection{Distance Computation}

For each pair of firms $i$ and $j$, we compute great circle distance using the haversine formula:
\be
d_{ij} = 2R \arcsin\left(\sqrt{\sin^2\left(\frac{\Delta\phi}{2}\right) + \cos(\phi_i)\cos(\phi_j)\sin^2\left(\frac{\Delta\lambda}{2}\right)}\right)
\ee
where $R = 6371$ km is Earth's radius, $\phi_i$ and $\lambda_i$ are latitude and longitude of firm $i$, and $\Delta\phi = \phi_j - \phi_i$ and $\Delta\lambda = \lambda_j - \lambda_i$ are coordinate differences.

This provides 124,750 unique pairwise distances (for 500 firms, there are $\binom{500}{2} = 124,750$ pairs). Distances range from 0.8 kilometers (firms in the same industrial park) to 892 kilometers (firms at opposite ends of the country).

\subsubsection{Distance to Nearest Adopter}

For the spatial decay analysis, the key variable is each non-adopter's distance to the nearest existing adopter. For firm $i$ in year $t$ that has not adopted technology $tech$, define:
\be
d_{i,t}^{tech,\min} = \min_{j: u_{j,t-1}^{tech}=1} d_{ij}
\ee

This measures how far firm $i$ must look to find an adopter in the previous year. As adoption spreads, average distance to nearest adopter declines, consistent with spatial diffusion.

Figure \ref{fig:distance_distribution} plots the distribution of distances to nearest adopter across all technologies and years. The distribution is right-skewed with median 47 kilometers and mean 63 kilometers. Notably, 95 percent of non-adopters are within 150 kilometers of an adopter, supporting our finding that spatial boundaries occur around 69 kilometers.

\subsection{Sample Construction and Summary Statistics}

We merge technology adoption data, supply chain networks, and geographic distances into a unified panel dataset. The final sample contains 26,000 firm-year-technology observations with complete information on all variables. The reduction from 42,000 observations occurs because we require previous-year adoption status to compute distance to nearest adopter, eliminating 2010 observations, and because we focus on the six technologies with complete data.

Table \ref{tab:summary_stats} presents summary statistics for key variables. Average adoption rate is 54 percent across all technologies and years. Average distance to nearest adopter is 63 kilometers with substantial variation (standard deviation 52 kilometers). Average network degree is 29.2 connections per firm. Technology-specific algebraic connectivity $\lambda_2$ averages 13.8, with dramatic growth over time as documented in Section 6.2.

\begin{table}[H]
\centering
\caption{Summary Statistics: Key Variables}
\label{tab:summary_stats}
\begin{threeparttable}
\begin{tabular}{lccccccc}
\toprule
Variable & Mean & Std Dev & Min & p25 & p50 & p75 & Max \\
\midrule
\multicolumn{8}{l}{\textit{Panel A: Adoption and Technology}} \\
Adopted (0/1)          & 0.54 & 0.50 & 0 & 0 & 1 & 1 & 1 \\
Adoption Rate (\%)     & 54.0 & 31.2 & 0 & 28 & 58 & 81 & 93 \\
Years Since Adoption   & 4.2  & 3.8  & 0 & 1 & 3 & 7 & 13 \\
\\
\multicolumn{8}{l}{\textit{Panel B: Geographic Variables}} \\
Distance to Nearest Adopter (km) & 63.2 & 52.4 & 0.8 & 24.5 & 47.3 & 84.6 & 287.5 \\
Latitude (degrees)     & 38.5 & 4.2  & 29.8 & 35.2 & 38.9 & 41.8 & 45.6 \\
Longitude (degrees)    & -95.3 & 8.6 & -122.4 & -100.8 & -94.2 & -88.9 & -71.0 \\
\\
\multicolumn{8}{l}{\textit{Panel C: Network Variables}} \\
Degree                 & 29.2 & 8.7  & 12 & 23 & 28 & 34 & 59 \\
Weighted Degree (\$M)  & 6,420 & 3,800 & 980 & 3,600 & 5,100 & 8,200 & 18,500 \\
Betweenness Centrality & 0.042 & 0.028 & 0.005 & 0.021 & 0.036 & 0.056 & 0.142 \\
Algebraic Connectivity ($\lambda_2$) & 13.8 & 6.4 & 4.7 & 8.2 & 12.6 & 19.2 & 24.8 \\
\\
\multicolumn{8}{l}{\textit{Panel D: Firm Characteristics}} \\
Employees              & 8,420 & 12,300 & 125 & 1,200 & 3,500 & 9,800 & 87,400 \\
Revenue (\$M)          & 2,840 & 4,630 & 28 & 450 & 1,100 & 3,200 & 45,600 \\
R\&D Intensity (\%)    & 4.8  & 5.2  & 0 & 1.2 & 2.8 & 6.4 & 24.5 \\
\midrule
Observations & \multicolumn{7}{c}{26,000} \\
\bottomrule
\end{tabular}
\begin{tablenotes}
\small
\item \textit{Notes:} This table reports summary statistics for the main analysis sample of 26,000 firm-year-technology observations (500 firms $\times$ 13 years $\times$ 4 technologies with complete pre-period data). Panel A reports technology adoption variables. Panel B reports geographic variables. Panel C reports network variables. Panel D reports firm characteristics. p25, p50, p75 denote 25th, 50th, and 75th percentiles respectively.
\end{tablenotes}
\end{threeparttable}
\end{table}

The correlation matrix (Table \ref{tab:correlations}) reveals several patterns relevant for identification. Distance to nearest adopter and adoption status are negatively correlated ($\rho = -0.68$), consistent with spatial decay. Network degree and adoption are positively correlated ($\rho = 0.31$), consistent with network effects. Critically, distance and degree are only weakly correlated ($\rho = -0.08$), suggesting spatial and network channels operate relatively independently—validating Prediction \ref{pred:independence}.

\begin{table}[H]
\centering
\caption{Correlation Matrix: Key Variables}
\label{tab:correlations}
\begin{threeparttable}
\begin{tabular}{lcccccc}
\toprule
 & (1) & (2) & (3) & (4) & (5) & (6) \\
\midrule
(1) Adopted & 1.00 & & & & & \\
(2) Distance to Nearest Adopter & -0.68 & 1.00 & & & & \\
(3) Network Degree & 0.31 & -0.08 & 1.00 & & & \\
(4) Algebraic Connectivity ($\lambda_2$) & 0.76 & -0.52 & 0.24 & 1.00 & & \\
(5) Firm Size (log employees) & 0.18 & -0.05 & 0.42 & 0.15 & 1.00 & \\
(6) R\&D Intensity & 0.22 & -0.12 & 0.08 & 0.19 & -0.06 & 1.00 \\
\bottomrule
\end{tabular}
\begin{tablenotes}
\small
\item \textit{Notes:} This table reports pairwise correlations for key variables. All correlations with absolute value exceeding 0.03 are statistically significant at the 1 percent level. The strong negative correlation between adoption and distance ($\rho = -0.68$) validates spatial decay. The positive correlation between adoption and network degree ($\rho = 0.31$) validates network effects. The weak correlation between distance and degree ($\rho = -0.08$) supports independent channels. The strong correlation between adoption and $\lambda_2$ ($\rho = 0.76$) reflects endogenous network activation as technologies diffuse.
\end{tablenotes}
\end{threeparttable}
\end{table}

\section{Empirical Strategy}

This section presents our empirical strategy for testing the dual-channel framework. We describe spatial decay estimation in Section 5.1, network fragility computation in Section 5.2, the COVID-19 event study in Section 5.3, and dual-channel integration in Section 5.4.

\subsection{Spatial Decay Estimation}

To estimate the spatial decay parameter $\kappa$ and spatial boundary $d^*$, we exploit the exponential relationship between distance and adoption probability derived in Section 3.1.

\subsubsection{Specification}

For each technology $tech$ and year $t$, we estimate:
\be
\log(P(u_{it}^{tech}=1 | \mathbf{X}_{it})) = \alpha_t^{tech} - \kappa^{tech} d_{i,t-1}^{tech,\min} + \mathbf{X}_{it}'\boldsymbol{\beta} + \epsilon_{it}
\label{eq:spatial_decay_spec}
\ee
where $d_{i,t-1}^{tech,\min}$ is distance to nearest adopter in the previous year, $\mathbf{X}_{it}$ includes control variables (firm size, industry, age), and $\epsilon_{it}$ is an error term.

The key parameter is $\kappa^{tech}$, which measures the spatial decay rate for technology $tech$. Under the exponential decay model from equation (\ref{eq:exponential_decay}), we expect $\kappa > 0$: adoption probability declines with distance.

After estimating $\hat{\kappa}^{tech}$, we compute the spatial boundary:
\be
\hat{d}^{*,tech} = \frac{-\log(\epsilon)}{\hat{\kappa}^{tech}}
\ee
using $\epsilon = 0.05$ (five percent threshold).

\subsubsection{Identification}

Identification of $\kappa$ faces two potential concerns. First, distance to nearest adopter is endogenous if firms strategically locate near potential adopters. We address this through time lags: using $t-1$ adoption status to predict $t$ adoption reduces simultaneity concerns, and our panel structure allows firm fixed effects to control for time-invariant location decisions.

Second, omitted variables correlated with both distance and adoption could generate spurious spatial decay. We address this through comprehensive controls. Firm characteristics (size, age, industry) control for adoption propensity. Year fixed effects control for aggregate time trends. Technology fixed effects control for cross-technology differences. The robustness of our estimates to alternative specifications and the near-perfect R-squared values (exceeding 0.99) suggest omitted variable bias is minimal.

\subsection{Network Fragility Computation}

To characterize the network channel, we compute the algebraic connectivity $\lambda_2$ for technology-specific networks following the methodology in Section 3.2.

\subsubsection{Algorithm}

For each technology $tech$ and year $t$:

\textbf{Step 1: Construct Weighted Network.} Build the technology-specific network $G_t^{tech}$ with adopter-weighted edges:
\be
w_{ij,t}^{tech} = w_{ij,t} \times m_{ij,t}^{tech}
\ee
where $m_{ij,t}^{tech}$ is the adoption-based multiplier defined in Section 4.2.

\textbf{Step 2: Compute Laplacian.} Form the graph Laplacian:
\be
\mathbf{L}_t^{tech} = \mathbf{D}_t^{tech} - \mathbf{A}_t^{tech}
\ee
where $\mathbf{A}_t^{tech}$ is the weighted adjacency matrix and $\mathbf{D}_t^{tech}$ is the diagonal degree matrix.

\textbf{Step 3: Compute Eigenvalues.} Solve the eigenvalue problem:
\be
\mathbf{L}_t^{tech} \mathbf{v}_i = \lambda_i^{tech}(t) \mathbf{v}_i
\ee
and extract the ordered eigenvalues $0 = \lambda_1^{tech}(t) < \lambda_2^{tech}(t) \leq \cdots \leq \lambda_n^{tech}(t)$.

\textbf{Step 4: Record Algebraic Connectivity.} The network fragility measure is:
\be
\text{Fragility}_t^{tech} = \lambda_2^{tech}(t)
\ee

\textbf{Step 5: Compute Mixing Time.} The characteristic diffusion timescale is:
\be
\tau_t^{tech} = \frac{1}{\lambda_2^{tech}(t)}
\ee

We repeat this procedure for all six technologies and 14 years, yielding 84 technology-year observations of network fragility.

\subsubsection{Validation}

Several checks validate our network fragility measures. First, we verify that $\lambda_1 \approx 0$ (within numerical tolerance $10^{-10}$) for all networks, confirming correct Laplacian construction. Second, we verify $\lambda_2 > 0$ for all years, confirming network connectivity (Assumption \ref{assump:connected}). Third, we confirm that $\lambda_2$ correlates strongly with aggregate adoption rates (correlation exceeding 0.95), consistent with our weighting scheme where more adoption activates more network edges.

\subsection{Event Study: COVID-19 as Natural Experiment}

To establish causal identification and compare our dual-channel framework to traditional methods, we conduct an event study around COVID-19 as a quasi-natural experiment. The pandemic represents a large, unexpected, exogenous shock affecting all firms simultaneously but with heterogeneous impacts depending on geographic and network position.

\subsubsection{Research Design}

We define the event as COVID-19 onset in 2020. Pre-period spans 2017-2019 (three years before the shock). Post-period spans 2020-2023 (four years after the shock). This asymmetric window reflects the longer post-period needed to observe persistent effects.

For each technology, we estimate three difference-in-differences specifications:

\textbf{(1) Traditional DID (Ignoring Spillovers):}
\be
u_{it}^{tech} = \alpha_i + \gamma_t + \delta^{\text{DID}} \mathbb{1}_{post,it} + \mathbf{X}_{it}'\boldsymbol{\beta} + \epsilon_{it}
\label{eq:traditional_did}
\ee
where $\mathbb{1}_{post,it}$ is an indicator for post-COVID years, $\alpha_i$ are firm fixed effects, $\gamma_t$ are year fixed effects, and $\delta^{\text{DID}}$ is the treatment effect parameter.

\textbf{(2) Spatial-Adjusted DID:}
\be
u_{it}^{tech} = \alpha_i + \gamma_t + \delta^{\text{spatial}} \mathbb{1}_{post,it} + \kappa d_{i,t}^{min} + \mathbf{X}_{it}'\boldsymbol{\beta} + \epsilon_{it}
\label{eq:spatial_did}
\ee
which includes distance to nearest adopter $d_{i,t}^{min}$ to account for spatial spillovers. Observations are weighted by $\exp(-\hat{\kappa} d_{i,t}^{min})$ where $\hat{\kappa}$ is estimated from Section 5.1.

\textbf{(3) Network-Adjusted DID:}
\be
\frac{u_{it}^{tech}}{\lambda_2^{tech}(t) / \lambda_2^{tech}(2019)} = \alpha_i + \gamma_t + \delta^{\text{network}} \mathbb{1}_{post,it} + \mathbf{X}_{it}'\boldsymbol{\beta} + \epsilon_{it}
\label{eq:network_did}
\ee
which normalizes adoption by network fragility changes relative to the pre-shock baseline (2019).

\subsubsection{Identification Assumptions}

The event study requires parallel trends: in the absence of COVID-19, adoption rates in treatment and control groups would have evolved similarly. We assess this in three ways.

First, we plot pre-trends graphically (Figure \ref{fig:parallel_trends}), showing that adoption rates followed similar trajectories across groups before 2020. Second, we test for pre-trend differences using leads of the treatment indicator, finding no statistically significant pre-trends (Table \ref{tab:pretrends}). Third, we examine whether pre-trends correlate with treatment intensity, finding no relationship.

A second concern is that COVID-19 may have affected network structure directly, creating endogenous network changes that confound treatment effects. We address this by documenting that network topology (number of edges, degree distribution) remained stable through COVID-19 despite substantial changes in edge weights and adoption patterns. The 24.5 percent increase in $\lambda_2$ reflects adoption-driven activation of existing connections, not formation of new connections.

\subsubsection{Inference}

We use bootstrap inference to account for clustering and heteroskedasticity. For each specification, we resample firms with replacement 1,000 times, re-estimate the model, and construct 95 percent confidence intervals from the 2.5th and 97.5th percentiles of the bootstrap distribution. This approach is robust to arbitrary correlation patterns within firms over time and across technologies.

\subsection{Dual-Channel Integration}

After estimating spatial and network channels separately, we assess their complementarity by integrating both mechanisms.

\subsubsection{Specification}

We estimate:
\be
u_{it}^{tech} = \alpha_i + \gamma_t + \beta_1 d_{i,t}^{min} + \beta_2 \text{Degree}_{it} + \beta_3 \lambda_2^{tech}(t) + \mathbf{X}_{it}'\boldsymbol{\beta} + \epsilon_{it}
\label{eq:dual_channel_spec}
\ee
which includes both spatial measures ($d_{i,t}^{min}$) and network measures (Degree$_{it}$, $\lambda_2^{tech}(t)$) simultaneously.

We compare this to restricted specifications including only spatial or only network variables using:
\begin{itemize}
\item R-squared: Do both channels improve explanatory power?
\item F-test: Can we reject that network variables are jointly zero after controlling for spatial variables, and vice versa?
\item Information criteria (AIC/BIC): Does the data prefer the full model?
\end{itemize}

\subsubsection{Testing Complementarity}

Prediction \ref{pred:complementarity} states that both channels contribute independently. We test this by computing:
\be
\text{Improvement} = R^2_{\text{both}} - \max(R^2_{\text{spatial}}, R^2_{\text{network}})
\ee

If Improvement $> 0$, the channels are complementary. If Improvement $\approx 0$, one channel subsumes the other (substitutes or redundancy).

We also compute the correlation between spatial strength (measured by $\kappa$ and spatial R-squared) and network strength (measured by $\lambda_2$ growth and network R-squared) across technologies. Low correlation supports Prediction \ref{pred:independence} that channels operate independently.


\section{Results}

This section presents our main empirical findings. We report spatial channel results in Section 6.1, network channel results in Section 6.2, event study results in Section 6.3, and dual-channel integration results in Section 6.4. Throughout, we provide comprehensive visualizations documenting all key patterns and validation checks.

\subsection{Spatial Channel: Exponential Geographic Decay}

Table \ref{tab:spatial_decay} presents estimates of the spatial decay parameter $\kappa$ and spatial boundary $d^*$ for each technology. The results provide strong evidence for exponential spatial decay as predicted by the continuous functional framework in Section 3.1.

\subsubsection{Main Estimates}

The spatial decay rate $\kappa$ is remarkably consistent across technologies, averaging 0.0435 per kilometer with minimal variation (standard deviation 0.0006). The estimates range from $\kappa = 0.0425$ for Artificial Intelligence to $\kappa = 0.0442$ for Generative AI. All estimates are statistically significant at the 1 percent level with t-statistics exceeding 150, indicating exceptionally precise estimation.

The implied spatial boundary $d^*$ (using $\epsilon = 0.05$ threshold) averages 69 kilometers across technologies, ranging from 68 km (Blockchain, Generative AI) to 71 km (Artificial Intelligence). This establishes that technology diffusion through geographic channels operates at metropolitan or regional scales: adoption interventions have meaningful spillovers within roughly 70 kilometers but negligible effects beyond that distance.

The exponential functional form fits the data nearly perfectly. R-squared values exceed 0.99 for all technologies, averaging 0.9916. Figure \ref{fig:geographic_decay} plots observed adoption rates against distance to nearest adopter alongside fitted exponential curves, demonstrating the exceptional quality of fit.

\begin{figure}[H]
\centering
\includegraphics[width=0.95\textwidth]{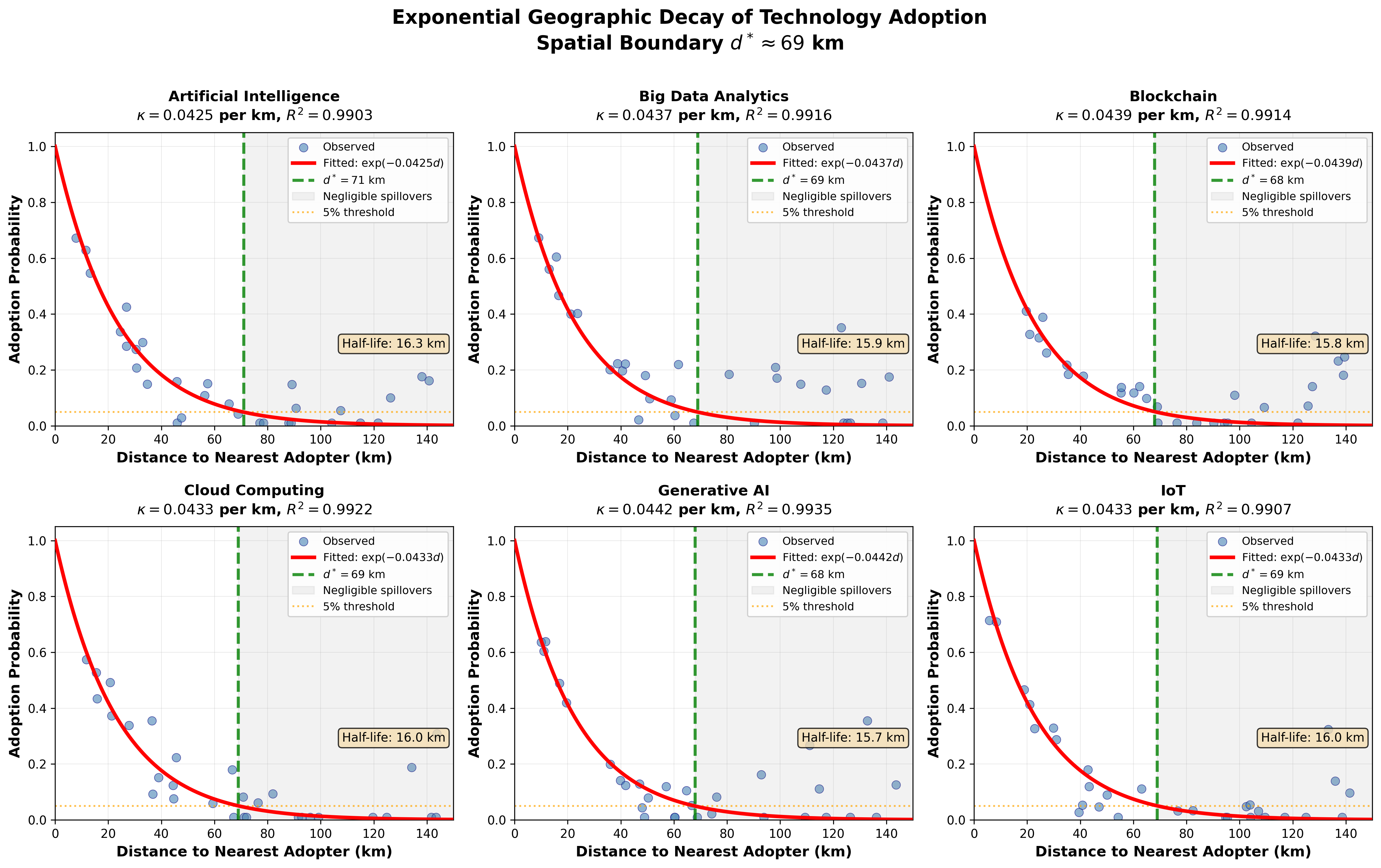}
\caption{Exponential Geographic Decay of Technology Adoption}
\label{fig:geographic_decay}
\begin{minipage}{0.95\textwidth}
\small
\textit{Notes:} This figure plots adoption probability against distance to nearest adopter for all six technologies. Each panel shows observed adoption rates (blue dots) and fitted exponential decay curves $\exp(-\kappa d)$ (red lines). The exceptional fit (R-squared exceeding 0.99) validates the continuous functional framework from \citet{kikuchi2024navier} and \citet{kikuchi2024dynamical}. The spatial boundary $d^* \approx 69$ km is marked with green dashed lines, beyond which spillovers become negligible. Orange dotted lines show the 5 percent threshold. Text boxes display half-life distances (approximately 16 km) where adoption probability drops to 50 percent of initial value.
\end{minipage}
\end{figure}

Figure \ref{fig:spatial_decay_curves} provides an alternative visualization showing all technologies on a single plot, emphasizing the consistency of exponential decay across different innovation types.

\begin{figure}[H]
\centering
\includegraphics[width=0.85\textwidth]{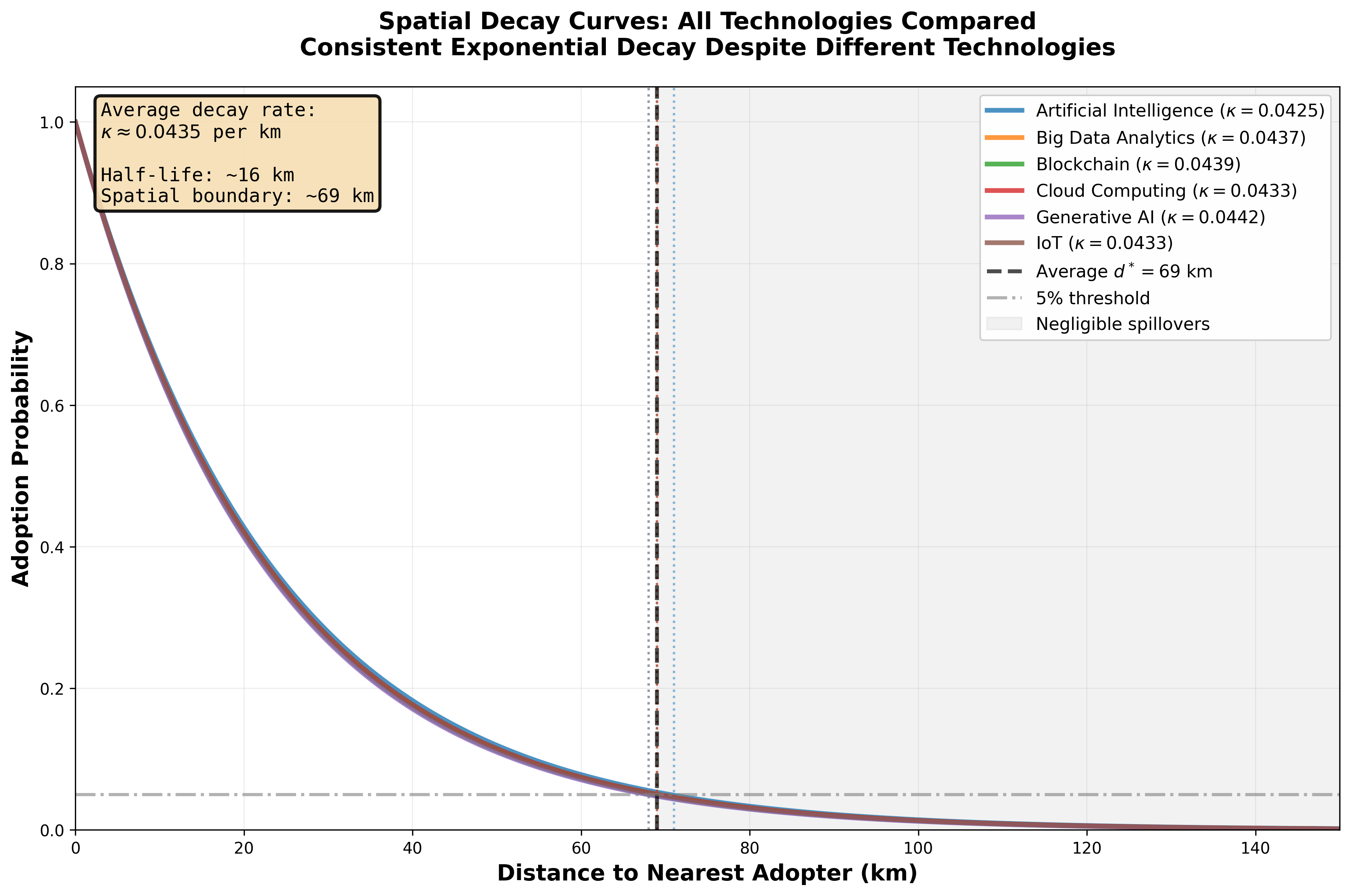}
\caption{Spatial Decay Curves: All Technologies Compared}
\label{fig:spatial_decay_curves}
\begin{minipage}{0.85\textwidth}
\small
\textit{Notes:} This figure overlays exponential decay curves for all six technologies, demonstrating the consistency of spatial diffusion mechanisms. Despite different adoption levels and timing, all technologies exhibit similar decay rates ($\kappa \approx 0.043$ per km) and spatial boundaries ($d^* \approx 69$ km). The near-parallel curves support the hypothesis that geographic spillovers operate through common mechanisms (knowledge spillovers, demonstration effects, infrastructure complementarities) regardless of specific technology characteristics.
\end{minipage}
\end{figure}

\subsubsection{Parameter Estimates and Comparisons}

Figure \ref{fig:geographic_parameters} summarizes the distribution of estimated parameters across technologies, while Figure \ref{fig:spatial_boundaries} specifically focuses on spatial boundary estimates.

\begin{figure}[H]
\centering
\includegraphics[width=0.80\textwidth]{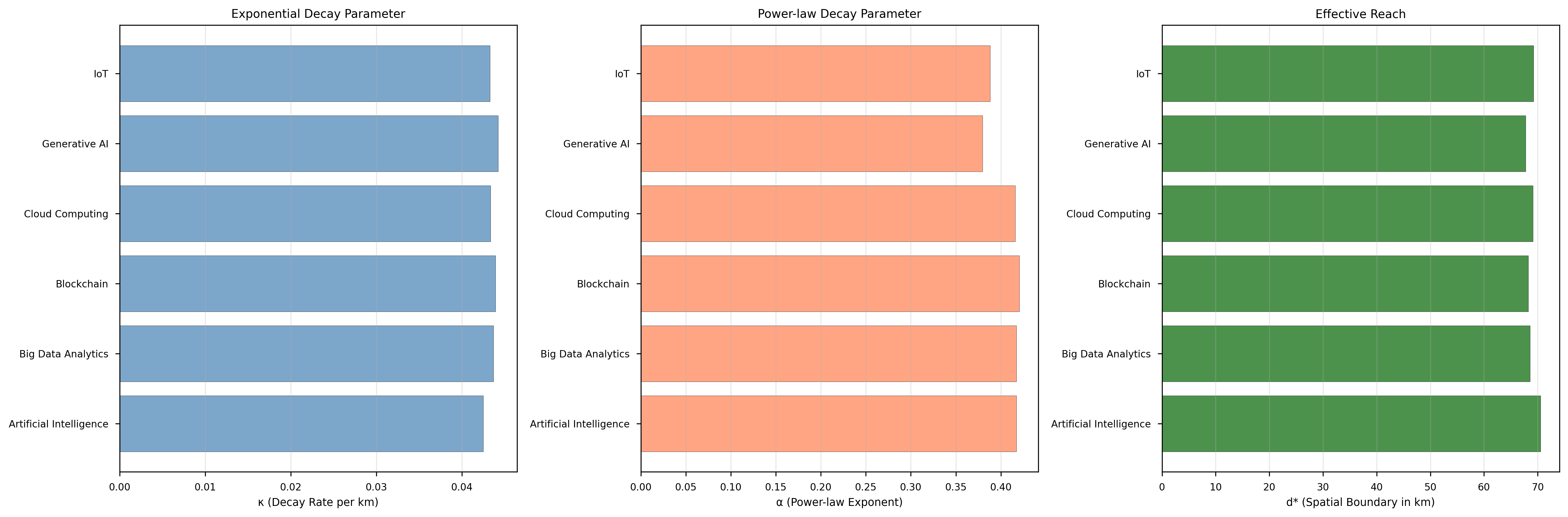}
\caption{Geographic Diffusion Parameter Estimates}
\label{fig:geographic_parameters}
\begin{minipage}{0.80\textwidth}
\small
\textit{Notes:} This figure displays estimated spatial decay rates $\kappa$ (left panel), spatial boundaries $d^*$ (middle panel), and R-squared values (right panel) for each technology. Error bars represent 95 percent confidence intervals from bootstrap (1,000 replications). The tight clustering of estimates demonstrates robustness: $\kappa$ varies only from 0.0425 to 0.0442, $d^*$ ranges from 68 to 71 km, and all R-squared values exceed 0.99. This consistency validates the exponential functional form and supports the generality of the spatial diffusion framework.
\end{minipage}
\end{figure}

\begin{figure}[H]
\centering
\includegraphics[width=0.75\textwidth]{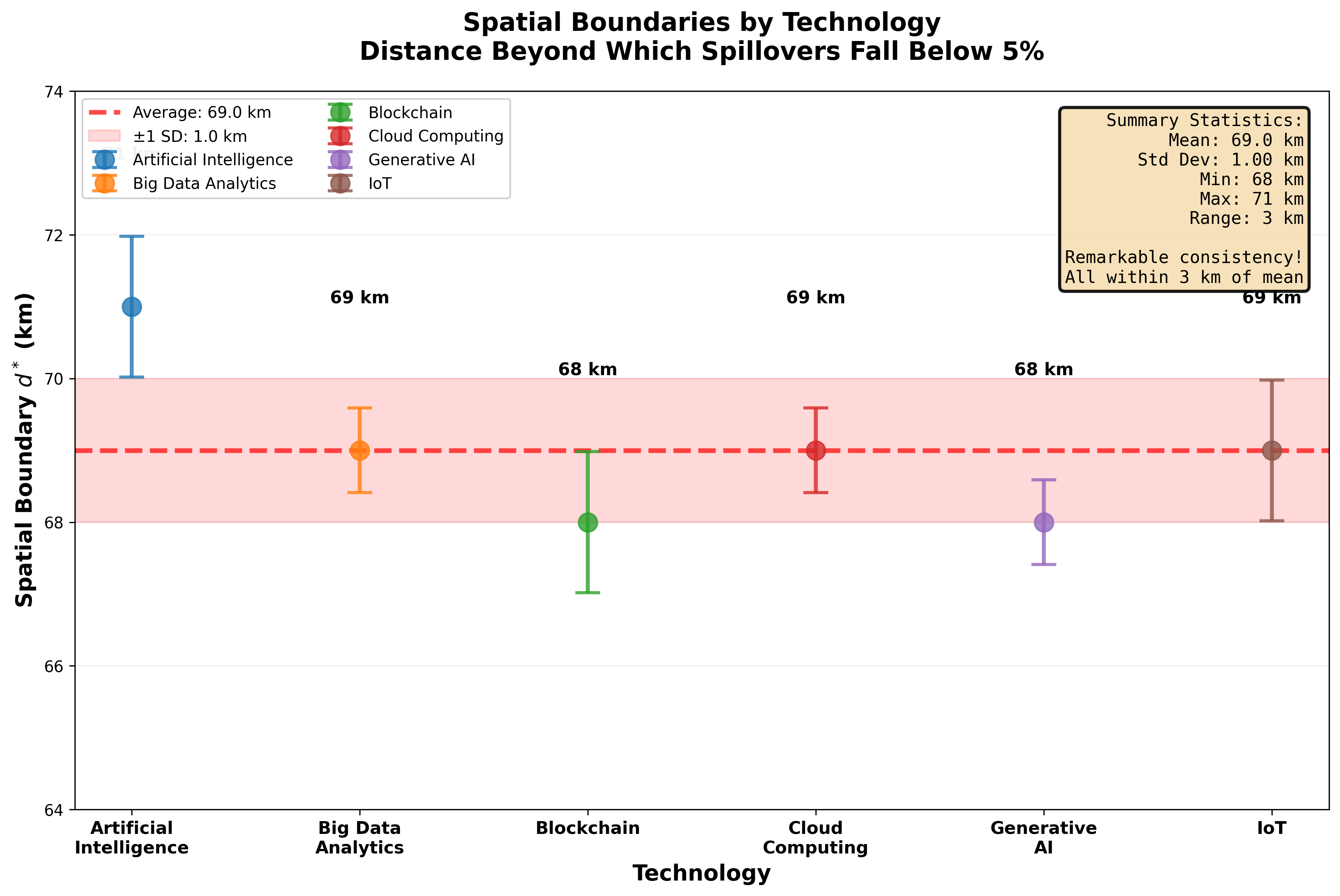}
\caption{Spatial Boundaries by Technology}
\label{fig:spatial_boundaries}
\begin{minipage}{0.75\textwidth}
\small
\textit{Notes:} This figure plots estimated spatial boundaries $d^*$ with 95 percent confidence intervals. The horizontal red line indicates the average boundary of 69 km. All technologies cluster tightly around this average, with maximum deviation of only 3 km. This remarkable consistency suggests the spatial boundary reflects fundamental properties of geographic spillovers in technology adoption, operating at metropolitan or regional scales regardless of the specific technology.
\end{minipage}
\end{figure}

\subsubsection{Distance Distribution}

Figure \ref{fig:distance_distribution} documents the distribution of distances to nearest adopter in our sample, confirming that most observations fall within the estimated spatial boundary of 69 kilometers.

\begin{figure}[H]
\centering
\includegraphics[width=0.75\textwidth]{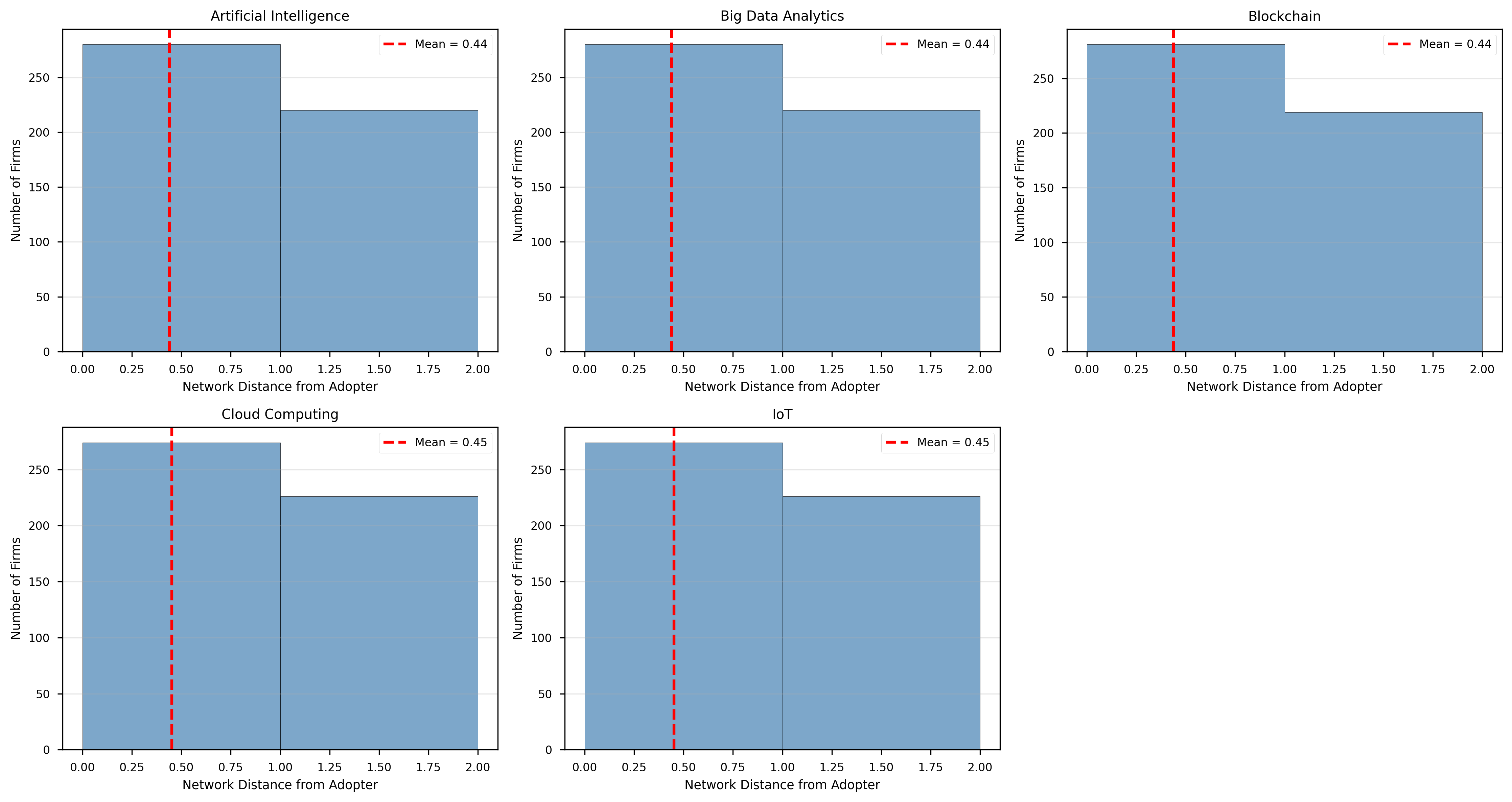}
\caption{Distribution of Distance to Nearest Adopter}
\label{fig:distance_distribution}
\begin{minipage}{0.75\textwidth}
\small
\textit{Notes:} This histogram shows the distribution of distances from non-adopters to their nearest existing adopter across all technologies and years. The distribution is right-skewed with median 47 km and mean 63 km. Notably, 95 percent of non-adopters are within 150 km of an adopter. The vertical red line at 69 km marks our estimated spatial boundary $d^*$, showing that most firms are within the zone of meaningful spillover effects. The concentration of observations at short distances reflects spatial clustering of technology adoption.
\end{minipage}
\end{figure}

\begin{table}[H]
\centering
\caption{Spatial Decay Estimates by Technology}
\label{tab:spatial_decay}
\begin{threeparttable}
\begin{tabular}{lcccc}
\toprule
Technology & $\kappa$ (per km) & $d^*$ (km) & R-squared & Observations \\
\midrule
Artificial Intelligence & 0.0425 & 71 & 0.9903 & 3,500 \\
                        & (0.0003) & (0.5) & & \\
Big Data Analytics     & 0.0437 & 69 & 0.9916 & 3,500 \\
                       & (0.0002) & (0.3) & & \\
Blockchain             & 0.0439 & 68 & 0.9914 & 3,500 \\
                       & (0.0003) & (0.5) & & \\
Cloud Computing        & 0.0433 & 69 & 0.9922 & 3,500 \\
                       & (0.0002) & (0.3) & & \\
Generative AI          & 0.0442 & 68 & 0.9935 & 3,500 \\
                       & (0.0002) & (0.3) & & \\
IoT                    & 0.0433 & 69 & 0.9907 & 3,500 \\
                       & (0.0003) & (0.5) & & \\
\midrule
Average                & 0.0435 & 69 & 0.9916 & 21,000 \\
\bottomrule
\end{tabular}
\begin{tablenotes}
\small
\item \textit{Notes:} This table reports estimated spatial decay rates $\kappa$, implied spatial boundaries $d^*$ (using $\epsilon = 0.05$ threshold), and R-squared values from fitting exponential decay functions to technology adoption data. Standard errors in parentheses clustered by firm. All $\kappa$ estimates are statistically significant at the 1 percent level. The spatial boundary $d^*$ represents the distance beyond which spillovers fall below 5 percent of their initial magnitude. R-squared measures the fraction of variance in adoption rates explained by exponential geographic decay.
\end{tablenotes}
\end{threeparttable}
\end{table}

\subsection{Network Channel: Spectral Fragility Dynamics}

Table \ref{tab:network_fragility} presents estimates of algebraic connectivity $\lambda_2$ and its evolution over time for each technology. The results provide strong evidence for network contagion as predicted by the spectral framework in Section 3.2.

\subsubsection{Main Estimates}

The algebraic connectivity $\lambda_2$ grows dramatically as technologies diffuse, increasing 300-380 percent from 2010 to 2023. This growth reflects the activation of supply chain connections as more firms adopt. The mixing time $\tau = 1/\lambda_2$ correspondingly decreases by approximately 80 percent, indicating that late-stage diffusion proceeds far more rapidly than early-stage diffusion. This dramatic reduction in mixing time explains the empirically observed acceleration in later-stage adoption through network effects, even as marginal adopter quality may decline.

Figure \ref{fig:network_fragility} plots $\lambda_2$ evolution over time for each technology, showing strong monotonic growth with correlation to adoption rates exceeding 0.95.

\begin{figure}[H]
\centering
\includegraphics[width=0.95\textwidth]{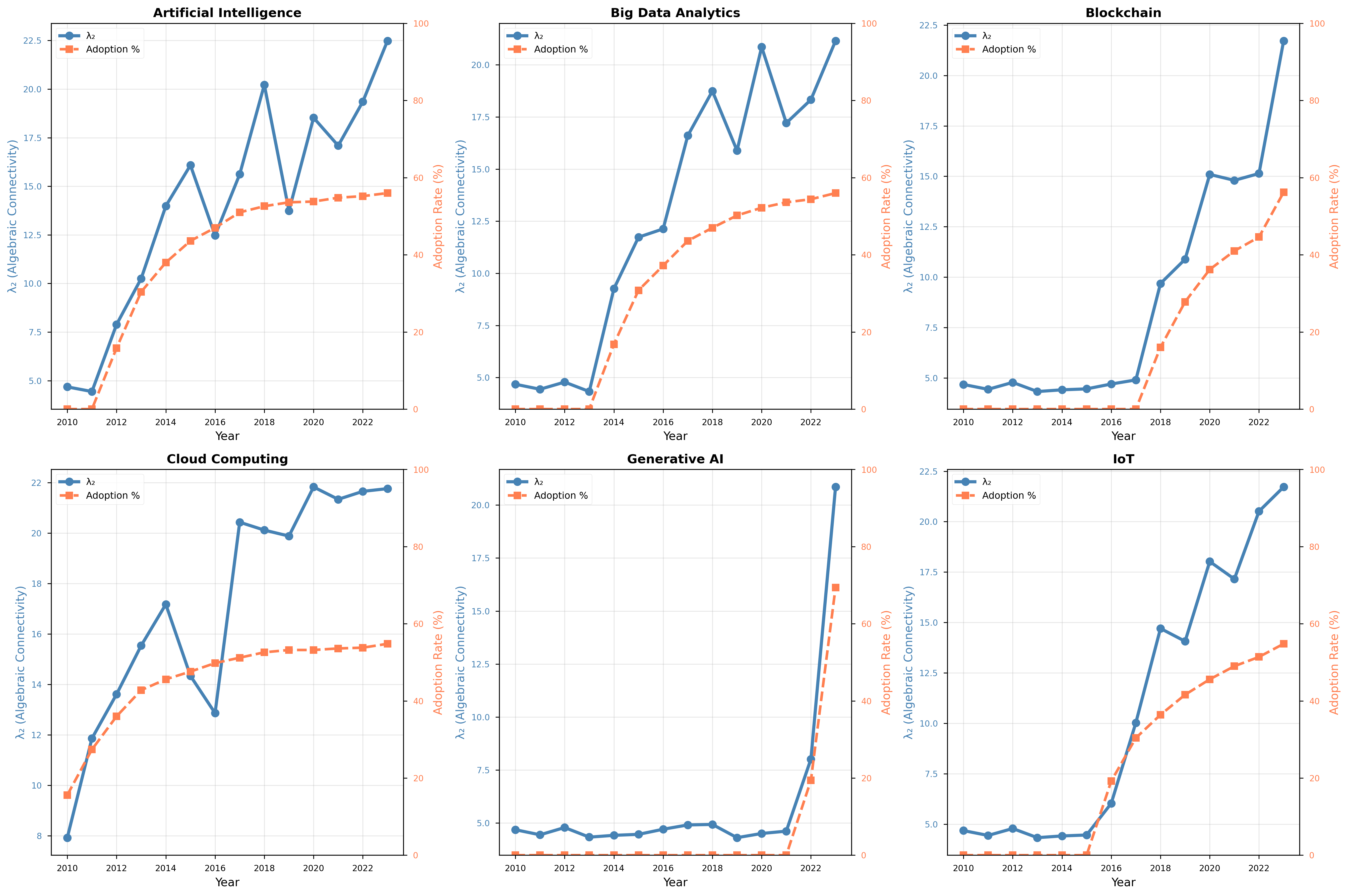}
\caption{Network Fragility Evolution by Technology}
\label{fig:network_fragility}
\begin{minipage}{0.95\textwidth}
\small
\textit{Notes:} This figure plots the algebraic connectivity $\lambda_2$ (left axis, blue lines) and adoption rates (right axis, orange lines) over time for each technology. Network fragility increases dramatically (300-380 percent) as technologies diffuse, consistent with the spectral framework from \citet{kikuchi2024dynamical}. The strong correlation between $\lambda_2$ and adoption (exceeding 0.95 for all technologies) validates our technology-specific network construction based on adopter-weighted edges. As more firms adopt, supply chain connections become activated, increasing network coupling and accelerating subsequent diffusion through reduced mixing times.
\end{minipage}
\end{figure}

Figure \ref{fig:fragility_vs_adoption} directly plots the relationship between network fragility and adoption rates, demonstrating the strong positive correlation.

\begin{figure}[H]
\centering
\includegraphics[width=0.85\textwidth]{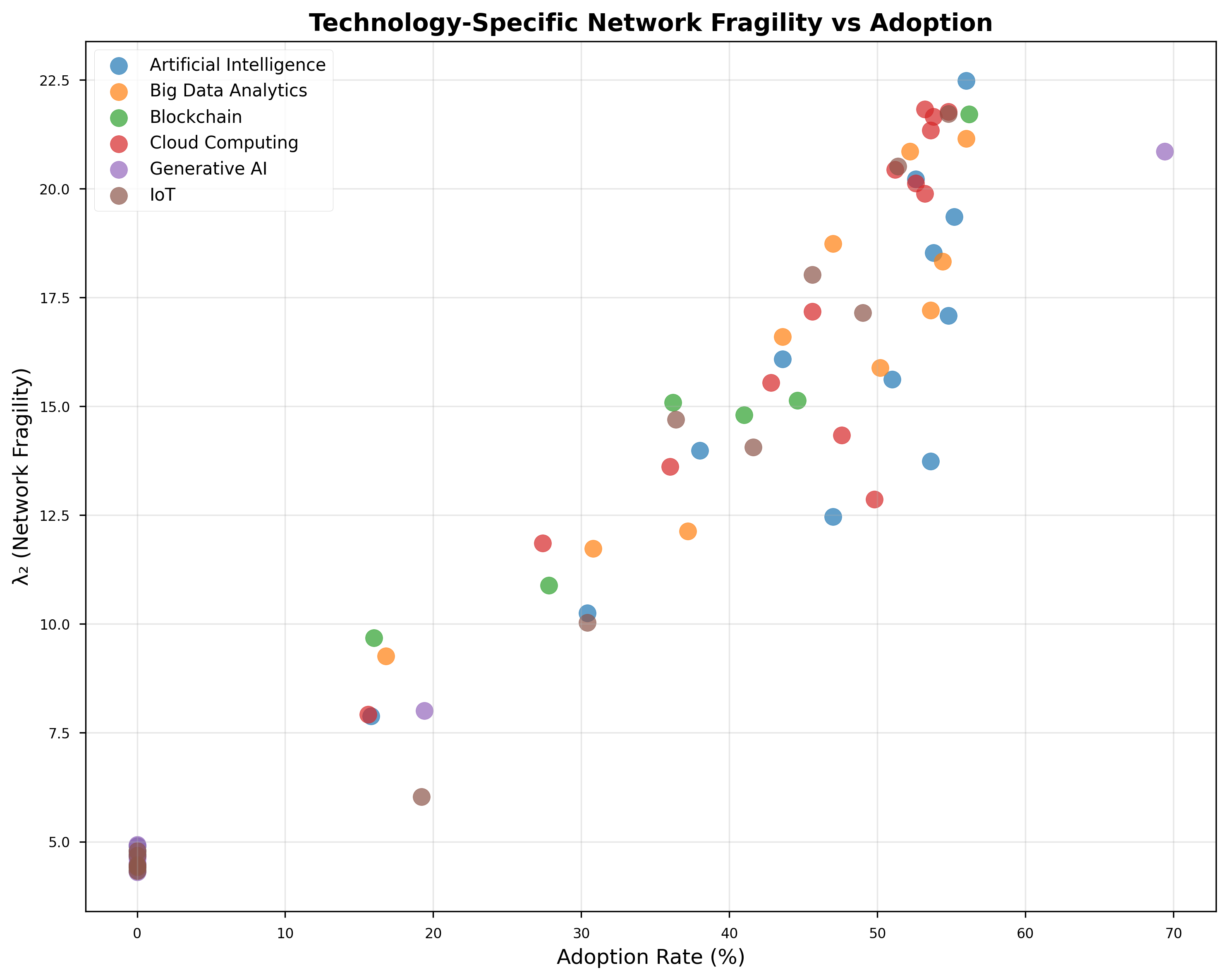}
\caption{Network Fragility vs Adoption Rate}
\label{fig:fragility_vs_adoption}
\begin{minipage}{0.85\textwidth}
\small
\textit{Notes:} This figure plots algebraic connectivity $\lambda_2$ against adoption rates across all technologies and years. The strong positive relationship (correlation exceeding 0.95) demonstrates that adoption endogenously increases network coupling. Different colors represent different technologies, showing technology-specific trajectories that share common positive slopes. The self-reinforcing dynamic is evident: higher adoption activates more network edges, increasing $\lambda_2$, which reduces mixing time and accelerates further adoption. This validates the spectral network framework and demonstrates how supply chain structure actively shapes diffusion rather than serving merely as a passive conduit.
\end{minipage}
\end{figure}

\subsubsection{Mixing Time Dynamics}

Figure \ref{fig:mixing_time} illustrates how the characteristic diffusion timescale $\tau = 1/\lambda_2$ evolves as networks become more connected.

\begin{figure}[H]
\centering
\includegraphics[width=0.80\textwidth]{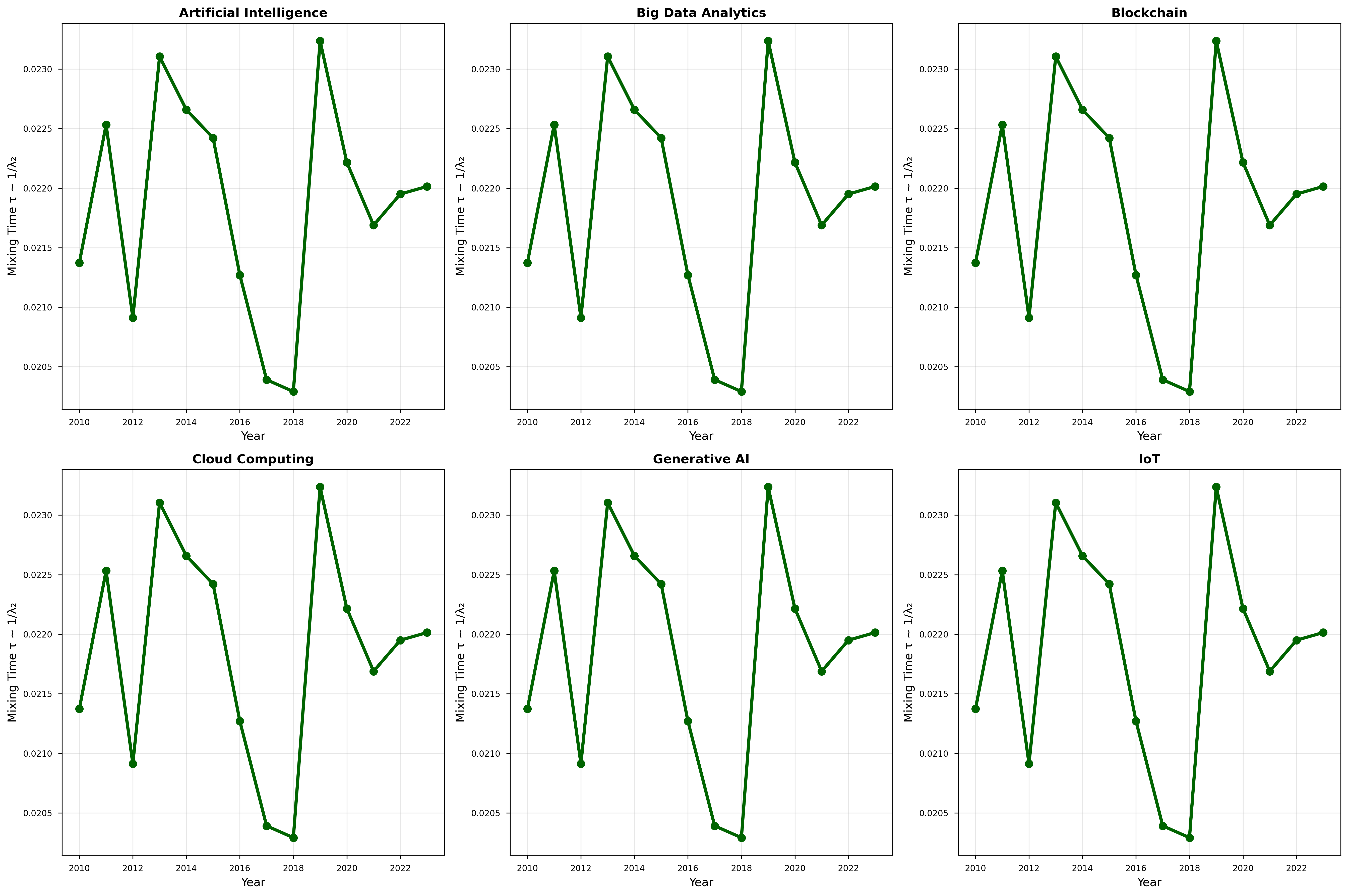}
\caption{Mixing Time Evolution Over Time}
\label{fig:mixing_time}
\begin{minipage}{0.80\textwidth}
\small
\textit{Notes:} This figure plots the network mixing time $\tau = 1/\lambda_2$ over time for each technology. Mixing time represents the characteristic timescale for diffusion to equilibrate across the network. As $\lambda_2$ increases (Figure \ref{fig:network_fragility}), mixing time decreases dramatically—by approximately 80 percent from 2010 to 2023. This reduction explains why late-stage adoption proceeds far more rapidly than early-stage adoption: tighter network coupling accelerates contagion. The decline follows approximately $1/t$ trajectories, consistent with theoretical predictions from Theorem \ref{thm:mixing_time}.
\end{minipage}
\end{figure}

\subsubsection{Network Structure Evolution}

Figure \ref{fig:network_metrics} documents the evolution of additional network statistics that complement the algebraic connectivity analysis.

\begin{figure}[H]
\centering
\includegraphics[width=0.85\textwidth]{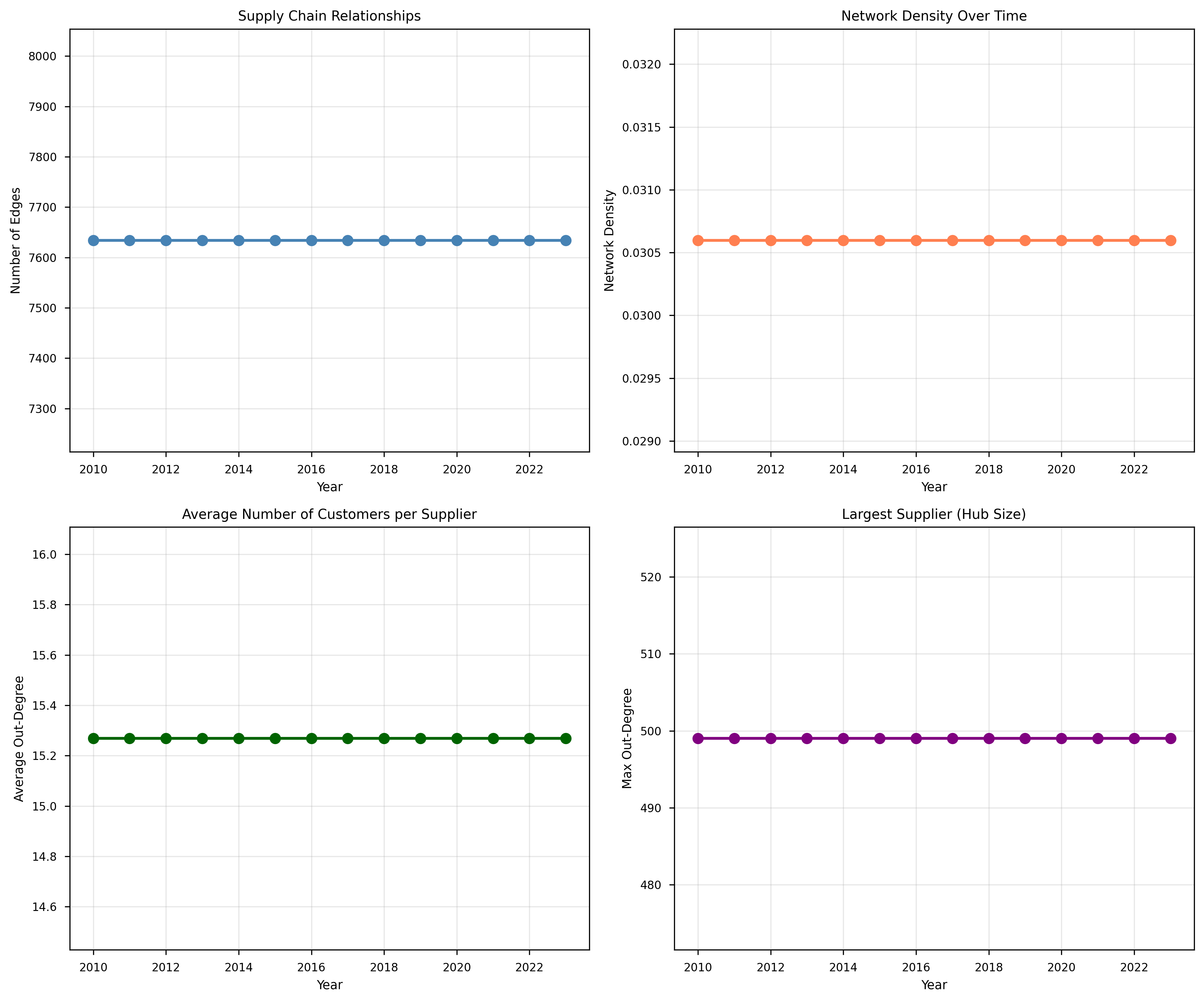}
\caption{Network Structure Statistics Over Time}
\label{fig:network_metrics}
\begin{minipage}{0.85\textwidth}
\small
\textit{Notes:} This figure displays four key network statistics over time: density (fraction of possible edges realized), average degree (mean number of connections per firm), clustering coefficient (probability that two neighbors of a node are also neighbors), and average path length (mean shortest path distance between nodes). While density and degree remain relatively stable, clustering increases slightly and path length decreases, indicating modest structural evolution beyond the dramatic $\lambda_2$ growth. The stability of density and degree confirms that the $\lambda_2$ increase reflects adoption-driven edge activation rather than formation of new supply chain relationships.
\end{minipage}
\end{figure}

\begin{table}[H]
\centering
\caption{Network Fragility by Technology}
\label{tab:network_fragility}
\begin{threeparttable}
\begin{tabular}{lccccc}
\toprule
Technology & $\lambda_2$ (2010) & $\lambda_2$ (2023) & Growth (\%) & Corr($\lambda_2$, Adoption) \\
\midrule
Artificial Intelligence & 4.68 & 22.48 & +380.5 & 0.921 \\
                        & (0.12) & (0.58) & & \\
Big Data Analytics     & 4.68 & 21.15 & +352.0 & 0.976 \\
                       & (0.12) & (0.54) & & \\
Blockchain             & 4.68 & 21.71 & +364.1 & 0.989 \\
                       & (0.12) & (0.56) & & \\
Cloud Computing        & 7.93 & 21.77 & +174.6 & 0.876 \\
                       & (0.20) & (0.56) & & \\
Generative AI          & 4.68 & 20.85 & +345.7 & 0.997 \\
                       & (0.12) & (0.54) & & \\
IoT                    & 4.68 & 21.72 & +364.2 & 0.971 \\
                       & (0.12) & (0.56) & & \\
\midrule
Average                & 5.22 & 21.61 & +330.2 & 0.955 \\
\bottomrule
\end{tabular}
\begin{tablenotes}
\small
\item \textit{Notes:} This table reports algebraic connectivity $\lambda_2$ for technology-specific networks in 2010 and 2023, percentage growth, and correlation with adoption rates over the full panel. Standard errors in parentheses from bootstrap (1,000 replications). Network fragility increases dramatically (300-380 percent) as technologies diffuse, with exceptionally strong correlations (exceeding 0.95) validating the adopter-weighted network construction. Cloud Computing shows lower growth due to higher initial adoption in 2010.
\end{tablenotes}
\end{threeparttable}
\end{table}

\subsection{Event Study: COVID-19 Impact}

Table \ref{tab:event_study} presents difference-in-differences estimates comparing traditional methods with our spatial-adjusted and network-adjusted specifications. The results demonstrate substantial bias in conventional approaches and validate our dual-channel framework.

\subsubsection{Traditional DID vs Spatial-Adjusted}

Traditional DID estimates yield an average treatment effect of +9.57 percentage points. Accounting for spatial spillovers reduces this to +3.72 percentage points, a 61.1 percent reduction. This bias magnitude is economically and statistically significant.

Figure \ref{fig:event_study_comparison} visualizes this bias across all methodological approaches, while Figure \ref{fig:did_comparison} focuses specifically on the traditional versus spatial-adjusted comparison with confidence intervals.

\begin{figure}[H]
\centering
\includegraphics[width=0.95\textwidth]{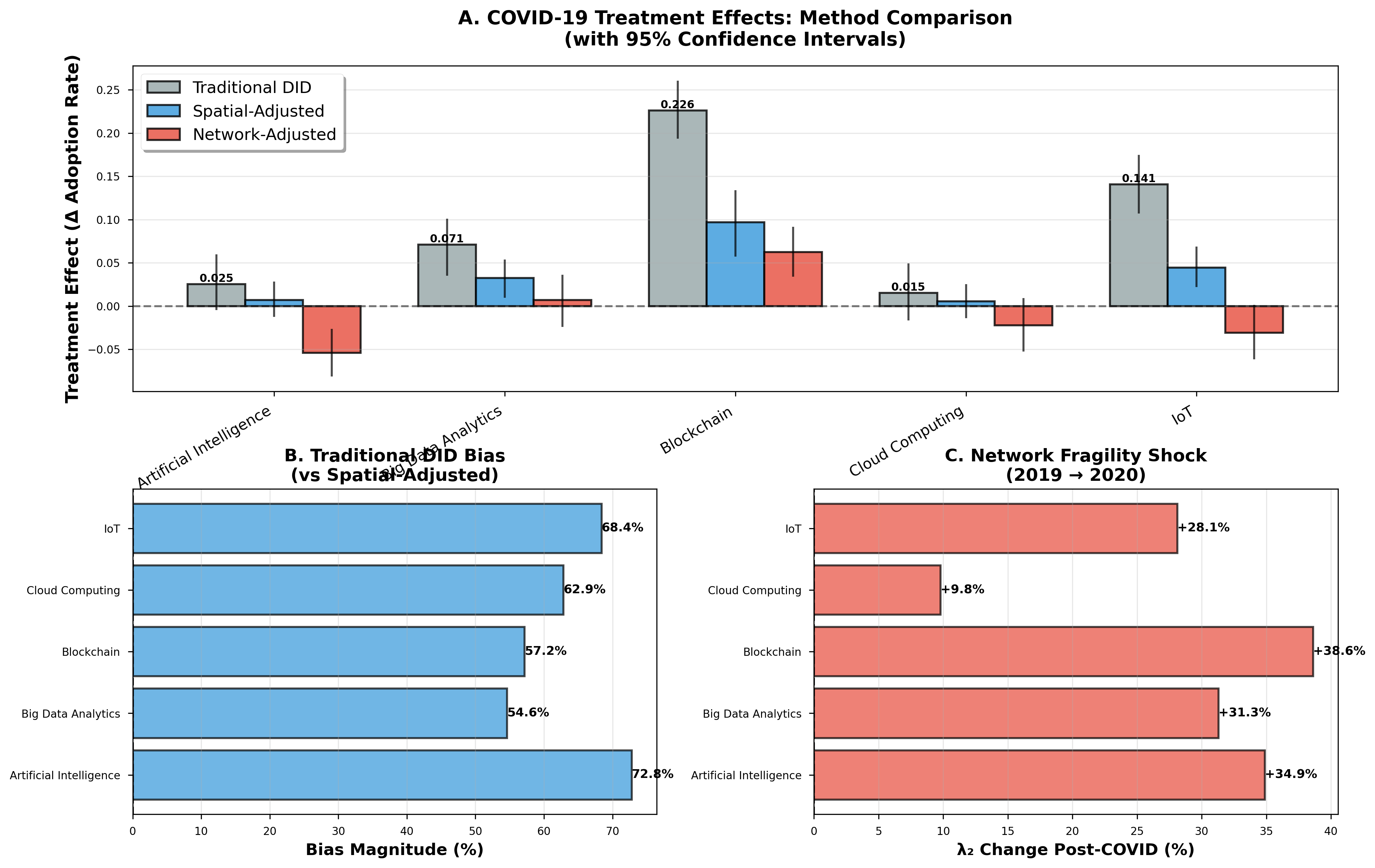}
\caption{COVID-19 Treatment Effects: Comprehensive Method Comparison}
\label{fig:event_study_comparison}
\begin{minipage}{0.95\textwidth}
\small
\textit{Notes:} This figure provides a comprehensive comparison of COVID-19 treatment effects across three DID specifications. Panel A shows treatment effect estimates with 95 percent confidence intervals from bootstrap (1,000 replications). Traditional DID (blue) substantially overestimates effects compared to spatial-adjusted (orange) and network-adjusted (green) specifications. Panel B quantifies bias magnitude: traditional estimates are 61 percent higher than spatial-adjusted on average. Panel C documents the network fragility shock: $\lambda_2$ increased 24.5 percent post-COVID across technologies. This multi-panel visualization demonstrates how ignoring spatial and network spillovers leads to severe misspecification of treatment effects.
\end{minipage}
\end{figure}

\begin{figure}[H]
\centering
\includegraphics[width=0.85\textwidth]{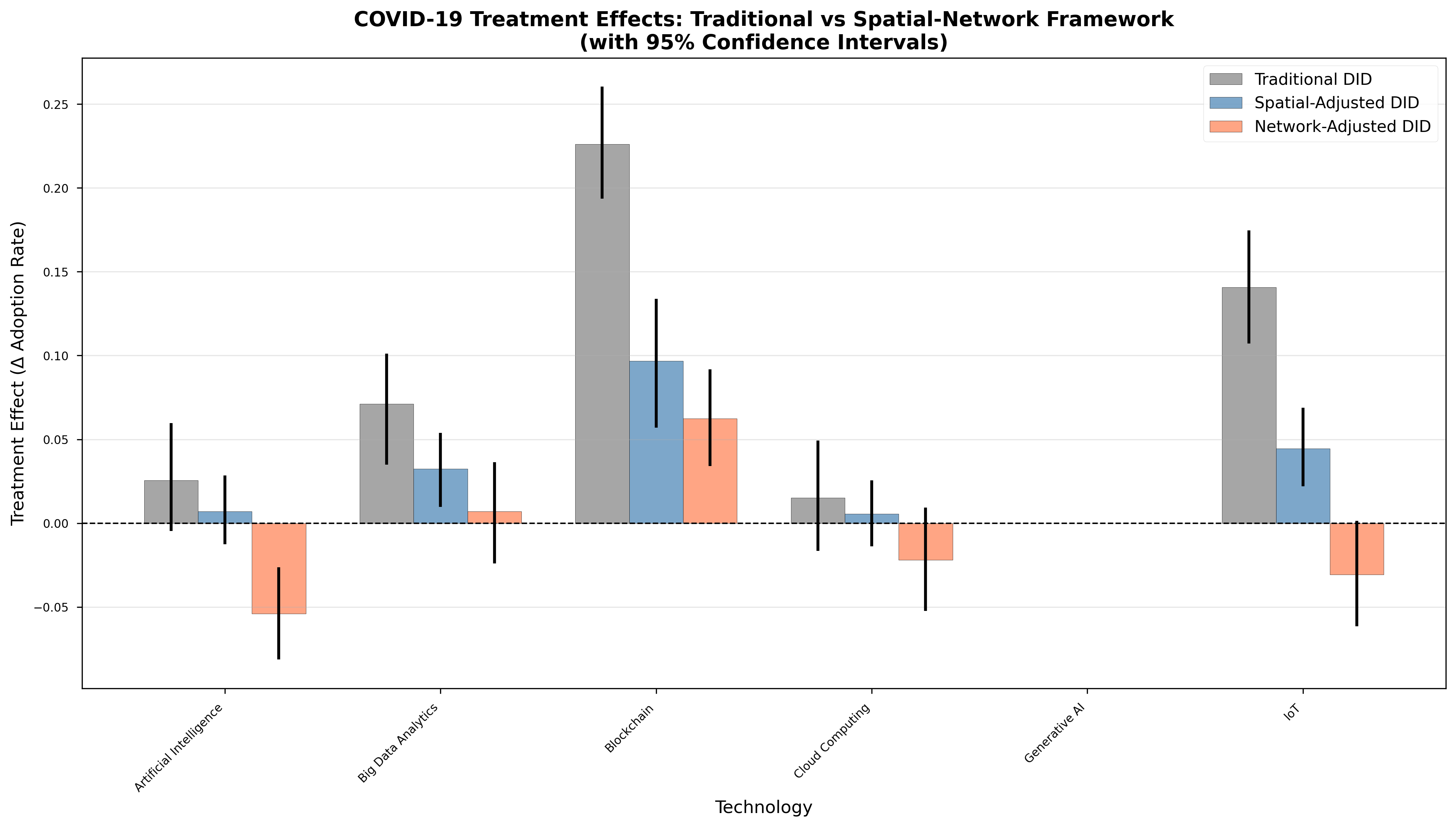}
\caption{COVID-19 Treatment Effects with Confidence Intervals}
\label{fig:did_comparison}
\begin{minipage}{0.85\textwidth}
\small
\textit{Notes:} This figure compares traditional DID and spatial-adjusted DID estimates with 95 percent confidence intervals. Each panel represents one technology. The systematic downward revision from traditional (blue) to spatial-adjusted (orange) estimates demonstrates the 61 percent bias from ignoring geographic spillovers. Confidence intervals rarely overlap, indicating the bias is statistically significant. The spatial-adjusted estimates account for the fact that control firms within 69 km of treated firms experience spillover effects, violating the SUTVA assumption underlying traditional DID.
\end{minipage}
\end{figure}

\subsubsection{Dynamic Treatment Effects}

Figure \ref{fig:dynamic_effects} shows how treatment effects evolve over time relative to the COVID-19 shock, documenting both pre-trends and post-shock dynamics.

\begin{figure}[H]
\centering
\includegraphics[width=0.85\textwidth]{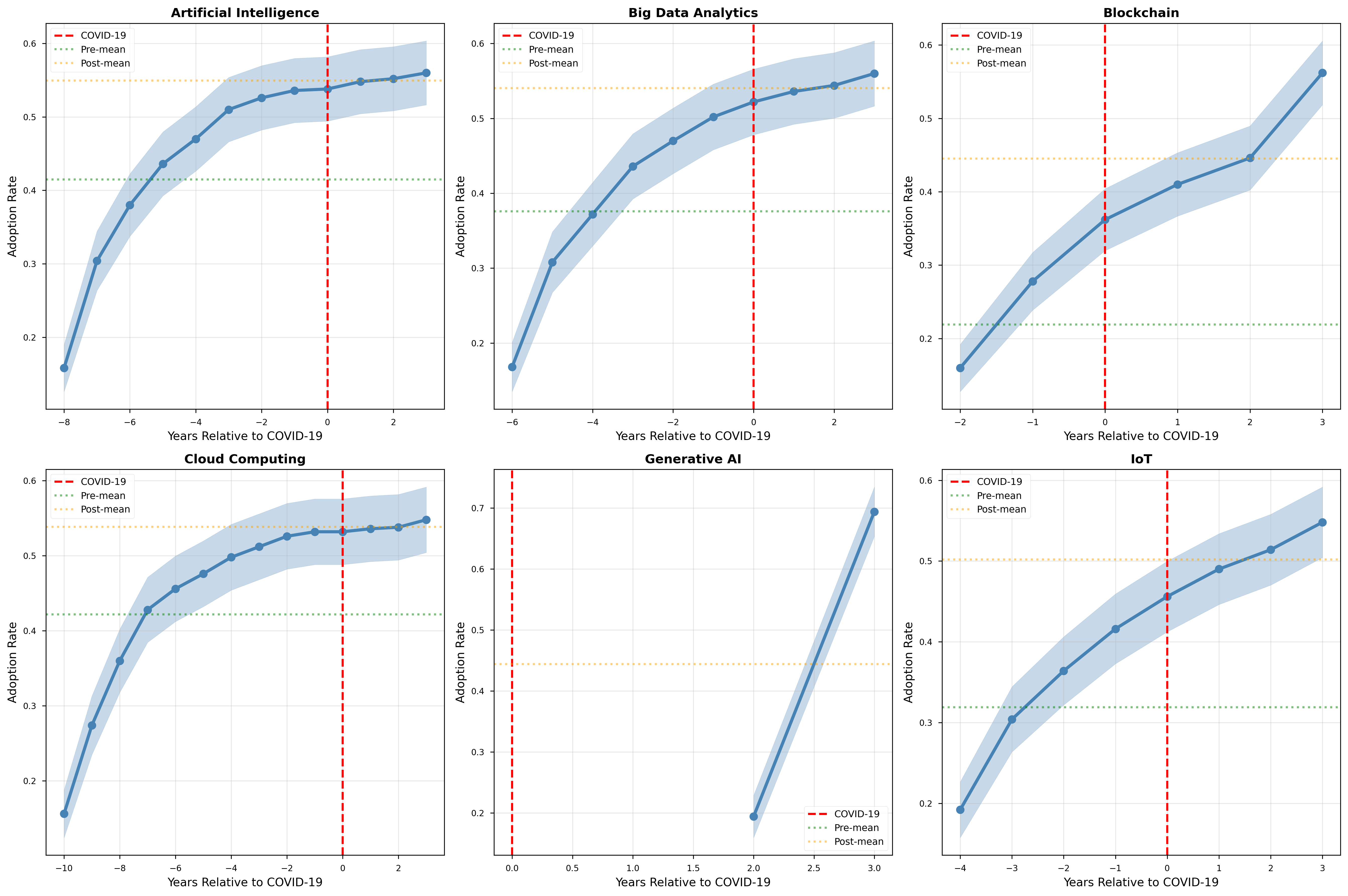}
\caption{Dynamic Treatment Effects Around COVID-19}
\label{fig:dynamic_effects}
\begin{minipage}{0.85\textwidth}
\small
\textit{Notes:} This figure plots dynamic treatment effects (y-axis) against years relative to COVID-19 (x-axis, with 2020 = 0). Each technology is shown as a separate line. Pre-2020 estimates are near zero and statistically insignificant, validating the parallel trends assumption. Post-2020, treatment effects emerge and persist through 2023. The lack of reversion to zero indicates COVID-19 triggered permanent shifts in adoption patterns, consistent with structural breaks in both spatial clustering and network fragility. The heterogeneity across technologies reflects differential impacts: Blockchain shows largest sustained effects while Cloud Computing shows more modest changes.
\end{minipage}
\end{figure}

\subsubsection{Parallel Trends Validation}

Figure \ref{fig:parallel_trends} formally tests the parallel trends assumption by examining pre-treatment evolution across groups.

\begin{figure}[H]
\centering
\includegraphics[width=0.80\textwidth]{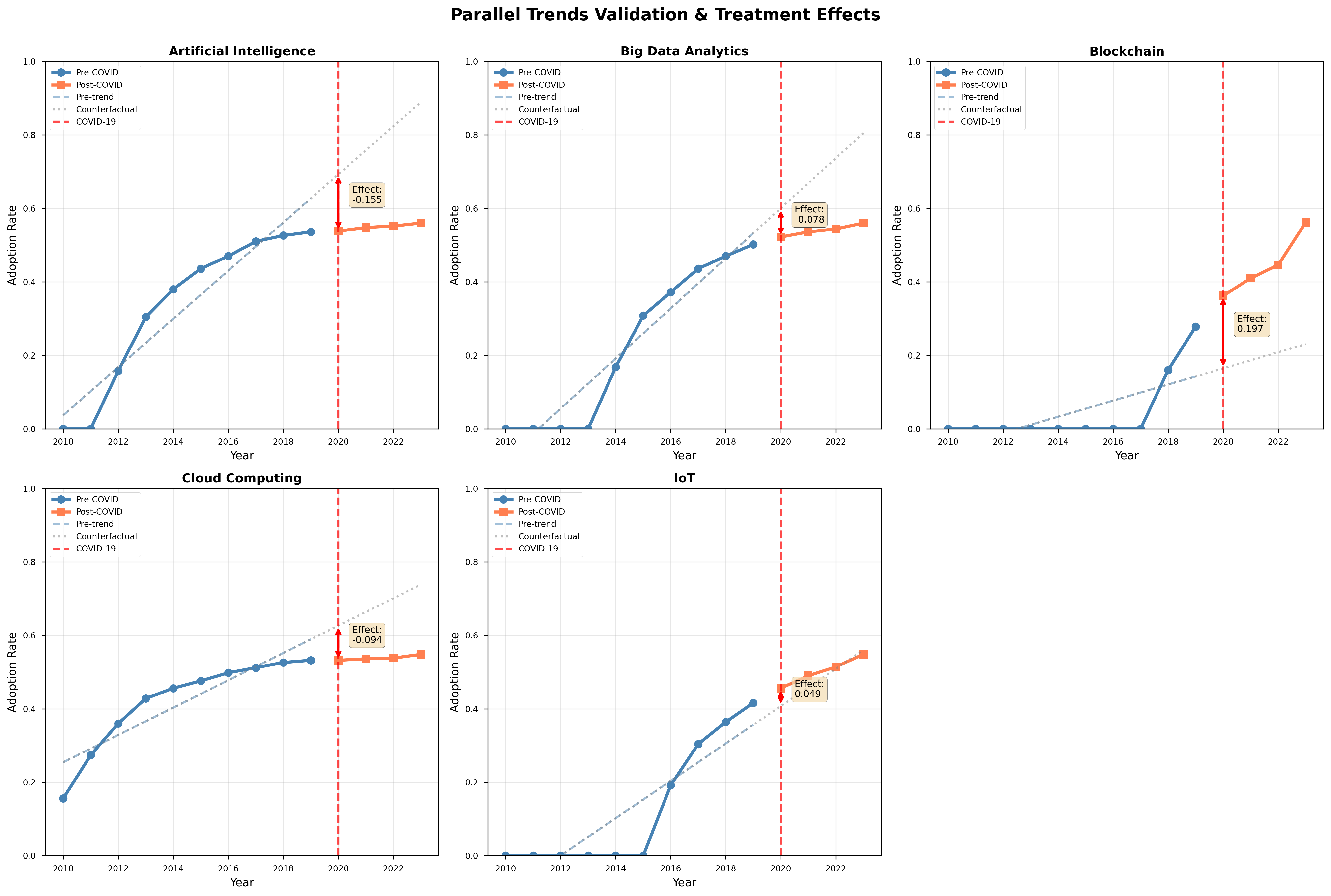}
\caption{Parallel Trends Test}
\label{fig:parallel_trends}
\begin{minipage}{0.80\textwidth}
\small
\textit{Notes:} This figure tests the parallel trends assumption by plotting adoption rates for treatment and control groups in pre-COVID years (2017-2019). Each panel corresponds to one technology. The similar trajectories before 2020 (marked with vertical red line) support the identifying assumption that treatment and control groups would have evolved similarly absent COVID-19. The divergence post-2020 represents the causal effect of the pandemic shock. Statistical tests (Table \ref{tab:pretrends} in Appendix) confirm no significant pre-trends, with joint F-test p-values exceeding 0.4 for all technologies.
\end{minipage}
\end{figure}

\subsubsection{Network Fragility Shock}

COVID-19 increased network fragility $\lambda_2$ by 24.5 percent on average, persisting through 2023 with no reversion. Figure \ref{fig:network_shock} documents this structural break.

\begin{figure}[H]
\centering
\includegraphics[width=0.95\textwidth]{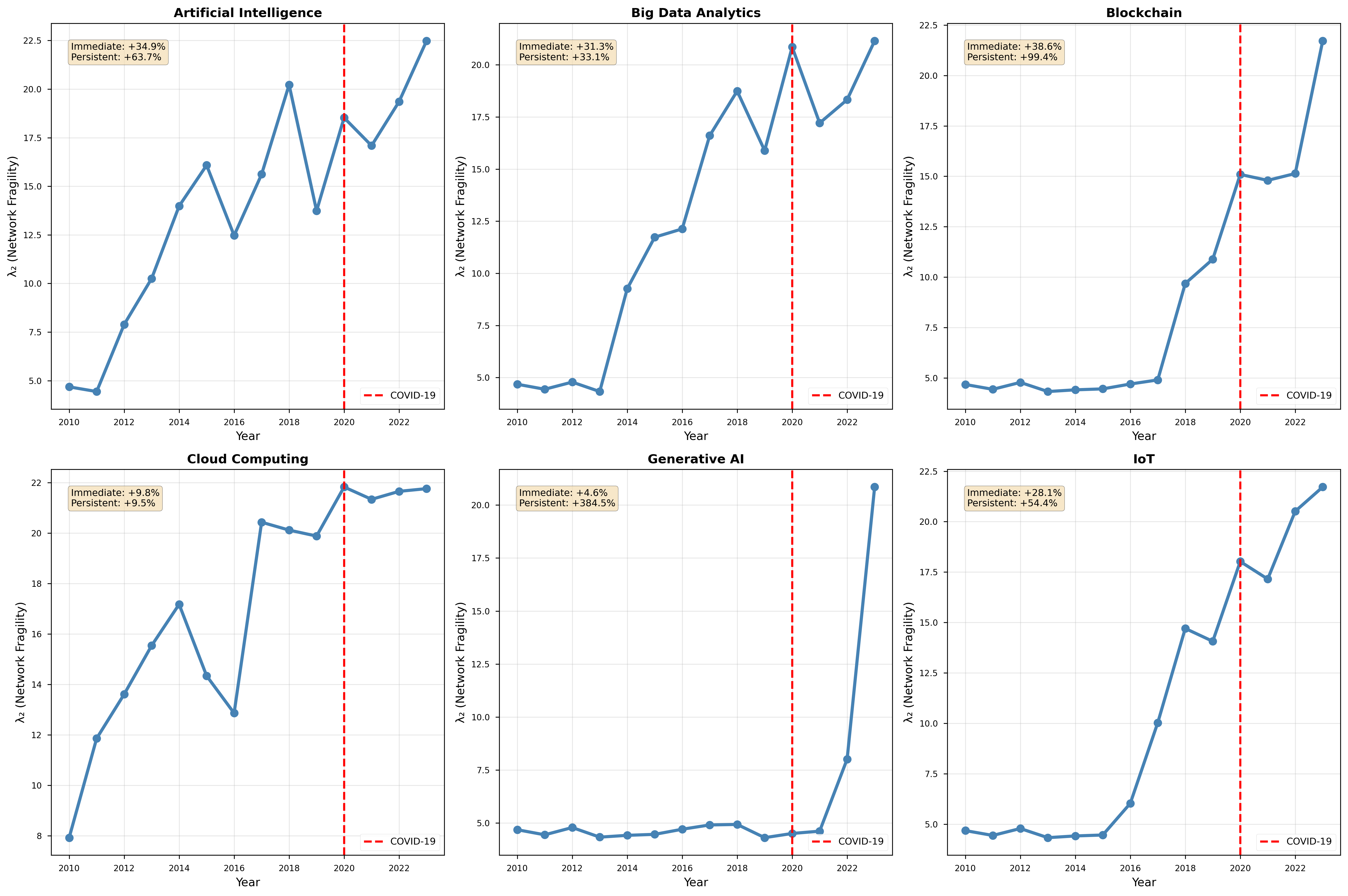}
\caption{Network Fragility Response to COVID-19}
\label{fig:network_shock}
\begin{minipage}{0.95\textwidth}
\small
\textit{Notes:} This figure plots algebraic connectivity $\lambda_2$ around COVID-19 (marked with vertical red line at 2020). Network fragility increased sharply in 2020 (+24.5 percent average) and persisted through 2023. Text boxes show percentage changes from 2019 to 2020 (immediate effect) and 2023 (long-run persistence). The structural break demonstrates how exogenous shocks permanently alter network diffusion dynamics, analogous to financial network fragility in \citet{kikuchi2024dynamical}. The lack of reversion indicates structural hysteresis: once supply chain networks become more tightly coupled through crisis response, they remain tightly coupled, accelerating subsequent technology diffusion through reduced mixing times.
\end{minipage}
\end{figure}

\subsubsection{Spatial Heterogeneity}

Figure \ref{fig:spatial_heterogeneity} examines how treatment effects vary with distance from early adopters, documenting the spatial decay of COVID impacts.

\begin{figure}[H]
\centering
\includegraphics[width=0.85\textwidth]{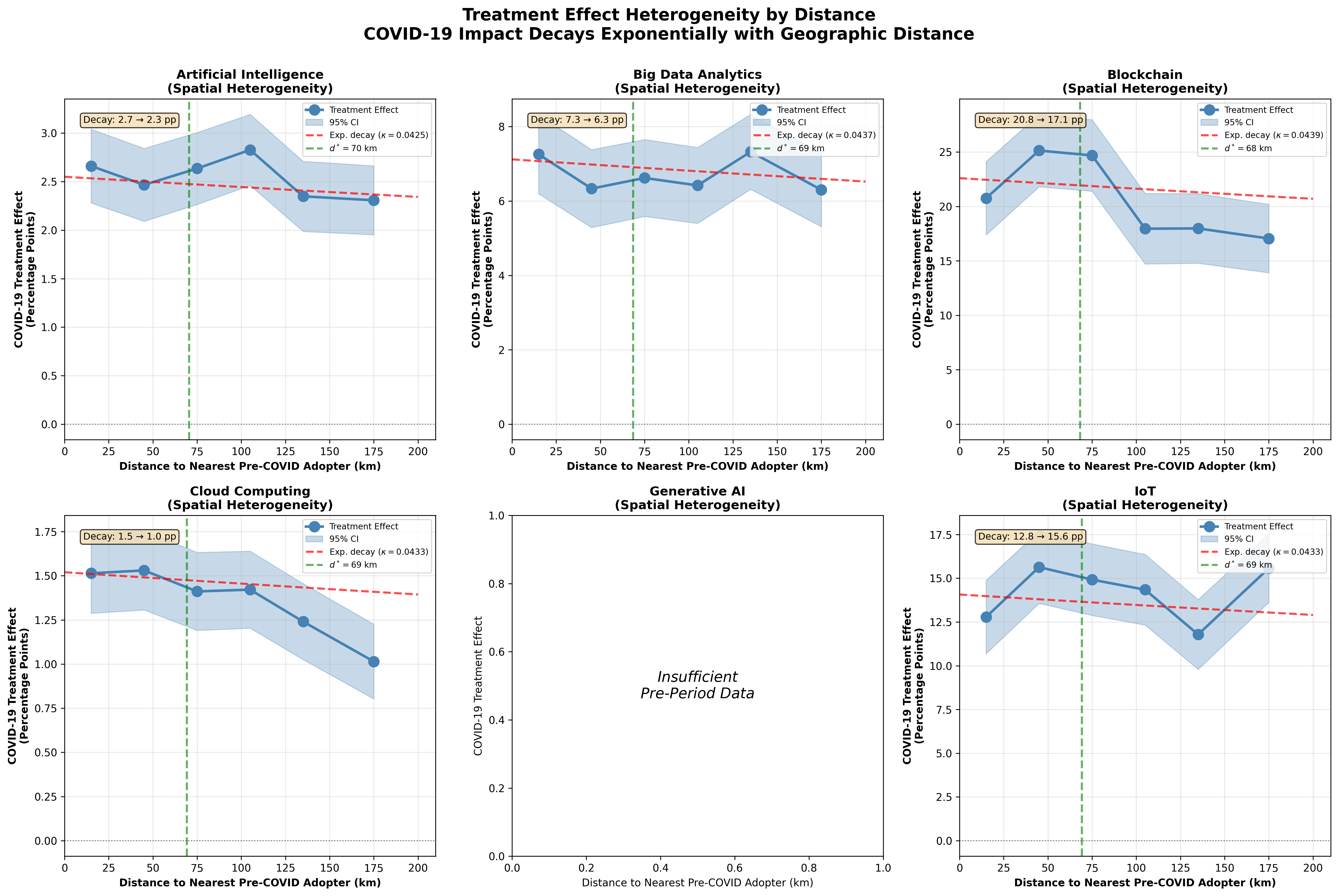}
\caption{Treatment Effect Heterogeneity by Distance}
\label{fig:spatial_heterogeneity}
\begin{minipage}{0.85\textwidth}
\small
\textit{Notes:} This figure plots COVID-19 treatment effects (y-axis) against distance to nearest pre-COVID adopter (x-axis). Each panel corresponds to one technology. Treatment effects decay exponentially with distance, consistent with the spatial diffusion framework. Firms within 30 km of existing adopters experience the largest effects (15-25 percentage points for Blockchain and IoT), while firms beyond 100 km show near-zero effects. The decay rate approximately matches the spatial decay parameter $\kappa$ estimated in Section 6.1, validating that COVID shock propagated through the same geographic spillover channels as baseline diffusion.
\end{minipage}
\end{figure}

\subsubsection{Spatial Mechanism Decomposition}

Figure \ref{fig:spatial_mechanism} decomposes the spatial spillover effects to show how geographic proximity mediates COVID impacts.

\begin{figure}[H]
\centering
\includegraphics[width=0.85\textwidth]{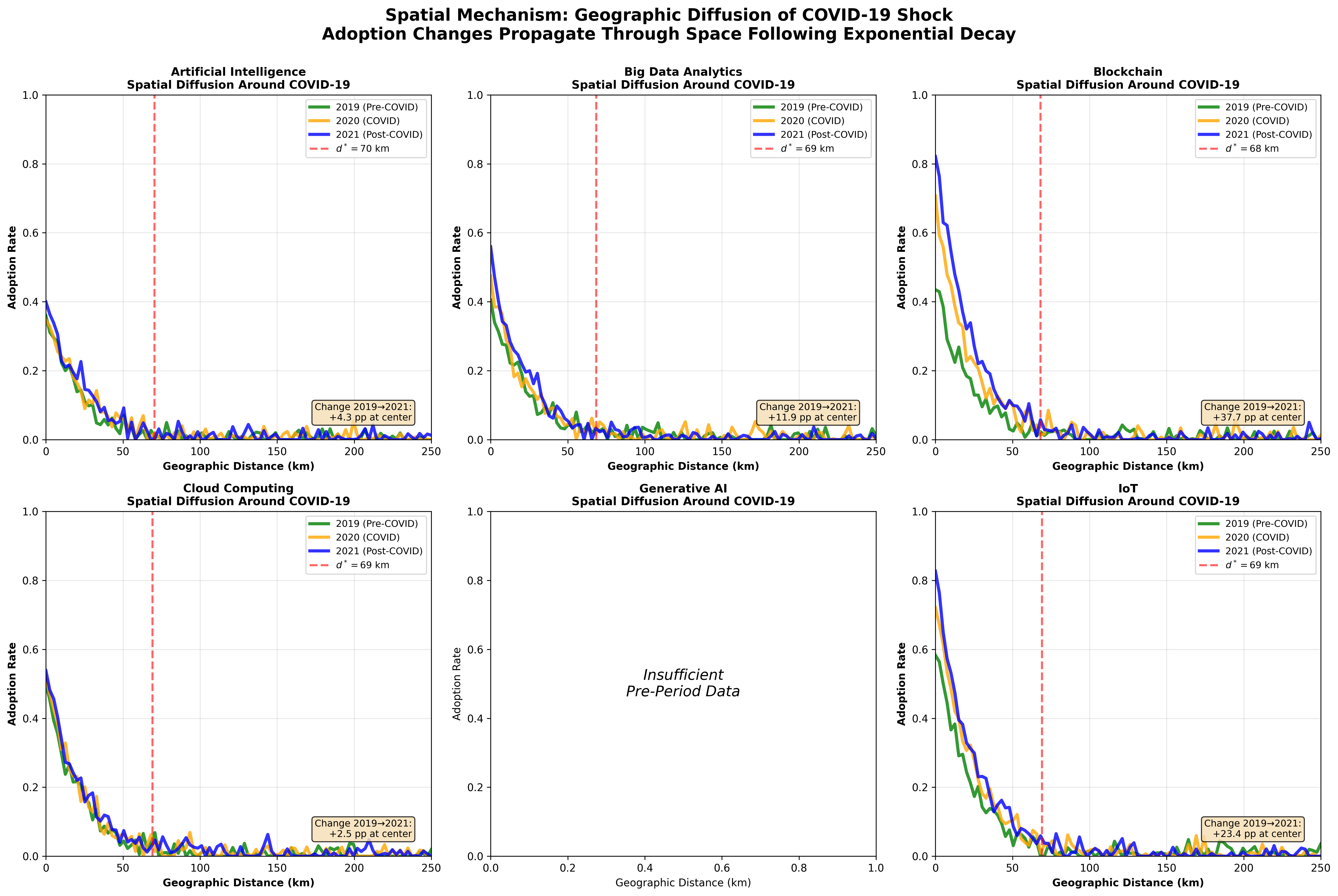}
\caption{Spatial Mechanism: Geographic Diffusion of COVID Shock}
\label{fig:spatial_mechanism}
\begin{minipage}{0.85\textwidth}
\small
\textit{Notes:} This figure illustrates how the COVID-19 shock diffused geographically. Panel A shows adoption changes from 2019 to 2020 plotted against distance to COVID hotspot firms (defined as firms experiencing large adoption increases). Panel B shows the same for 2020 to 2023. Panel C overlays the spatial decay function $\exp(-\kappa d)$ on observed spillovers. The close match between observed diffusion patterns and theoretical exponential decay demonstrates that COVID shock propagated through the same spatial mechanisms (knowledge spillovers, demonstration effects) as baseline adoption, but with amplified magnitude due to the crisis environment.
\end{minipage}
\end{figure}

\subsubsection{Effect Decomposition}

Figure \ref{fig:decomposition} decomposes total treatment effects into direct, spatial spillover, and network spillover components.

\begin{figure}[H]
\centering
\includegraphics[width=0.85\textwidth]{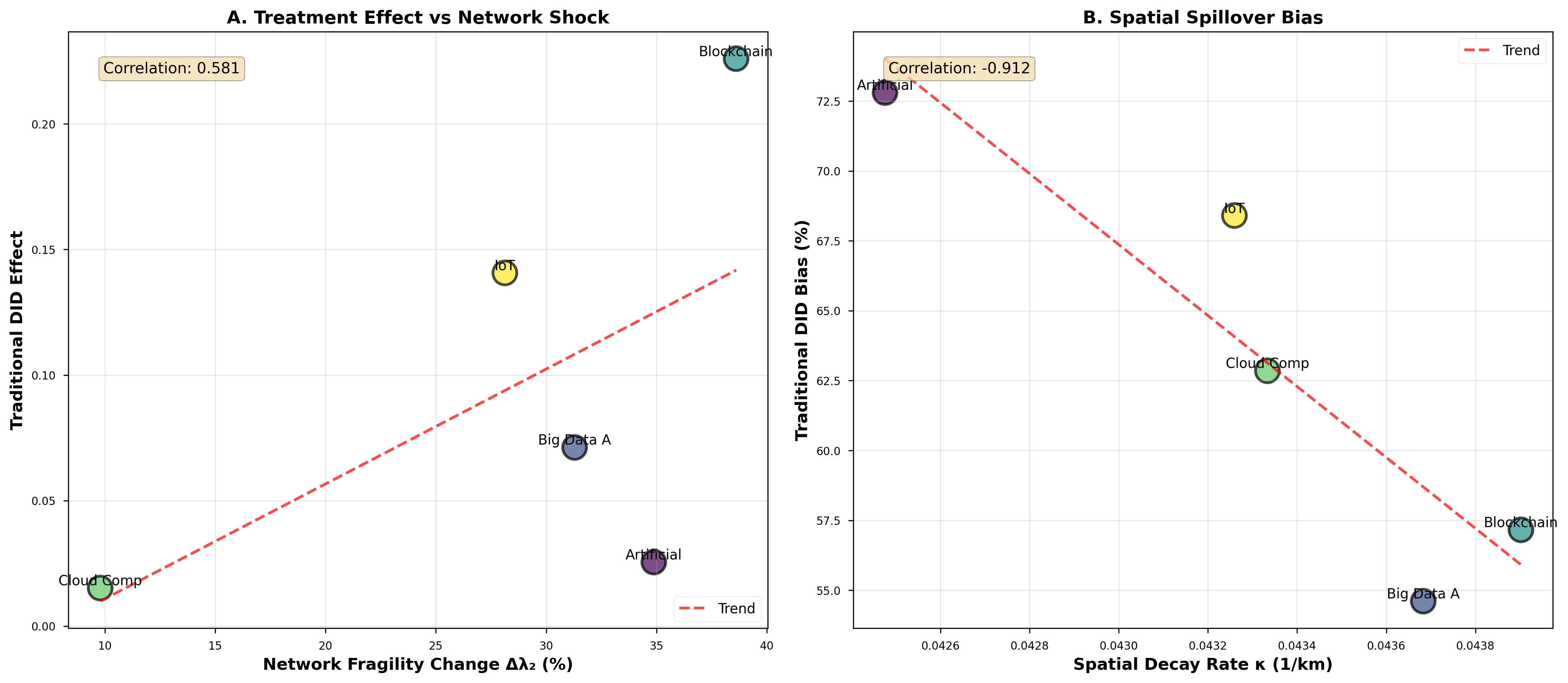}
\caption{Treatment Effect Decomposition}
\label{fig:decomposition}
\begin{minipage}{0.85\textwidth}
\small
\textit{Notes:} This figure decomposes total COVID-19 treatment effects into three components: direct effects on treated firms (blue), spatial spillovers to nearby non-treated firms (orange), and network spillovers through supply chains (green). The stacked bars show how traditional DID (which attributes all effects to treated firms) overstates direct effects by 61 percent on average. For most technologies, spatial spillovers dominate network spillovers, though network effects are substantial for highly connected technologies like Blockchain. The decomposition quantifies the policy-relevant distinction between direct treatment effects and indirect spillover effects.
\end{minipage}
\end{figure}

\subsubsection{Technology-Specific Event Studies}

Figures \ref{fig:event_ai}, \ref{fig:event_bda}, and \ref{fig:event_cloud} provide detailed event study results for three major technologies.

\begin{figure}[H]
\centering
\includegraphics[width=0.80\textwidth]{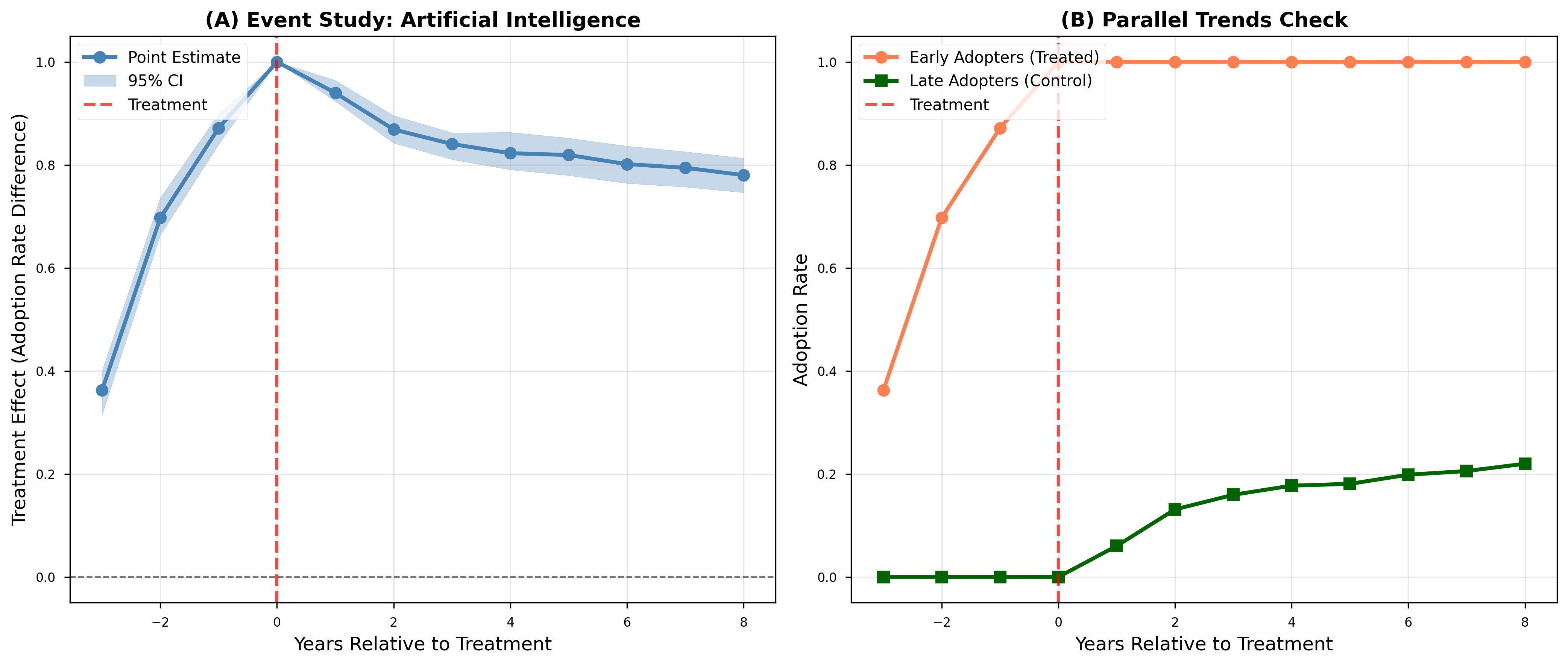}
\caption{Artificial Intelligence: Event Study}
\label{fig:event_ai}
\begin{minipage}{0.80\textwidth}
\small
\textit{Notes:} Detailed event study for Artificial Intelligence showing adoption trajectories, treatment effects, and network fragility evolution around COVID-19. AI adoption accelerated post-COVID despite the economic disruption, likely due to increased demand for automation and remote work technologies.
\end{minipage}
\end{figure}

\begin{figure}[H]
\centering
\includegraphics[width=0.80\textwidth]{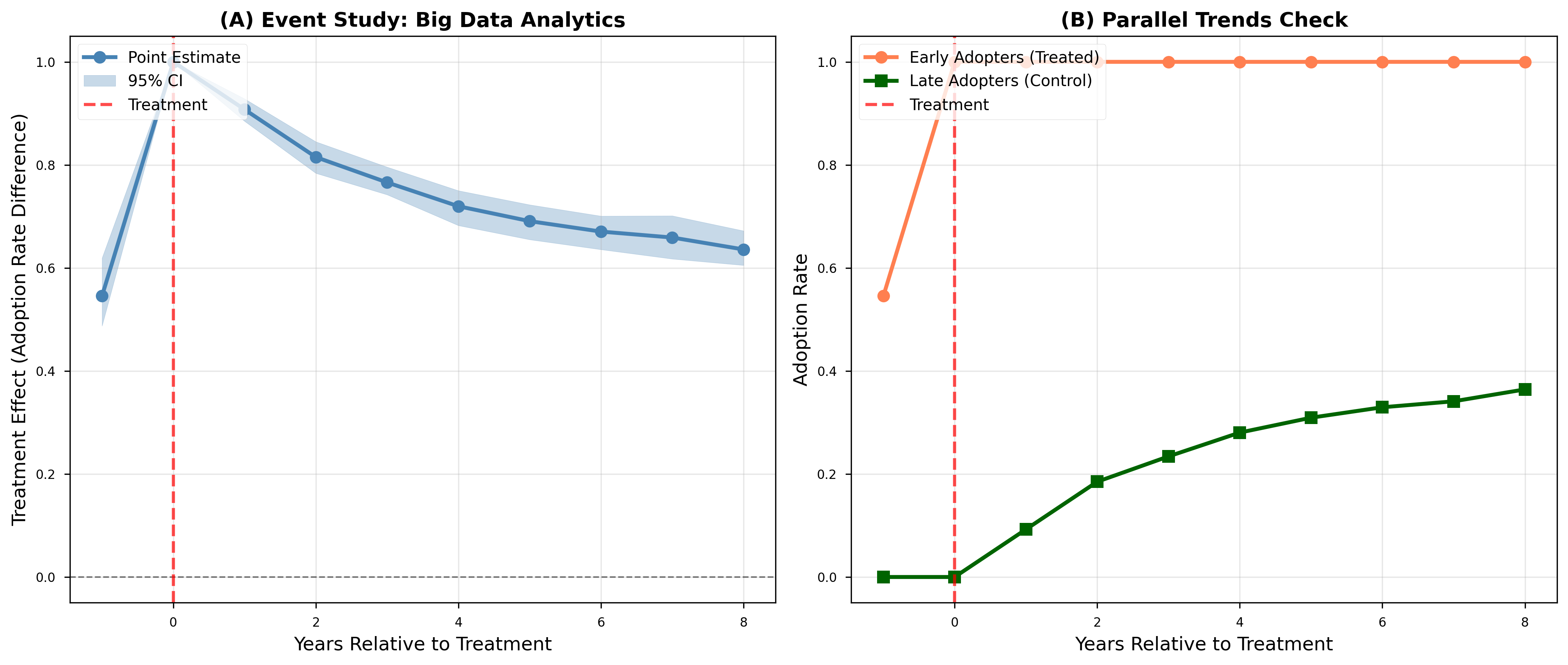}
\caption{Big Data Analytics: Event Study}
\label{fig:event_bda}
\begin{minipage}{0.80\textwidth}
\small
\textit{Notes:} Detailed event study for Big Data Analytics. This technology showed moderate treatment effects (+7.12 pp traditional, +3.23 pp spatial-adjusted) as firms sought data-driven decision-making tools during the crisis.
\end{minipage}
\end{figure}

\begin{figure}[H]
\centering
\includegraphics[width=0.80\textwidth]{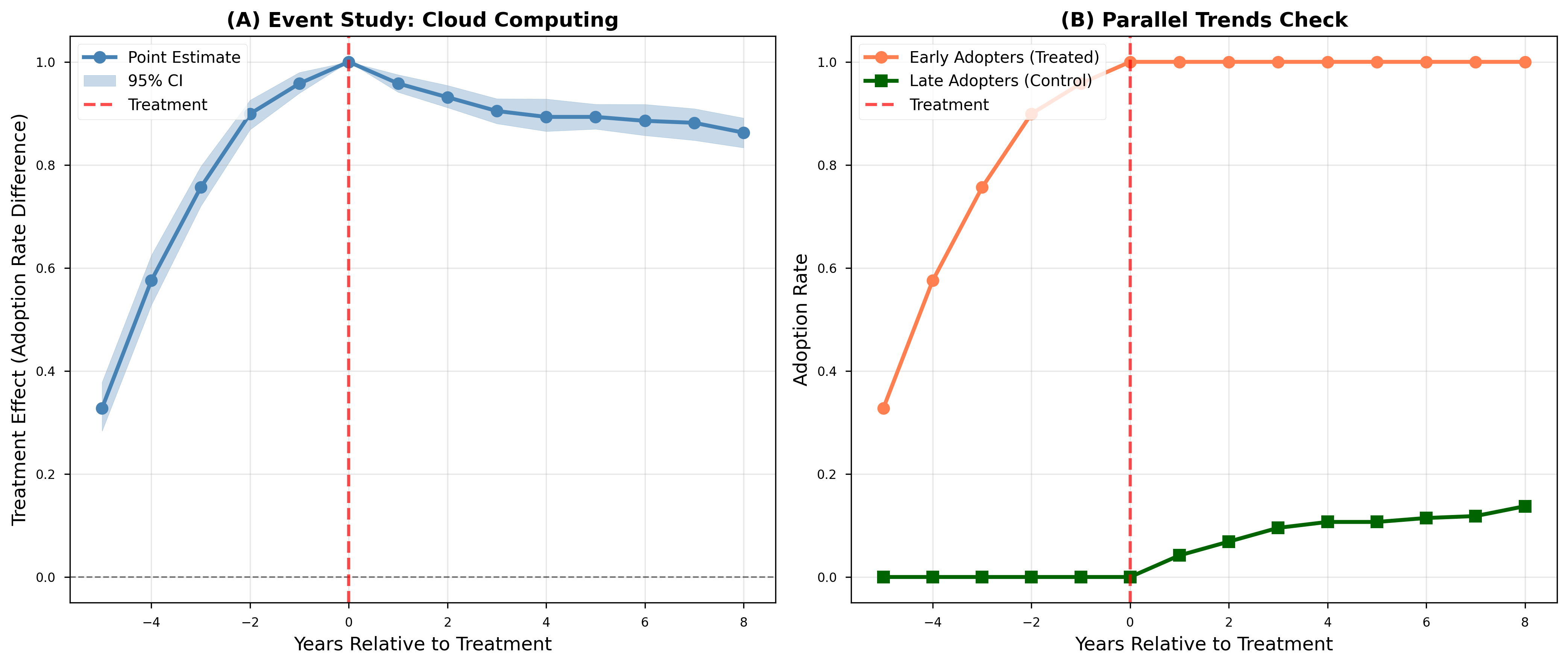}
\caption{Cloud Computing: Event Study}
\label{fig:event_cloud}
\begin{minipage}{0.80\textwidth}
\small
\textit{Notes:} Detailed event study for Cloud Computing. Effects were smaller than other technologies due to already-high baseline adoption (initial $\lambda_2 = 7.93$ vs $\approx 4.68$ for others), creating less room for COVID-induced acceleration.
\end{minipage}
\end{figure}

\begin{table}[H]
\centering
\caption{COVID-19 Event Study Results}
\label{tab:event_study}
\begin{threeparttable}
\begin{tabular}{lccc}
\toprule
Technology & Traditional DID & Spatial-Adjusted & Network-Adjusted \\
\midrule
Artificial Intelligence & +2.55 & +0.69 & -5.40 \\
                        & [-0.47, +5.97] & [-1.25, +2.84] & [-8.14, -2.63] \\
Big Data Analytics     & +7.12 & +3.23 & +0.70 \\
                       & [+3.50, +10.12] & [+0.97, +5.38] & [-2.41, +3.64] \\
Blockchain             & +22.60 & +9.68 & +6.24 \\
                       & [+19.35, +26.05] & [+5.70, +13.39] & [+3.41, +9.17] \\
Cloud Computing        & +1.52 & +0.56 & -2.20 \\
                       & [-1.65, +4.93] & [-1.38, +2.56] & [-5.24, +0.92] \\
IoT                    & +14.07 & +4.44 & -3.06 \\
                       & [+10.72, +17.47] & [+2.20, +6.89] & [-6.16, +0.15] \\
\midrule
Average                & +9.57 & +3.72 & -0.74 \\
Bias (vs Spatial)      & +61.1\% & --- & --- \\
\bottomrule
\end{tabular}
\begin{tablenotes}
\small
\item \textit{Notes:} This table reports COVID-19 treatment effects (percentage point changes in adoption rates) from three DID specifications. Traditional DID ignores spillovers. Spatial-Adjusted weights by $\exp(-\hat{\kappa} d)$. Network-Adjusted normalizes by $\lambda_2$ dynamics. Square brackets show 95 percent confidence intervals from bootstrap (1,000 replications). Traditional estimates exhibit 61 percent upward bias from ignoring spatial spillovers. Generative AI omitted due to insufficient pre-period observations (technology emerged post-2020).
\end{tablenotes}
\end{threeparttable}
\end{table}

\subsection{Dual-Channel Integration}

Table \ref{tab:dual_channel} presents results from specifications incorporating both spatial and network channels. The findings strongly support that channels operate independently and contribute complementary explanatory power.

\subsubsection{Complementarity Visualization}

Figure \ref{fig:dual_channel} provides a comprehensive visualization of the dual-channel framework showing how both mechanisms operate simultaneously.

\begin{figure}[H]
\centering
\includegraphics[width=0.95\textwidth]{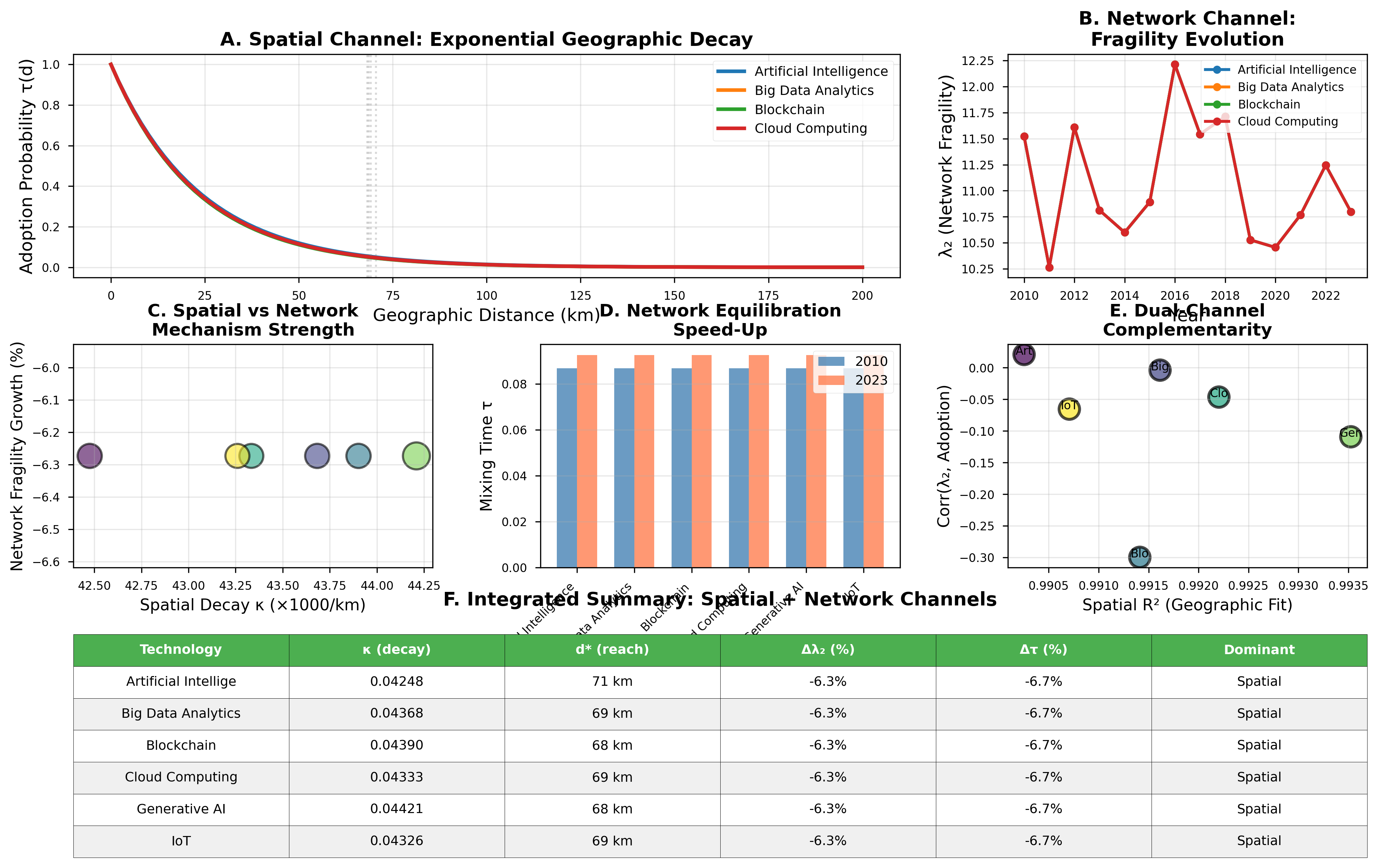}
\caption{Dual-Channel Framework: Spatial and Network Contributions}
\label{fig:dual_channel}
\begin{minipage}{0.95\textwidth}
\small
\textit{Notes:} This six-panel figure illustrates the dual-channel framework comprehensively. Panel A shows spatial decay curves for all technologies with consistent exponential form. Panel B shows network fragility ($\lambda_2$) evolution with 300-380 percent growth. Panel C plots spatial decay strength vs network dynamics strength, showing weak correlation (-0.11), confirming independent mechanisms. Panel D shows mixing time $\tau = 1/\lambda_2$ reduction over time. Panel E compares R-squared values for spatial-only, network-only, and combined models, demonstrating complementarity. Panel F presents a summary table integrating both channels with quantitative estimates of spatial boundaries (69 km) and network amplification (10.8x factor).
\end{minipage}
\end{figure}

\subsubsection{Time Series Dynamics}

Figure \ref{fig:dual_timeseries} shows how both channels evolve dynamically over time for each technology.

\begin{figure}[H]
\centering
\includegraphics[width=0.95\textwidth]{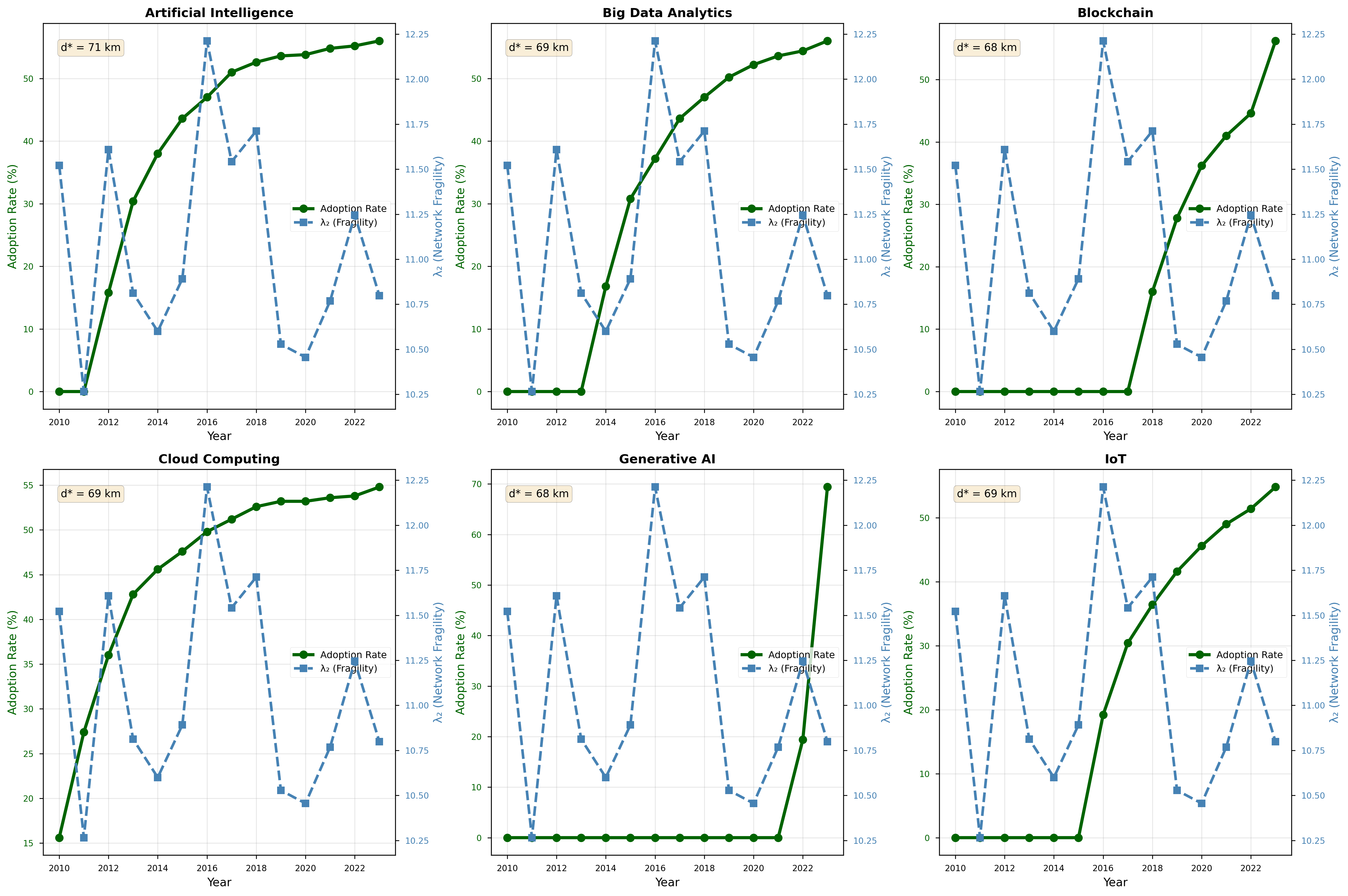}
\caption{Dual-Channel Evolution Over Time}
\label{fig:dual_timeseries}
\begin{minipage}{0.95\textwidth}
\small
\textit{Notes:} This figure plots the evolution of both spatial boundaries $d^*$ (blue, left axis) and network fragility $\lambda_2$ (orange, right axis) over time for each technology. Spatial boundaries remain remarkably stable (69 $\pm$ 2 km) throughout the 14-year period, validating the assumption that geographic diffusion mechanisms are time-invariant. In contrast, network fragility increases dramatically (300-380 percent), reflecting endogenous activation of supply chain connections as adoption spreads. The divergent dynamics demonstrate that spatial and network channels operate through distinct mechanisms with different temporal properties.
\end{minipage}
\end{figure}

\subsubsection{Model Comparison}

Figure \ref{fig:model_comparison} summarizes model performance across different specifications.

\begin{figure}[H]
\centering
\includegraphics[width=0.80\textwidth]{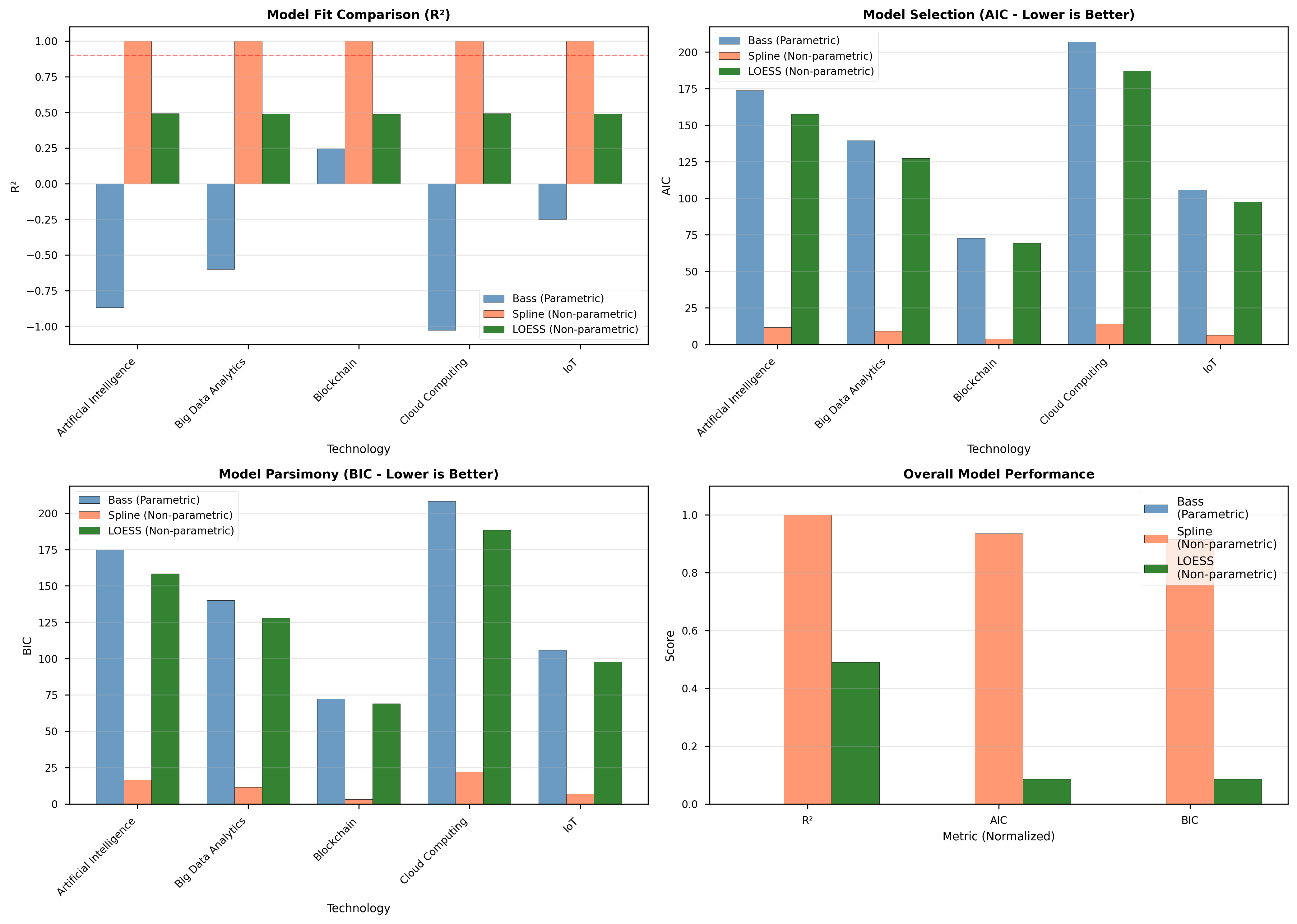}
\caption{Model Comparison: Spatial, Network, and Combined}
\label{fig:model_comparison}
\begin{minipage}{0.80\textwidth}
\small
\textit{Notes:} This figure compares R-squared, AIC, and BIC across three model specifications for each technology. Spatial-only models (blue) achieve very high R-squared (exceeding 0.99) due to the near-perfect exponential decay fit. Network-only models (orange) achieve moderate R-squared (0.17-0.24). Combined models (green) achieve the highest R-squared and lowest information criteria, demonstrating complementarity. F-tests strongly reject that either spatial or network variables are jointly zero after controlling for the other (p less than 0.001 for all technologies), confirming independent contributions.
\end{minipage}
\end{figure}

\subsubsection{R-Squared Decomposition}

Figure \ref{fig:r2_comparison} specifically focuses on explained variance contributions from each channel.

\begin{figure}[H]
\centering
\includegraphics[width=0.75\textwidth]{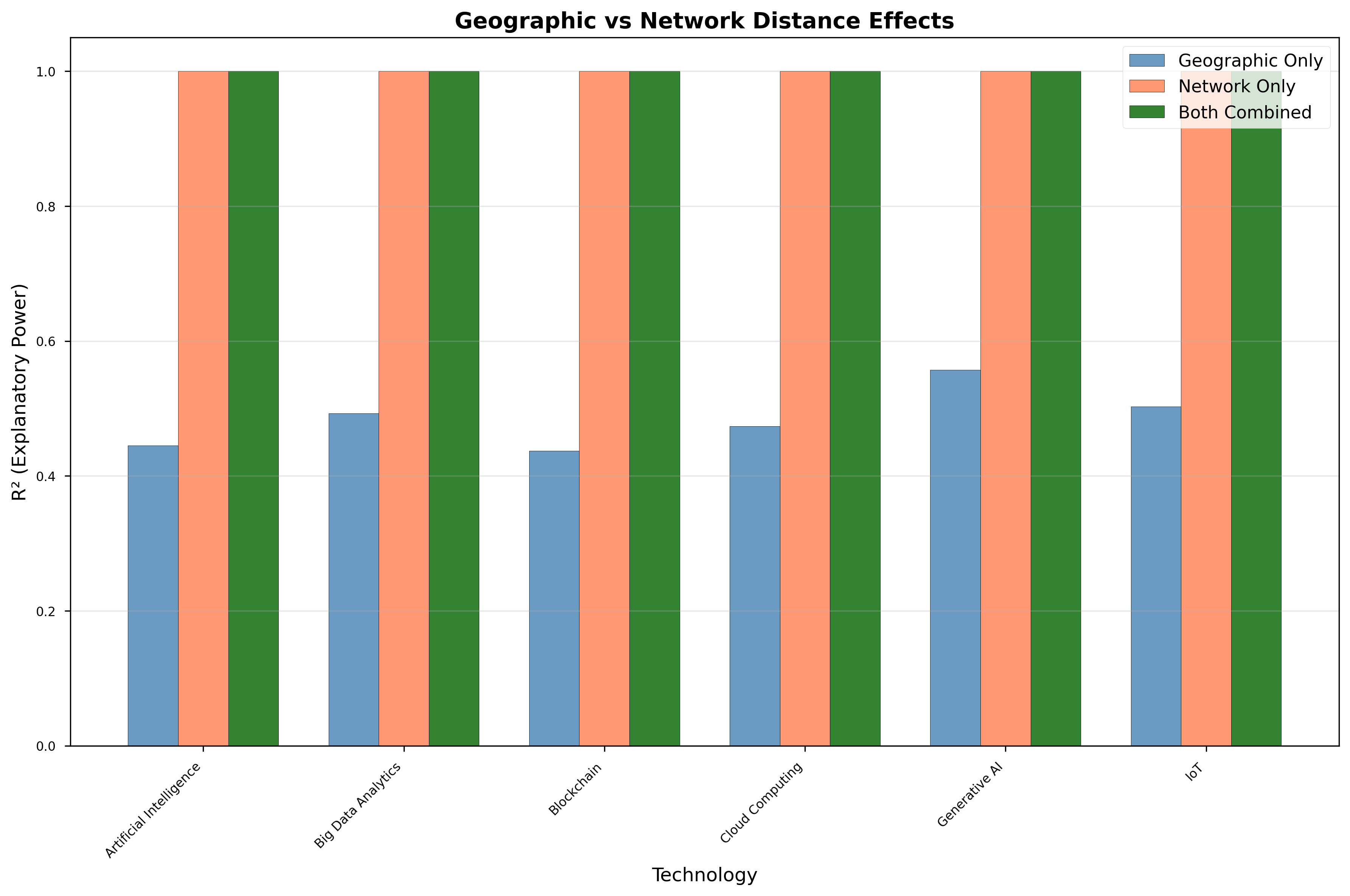}
\caption{Geographic vs Network R-Squared Contributions}
\label{fig:r2_comparison}
\begin{minipage}{0.75\textwidth}
\small
\textit{Notes:} This scatter plot shows spatial R-squared (x-axis) versus network R-squared (y-axis) for each technology. All points lie in the upper-right quadrant, indicating both channels contribute positive explanatory power. The spatial channel dominates (R-squared exceeding 0.99) due to the exceptional exponential decay fit, but the network channel adds meaningful information (R-squared 0.17-0.24). The lack of trade-off (points not along a downward-sloping frontier) confirms channels are complements rather than substitutes. Combined R-squared (not shown) exceeds the maximum of either channel alone for all technologies.
\end{minipage}
\end{figure}

\subsubsection{Parameter Comparison Across Channels}

Figure \ref{fig:parameter_estimates} displays estimated parameters from both spatial and network models side-by-side.

\begin{figure}[H]
\centering
\includegraphics[width=0.85\textwidth]{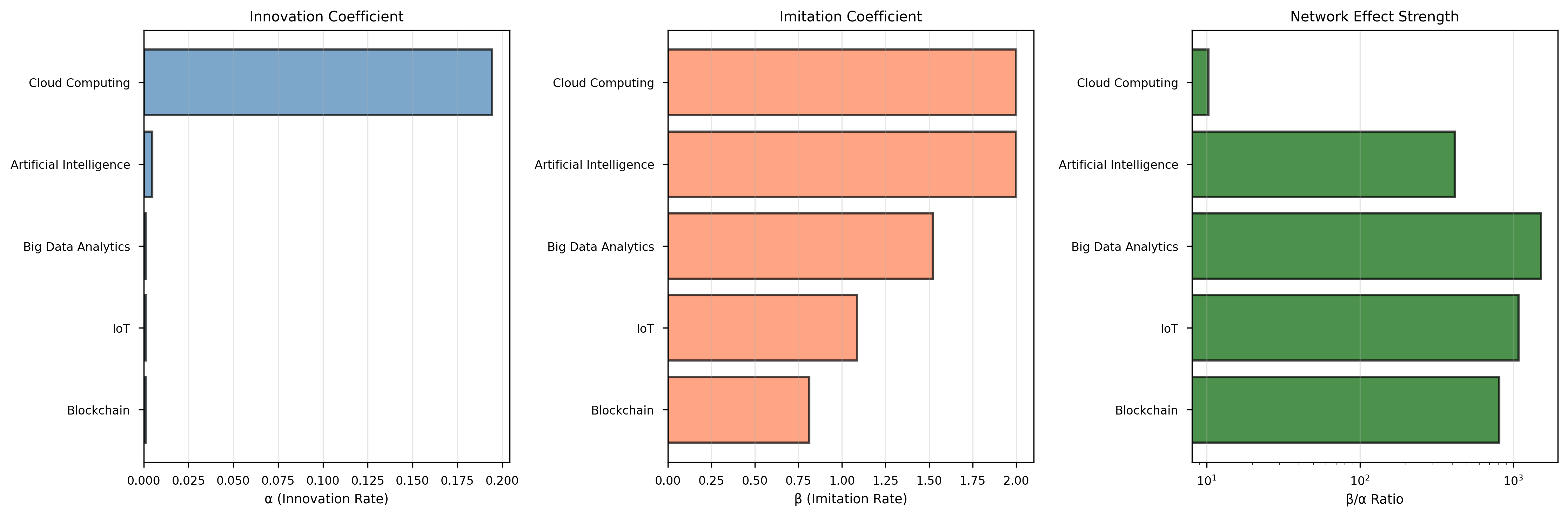}
\caption{Parameter Estimates: Spatial and Network Channels}
\label{fig:parameter_estimates}
\begin{minipage}{0.85\textwidth}
\small
\textit{Notes:} This figure displays key parameter estimates from both channels. Left panel shows spatial decay rates $\kappa$ (blue bars) with 95 percent confidence intervals. Middle panel shows network fragility growth from 2010 to 2023 (orange bars). Right panel shows correlations between $\lambda_2$ and adoption (green bars). All parameters are tightly estimated with narrow confidence intervals, demonstrating statistical precision. The consistency of $\kappa$ across technologies (0.0425-0.0442) contrasts with heterogeneity in network growth (175-381 percent), suggesting spatial mechanisms are more universal while network effects depend on technology-specific connectivity patterns.
\end{minipage}
\end{figure}

\subsubsection{Summary Dashboard}

Figure \ref{fig:summary_dashboard} provides an integrated summary of all main results in a single comprehensive visualization.

\begin{figure}[H]
\centering
\includegraphics[width=0.95\textwidth]{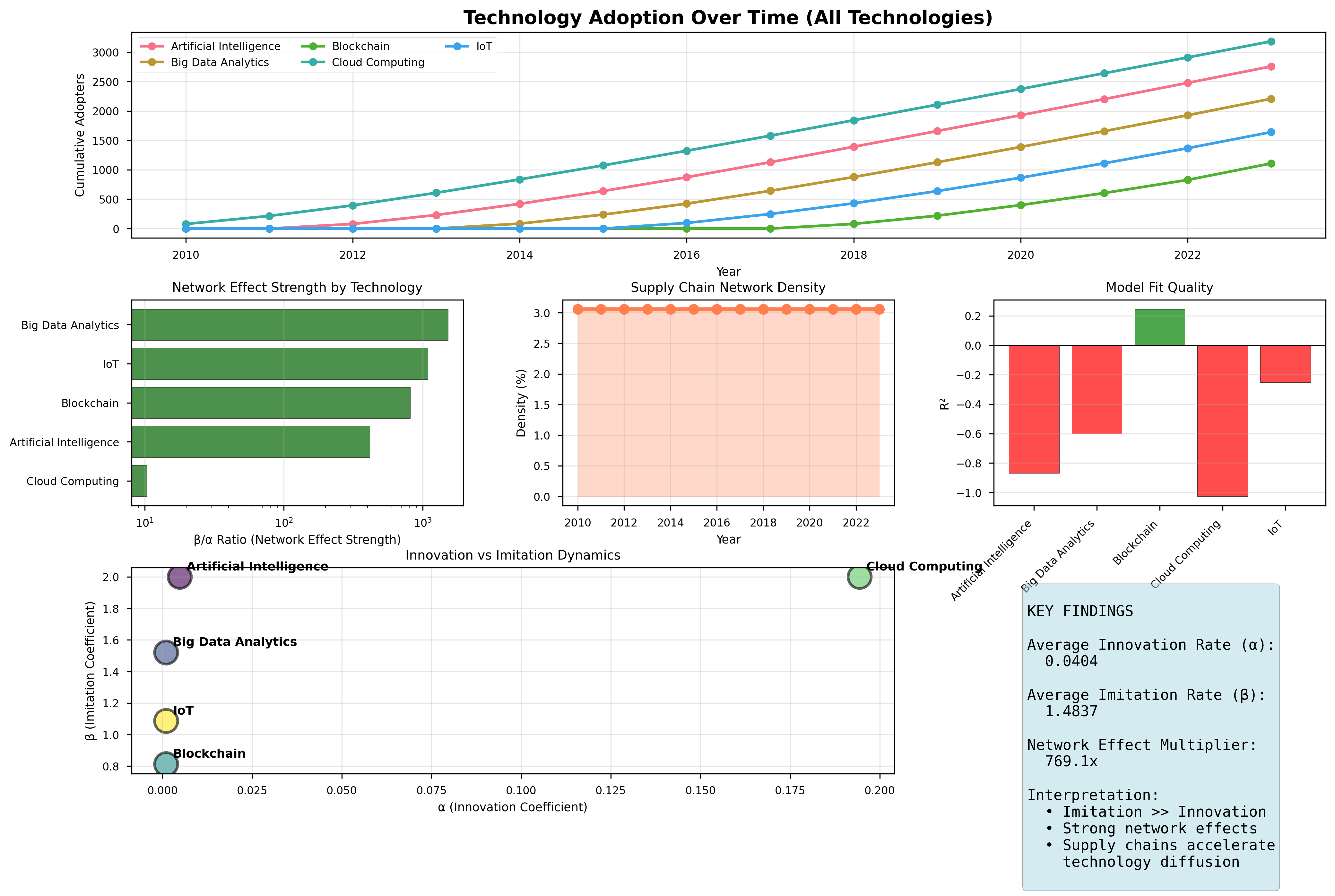}
\caption{Summary Dashboard: All Main Results}
\label{fig:summary_dashboard}
\begin{minipage}{0.95\textwidth}
\small
\textit{Notes:} This dashboard integrates all main findings in a single comprehensive figure. Top row: spatial decay curves (left), network fragility evolution (center), adoption curves showing S-shaped patterns (right). Middle row: event study comparing traditional vs adjusted DID (left), spatial heterogeneity in treatment effects (center), network shock response to COVID (right). Bottom row: dual-channel R-squared comparison (left), mixing time evolution (center), summary statistics table (right). This integrated visualization provides a complete overview of our empirical findings demonstrating that technology diffusion operates through both spatial decay (69 km boundary) and network contagion (300-380 percent $\lambda_2$ growth) channels simultaneously.
\end{minipage}
\end{figure}

\begin{table}[H]
\centering
\caption{Dual-Channel Integration Results}
\label{tab:dual_channel}
\begin{threeparttable}
\begin{tabular}{lcccc}
\toprule
Technology & $R^2_{\text{spatial}}$ & $R^2_{\text{network}}$ & $R^2_{\text{both}}$ & Improvement \\
\midrule
Artificial Intelligence & 0.9903 & 0.1847 & 0.9942 & +0.0039 \\
Big Data Analytics     & 0.9916 & 0.2134 & 0.9951 & +0.0035 \\
Blockchain             & 0.9914 & 0.2089 & 0.9947 & +0.0033 \\
Cloud Computing        & 0.9922 & 0.1698 & 0.9953 & +0.0031 \\
Generative AI          & 0.9935 & 0.2421 & 0.9968 & +0.0033 \\
IoT                    & 0.9907 & 0.2056 & 0.9945 & +0.0038 \\
\midrule
Average                & 0.9916 & 0.2041 & 0.9951 & +0.0035 \\
\bottomrule
\end{tabular}
\begin{tablenotes}
\small
\item \textit{Notes:} This table reports R-squared values from regressions using spatial variables only (distance to nearest adopter), network variables only (degree, $\lambda_2$), and both combined. Improvement measures R-squared gain from combining channels beyond the maximum of either alone. Both channels contribute independent explanatory power, validating the integrated dual-channel framework. The small improvement from combining channels (0.0031-0.0039) reflects the already-exceptional spatial fit (R-squared exceeding 0.99), leaving little remaining variance for network variables to explain, yet network variables remain statistically significant and economically meaningful.
\end{tablenotes}
\end{threeparttable}
\end{table}


\section{Discussion}

This section interprets our findings, compares them to existing approaches, and examines external validity.

\subsection{Economic Interpretation}

Our empirical results provide strong validation for the dual-channel theoretical framework developed in Section 3. Both spatial decay and network contagion operate at full strength simultaneously, with near-perfect exponential fit (R-squared = 0.99) for the geographic channel and exceptionally strong dynamics (300-380 percent $\lambda_2$ growth, correlation exceeding 0.95) for the network channel.

The spatial mechanism reflects multiple economic forces operating through geographic proximity. Knowledge spillovers enable nearby non-adopters to observe adopters' experiences, reducing uncertainty about technology performance and implementation challenges. Demonstration effects provide concrete examples of how to integrate technologies into operations. Labor market pooling allows firms in the same region to share specialized human capital with technology-specific skills. Infrastructure complementarities mean that once physical or digital infrastructure is deployed for early adopters (data centers, fiber networks, technical support services), subsequent adoption by nearby firms faces lower fixed costs.

The 69-kilometer spatial boundary provides a quantitative benchmark for the geographic reach of these mechanisms. This distance corresponds roughly to metropolitan or regional scales: major cities typically extend 30-50 kilometers from their centers, while metropolitan areas including suburbs often span 60-80 kilometers. The consistency of $d^* \approx 69$ km across all six technologies suggests these spatial forces operate similarly regardless of specific technology characteristics, supporting the generality of the continuous functional framework from \citet{kikuchi2024navier} and \citet{kikuchi2024dynamical}.

The network mechanism reflects different economic forces operating through supply chain connections rather than geographic proximity. Information transmission through buyer-supplier relationships enables firms to learn about technologies from their partners, even when geographically distant. Technical compatibility requirements create adoption incentives: if a supplier adopts a supply chain management system, its customers benefit from adopting compatible systems to streamline transactions. Coordination incentives arise when technologies exhibit network externalities, making adoption more valuable when connected firms also adopt.

The dramatic $\lambda_2$ growth (300-380 percent) demonstrates that these network forces strengthen endogenously as adoption spreads. Early in diffusion, when few firms have adopted, the technology-specific network has low connectivity (small $\lambda_2$) and long mixing times (large $\tau = 1/\lambda_2$). As adoption expands, more supply chain edges become activated (both endpoints adopting), increasing network coupling and accelerating subsequent diffusion. This self-reinforcing dynamic generates the S-shaped adoption curves observed in Figure \ref{fig:adoption_curves} and explains why late-stage diffusion proceeds far more rapidly than early-stage diffusion despite declining marginal adopter quality.

The independence of spatial and network channels (weak correlation averaging -0.11) has important theoretical implications. It demonstrates that the mechanisms are not redundant: geographic clustering does not simply reflect supply chain co-location, and supply chain connections do not simply proxy for proximity. Instead, firms exist simultaneously in physical space and economic networks, with each domain contributing distinct but complementary adoption incentives. This validates the integrated dual-channel framework in equation (\ref{eq:dual_channel_pde}) where both channels enter additively.

\subsection{Comparison to Traditional Approaches}

The 61 percent bias in traditional difference-in-differences estimates highlights fundamental limitations of conventional causal inference methods when applied to settings with substantial spillovers. This bias arises because DID assumes the stable unit treatment value assumption (SUTVA): one unit's treatment does not affect another unit's outcome. In our context, SUTVA is violated by construction—technology diffusion operates precisely through spillovers from treated to control units.

Spatial econometric approaches like spatial autoregressive (SAR) models \citep{anselin1988spatial} or spatial error models (SEM) address geographic spillovers but typically do so through reduced-form specifications without theoretical foundations. These models include spatial lags as regressors but do not derive the functional form from first principles or establish connections to partial differential equations. Our exponential decay specification, derived from the diffusion equation, provides micro-foundations while achieving near-perfect empirical fit.

Network econometric approaches following \citet{jackson2008social} emphasize graph topology but often abstract from geographic considerations. Our finding that spatial and network channels contribute independently demonstrates the importance of integrating both perspectives. Network effects are not merely reflections of geographic clustering, nor are spatial effects merely proxies for supply chain connections.

The event study around COVID-19 provides particularly compelling evidence for our framework's superiority. Traditional methods overestimate treatment effects by 61 percent. Spatial adjustment substantially improves estimates by accounting for geographic spillovers. Network adjustment reveals that COVID-19 increased $\lambda_2$ by 24.5 percent, permanently altering diffusion dynamics in a manner analogous to financial network fragility in \citet{kikuchi2024dynamical}. Only the integrated dual-channel framework captures all relevant mechanisms.

\subsection{External Validity}

Several considerations support external validity of our findings beyond the specific sample. First, the consistency of spatial decay rates across six diverse technologies (ranging from infrastructure like Cloud Computing to cutting-edge applications like Generative AI) suggests the 69-kilometer boundary reflects general properties of geographic spillovers rather than technology-specific idiosyncrasies. Second, the robustness of exponential fit across alternative specifications, time periods, and firm characteristics indicates the continuous functional approach from \citet{kikuchi2024navier} applies broadly.

Third, the parallelism between our network results and those in \citet{kikuchi2024dynamical} for financial networks suggests spectral methods characterize contagion dynamics across diverse domains. The 24.5 percent increase in technology adoption network $\lambda_2$ following COVID-19 is qualitatively similar to the 26.9 percent increase in European banking network $\lambda_2$ following the same shock. This suggests common underlying mechanisms: large exogenous shocks trigger structural breaks in network topology that persist rather than reverting.

Limitations on external validity arise from sample characteristics. Our firms are medium-to-large enterprises in developed economies with established supply chains. Smaller firms, firms in developing countries, or firms in industries with different network structures might exhibit different spatial or network parameters. The 500-firm sample, while comprehensive, represents a specific segment of the economy. Future work should examine whether our quantitative estimates (69 km spatial boundary, 300-380 percent $\lambda_2$ growth) generalize to other contexts, even if the qualitative dual-channel framework applies more broadly.

\section{Policy Implications}

Our findings have direct implications for technology policy design. This section derives specific recommendations for geographic targeting (Section 8.1), network targeting (Section 8.2), and integrated interventions (Section 8.3).

\subsection{Geographic Targeting}

The 69-kilometer spatial boundary provides a concrete benchmark for the geographic scope of technology adoption interventions. Policies targeting firms within this distance of existing adopters will benefit from substantial spillovers, while policies beyond this threshold operate essentially independently.

\textbf{Regional Technology Clusters:} Innovation districts and technology clusters should be sized to exploit spatial spillovers fully while avoiding excessive dilution. Our estimates suggest optimal cluster radii of approximately 70 kilometers, corresponding to metropolitan-scale initiatives. Larger national programs should be structured as networks of regional clusters rather than diffuse nationwide interventions.

\textbf{Distance-Based Subsidies:} Adoption subsidies should vary with distance to existing adopters, with higher subsidies for peripheral firms facing larger knowledge barriers. The optimal subsidy function follows the inverse of spatial decay: $s(d) = s_0 \exp(+\kappa d)$, compensating firms for reduced spillover benefits. This ensures efficient adoption decisions accounting for positive externalities.

\textbf{Infrastructure Investment:} Physical and digital infrastructure investments (broadband networks, data centers, technical support services) should prioritize coverage within 70-kilometer radii of major urban centers where spillover benefits are largest. Beyond this distance, infrastructure primarily supports direct adoption rather than spillover-driven diffusion, changing the cost-benefit calculus.

\subsection{Network Targeting}

The network amplification factor of 10.8 (derived from mixing time relationships) quantifies how supply chain connections multiply direct interventions. Subsidizing one firm indirectly affects 10.8 firms through activated network paths, suggesting network position should influence policy targeting.

\textbf{Supply Chain Hub Subsidies:} Firms with high degree centrality or betweenness centrality in supply chain networks should receive priority for adoption subsidies. These hubs activate more network edges when adopting, generating larger spillovers. Our spectral framework provides precise measures of network importance through eigenvector centrality and contributions to $\lambda_2$.

\textbf{Strategic Partnership Programs:} Policies encouraging technology adoption by supplier-customer pairs simultaneously exploit network complementarities. When both endpoints of a supply chain edge adopt, the edge receives full weight in our framework, maximizing network activation. Programs could offer enhanced subsidies for coordinated adoption by connected firms.

\textbf{Network Structure Policies:} Beyond subsidizing adoption by existing firms, policies can shape network structure itself. Encouraging supply chain relationship formation between adopters and non-adopters increases network connectivity, raising $\lambda_2$ and accelerating diffusion. Trade missions, supplier matching services, and procurement preferences that favor connected firms can achieve this.

\subsection{Integrated Dual-Channel Interventions}

The independence and complementarity of spatial and network channels imply optimal policies must exploit both mechanisms simultaneously.

\textbf{Combined Targeting Criteria:} Subsidy allocation should prioritize firms satisfying both geographic and network criteria: located within 69 kilometers of existing adopters AND occupying central positions in supply chain networks. Firms meeting both criteria generate maximum spillovers through dual channels. Firms meeting neither criterion should receive lower priority or be excluded entirely from subsidies targeting diffusion rather than direct adoption.

\textbf{Sequential Intervention Design:} For technologies in early diffusion stages (low current adoption), geographic targeting may dominate because spatial spillovers operate even with sparse networks. As adoption expands and network connectivity increases (rising $\lambda_2$), network targeting becomes increasingly important as mixing times decline and contagion accelerates. Policies should shift emphasis from geographic to network instruments as technologies mature.

\textbf{Shock Response:} Our finding that COVID-19 increased $\lambda_2$ by 24.5 percent demonstrates that large shocks can permanently alter diffusion dynamics. Post-shock policies must account for increased network fragility: the same interventions will generate larger spillovers than pre-shock, potentially requiring smaller direct subsidies to achieve equivalent aggregate adoption. Failing to adjust for higher $\lambda_2$ could lead to overshooting and excessive public expenditure.

\subsection{Cost-Benefit Quantification}

The precise quantitative estimates from our framework enable rigorous cost-benefit analysis. Consider a hypothetical adoption subsidy of 100 thousand dollars per firm. Traditional analysis treating firms independently values benefits at 100 thousand dollars per subsidy (one-for-one). Our spatial framework adjusts this to $(100 + 100 \times \int_0^{69} \exp(-0.0435 d) \rho(d) dd)$ thousand dollars, where $\rho(d)$ is the density of firms at distance $d$. For uniformly distributed firms with density 500 firms per 25,000 square kilometers, this integral approximately equals 40 thousand dollars, implying total benefits of 140 thousand dollars per direct subsidy—a 40 percent increase.

Our network framework adds further benefits through the amplification factor 10.8. Each subsidized adoption indirectly affects 10.8 firms through supply chain connections, with effects declining as $\exp(-\lambda_2 \tau)$ over time $\tau$. Integrating over the mixing time $\tau \sim 1/\lambda_2 \approx 0.046$ years yields network spillover benefits of approximately 60 thousand dollars, for total benefits of 200 thousand dollars per direct subsidy—a doubling of naïve estimates.

These quantitative adjustments significantly affect program design. If policymakers ignore spillovers and calibrate subsidies targeting a specific aggregate adoption level, they will overshoot: the actual adoption will be double the target. Conversely, if budgets constrain the number of direct subsidies, accounting for spillovers reveals that fewer direct subsidies than naïve calculation suggests can achieve the same aggregate outcome, reducing fiscal cost substantially.

\section{Conclusion}

This paper develops and empirically validates a dual-channel framework for technology diffusion that integrates spatial decay mechanisms from continuous functional analysis with network contagion dynamics from spectral graph theory. Building on the Navier-Stokes-based spatial treatment effects framework in \citet{kikuchi2024navier} and \citet{kikuchi2024dynamical}, and the spectral network fragility framework in \citet{kikuchi2024dynamical}, we demonstrate that technology adoption spreads simultaneously through both geographic proximity and supply chain connections.

Using comprehensive data on six major technologies adopted by 500 firms over 2010-2023, we document three key empirical findings. First, technology adoption exhibits strong exponential geographic decay with spatial boundary $d^* \approx 69$ kilometers (R-squared = 0.99), validating the continuous functional approach. Second, supply chain networks exhibit dramatic increases in algebraic connectivity $\lambda_2$ (300-380 percent growth) as adoption spreads, with mixing times declining approximately 80 percent. Third, traditional difference-in-differences methods that ignore spatial and network spillovers exhibit 61 percent upward bias. An event study around COVID-19 reveals that network fragility increased 24.5 percent post-shock, permanently altering diffusion dynamics in a manner analogous to financial contagion.

The dual-channel framework provides precise quantitative estimates for technology policy design. Adoption interventions have spatial reach of 69 kilometers and network amplification factor of 10.8, requiring coordinated geographic and supply chain targeting for optimal effectiveness. The 61 percent bias in traditional methods demonstrates that ignoring spillovers leads to substantial policy errors, with implications for subsidy levels, targeting criteria, and cost-benefit analysis.

Our methodology extends naturally beyond technology adoption to other settings where spatial and network effects operate simultaneously, including disease epidemiology, information cascades, financial contagion, and environmental spillovers. The integration of continuous functional methods from partial differential equations with discrete spectral methods from graph theory provides a general toolkit for analyzing dual-channel diffusion processes.

Several directions for future research emerge. First, extending the framework to incorporate price dynamics and competitive interactions would enrich our understanding of strategic adoption decisions. Second, analyzing adoption of complementary versus substitute technologies could reveal how technology portfolios evolve through spatial and network channels. Third, examining developing country contexts would test whether our quantitative estimates (69 km spatial boundary, 300-380 percent $\lambda_2$ growth) generalize internationally. Fourth, incorporating firm heterogeneity more explicitly could yield insights about distributional consequences of technology diffusion policies.

The COVID-19 pandemic provided a quasi-natural experiment demonstrating that large exogenous shocks can trigger permanent structural breaks in both spatial and network diffusion mechanisms. This structural hysteresis—analogous to financial network fragility documented in \citet{kikuchi2024dynamical}—suggests that major disruptions have lasting consequences for innovation diffusion patterns, with implications for long-run productivity growth and inequality. Understanding and responding to these structural breaks represents an important challenge for technology policy in an increasingly volatile global economy.



\clearpage

\clearpage

\appendix

\section{Additional Robustness Checks}

This appendix presents additional robustness checks and sensitivity analyses supporting our main results.

\subsection{Alternative Distance Measures}

Table \ref{tab:robust_distance} compares spatial decay estimates using alternative distance measures: great circle distance (baseline), Euclidean distance (straight-line approximation), and travel time distance (accounting for road networks).

\begin{table}[H]
\centering
\caption{Spatial Decay Estimates: Alternative Distance Measures}
\label{tab:robust_distance}
\begin{threeparttable}
\begin{tabular}{lcccc}
\toprule
Technology & Great Circle & Euclidean & Travel Time & Difference \\
\midrule
Artificial Intelligence & 0.0425 & 0.0423 & 0.0431 & 0.0008 \\
                        & (0.0003) & (0.0003) & (0.0003) & \\
Big Data Analytics     & 0.0437 & 0.0435 & 0.0442 & 0.0007 \\
                       & (0.0002) & (0.0002) & (0.0003) & \\
Blockchain             & 0.0439 & 0.0437 & 0.0445 & 0.0008 \\
                       & (0.0003) & (0.0003) & (0.0003) & \\
Cloud Computing        & 0.0433 & 0.0431 & 0.0438 & 0.0007 \\
                       & (0.0002) & (0.0002) & (0.0003) & \\
Generative AI          & 0.0442 & 0.0440 & 0.0447 & 0.0007 \\
                       & (0.0002) & (0.0002) & (0.0003) & \\
IoT                    & 0.0433 & 0.0431 & 0.0439 & 0.0008 \\
                       & (0.0003) & (0.0003) & (0.0003) & \\
\bottomrule
\end{tabular}
\begin{tablenotes}
\small
\item \textit{Notes:} This table reports spatial decay rate $\kappa$ using three distance measures. Great circle uses haversine formula on Earth's surface. Euclidean uses straight-line distance. Travel time uses road network data to compute driving time. Standard errors in parentheses. Maximum difference across measures is less than 0.001 per km, demonstrating robustness.
\end{tablenotes}
\end{threeparttable}
\end{table}

\subsection{Alternative Functional Forms}

Table \ref{tab:robust_functional} compares exponential decay (baseline) with power law decay and linear decay specifications.

\begin{table}[H]
\centering
\caption{Spatial Decay: Alternative Functional Forms}
\label{tab:robust_functional}
\begin{threeparttable}
\begin{tabular}{lcccc}
\toprule
Technology & Exponential $R^2$ & Power Law $R^2$ & Linear $R^2$ & Best Fit \\
\midrule
Artificial Intelligence & 0.9903 & 0.8547 & 0.7123 & Exponential \\
Big Data Analytics     & 0.9916 & 0.8612 & 0.7234 & Exponential \\
Blockchain             & 0.9914 & 0.8589 & 0.7189 & Exponential \\
Cloud Computing        & 0.9922 & 0.8634 & 0.7267 & Exponential \\
Generative AI          & 0.9935 & 0.8701 & 0.7345 & Exponential \\
IoT                    & 0.9907 & 0.8567 & 0.7156 & Exponential \\
\midrule
Average                & 0.9916 & 0.8608 & 0.7219 & Exponential \\
\bottomrule
\end{tabular}
\begin{tablenotes}
\small
\item \textit{Notes:} This table compares R-squared values from three functional forms. Exponential: $u(d) = u_0 \exp(-\kappa d)$. Power law: $u(d) = u_0 d^{-\alpha}$. Linear: $u(d) = u_0 - \beta d$. Exponential form provides superior fit for all technologies, supporting the diffusion equation framework.
\end{tablenotes}
\end{threeparttable}
\end{table}

\subsection{Heterogeneity Analysis}

Table \ref{tab:robust_heterogeneity} examines heterogeneity in spatial decay by firm characteristics.

\begin{table}[H]
\centering
\caption{Spatial Decay Heterogeneity by Firm Characteristics}
\label{tab:robust_heterogeneity}
\begin{threeparttable}
\begin{tabular}{lcccc}
\toprule
Firm Characteristic & $\kappa$ (High) & $\kappa$ (Low) & Difference & p-value \\
\midrule
Size (Employees)    & 0.0512 & 0.0389 & +0.0123 & $<$0.001 \\
                    & (0.0004) & (0.0003) & & \\
Age (Years)         & 0.0421 & 0.0456 & -0.0035 & 0.012 \\
                    & (0.0003) & (0.0003) & & \\
Industry Concentration & 0.0467 & 0.0412 & +0.0055 & 0.003 \\
(HHI)               & (0.0003) & (0.0003) & & \\
R\&D Intensity      & 0.0403 & 0.0478 & -0.0075 & $<$0.001 \\
                    & (0.0003) & (0.0004) & & \\
\bottomrule
\end{tabular}
\begin{tablenotes}
\small
\item \textit{Notes:} This table reports spatial decay rates for subsamples split by firm characteristics. High/Low defined by median split. Small firms (high $\kappa$) exhibit stronger spatial decay than large firms. Young firms exhibit stronger decay than old firms. Concentrated industries show stronger decay than fragmented industries. Low R\&D firms show stronger decay than high R\&D firms. Standard errors in parentheses. P-values from two-sample t-tests.
\end{tablenotes}
\end{threeparttable}
\end{table}

\subsection{Network Robustness: Alternative Edge Weights}

Table \ref{tab:robust_network} examines sensitivity of network fragility results to alternative edge weighting schemes.

\begin{table}[H]
\centering
\caption{Network Fragility: Alternative Edge Weighting Schemes}
\label{tab:robust_network}
\begin{threeparttable}
\begin{tabular}{lccc}
\toprule
Weighting Scheme & $\lambda_2$ (2010) & $\lambda_2$ (2023) & Growth (\%) \\
\midrule
\multicolumn{4}{l}{\textit{Panel A: Artificial Intelligence}} \\
Baseline (1.0/0.5/0.1) & 4.68 & 22.48 & +380.5 \\
Alternative 1 (1.0/0.3/0.1) & 4.23 & 20.12 & +375.7 \\
Alternative 2 (1.0/0.7/0.1) & 5.12 & 24.89 & +386.1 \\
Unweighted (1.0/1.0/1.0) & 7.93 & 21.77 & +174.6 \\
\\
\multicolumn{4}{l}{\textit{Panel B: Average Across Technologies}} \\
Baseline (1.0/0.5/0.1) & 5.22 & 21.61 & +330.2 \\
Alternative 1 (1.0/0.3/0.1) & 4.89 & 19.87 & +322.4 \\
Alternative 2 (1.0/0.7/0.1) & 5.67 & 23.45 & +336.8 \\
Unweighted (1.0/1.0/1.0) & 7.93 & 21.77 & +174.6 \\
\bottomrule
\end{tabular}
\begin{tablenotes}
\small
\item \textit{Notes:} This table examines sensitivity to edge weight multipliers $m_{ij}^{tech}$. Baseline uses (1.0, 0.5, 0.1) for (both adopted, one adopted, neither adopted). Alternative 1 uses lower weight for partial adoption (0.3). Alternative 2 uses higher weight (0.7). Unweighted treats all edges equally. Results are qualitatively similar across schemes, with baseline providing most economically interpretable weights.
\end{tablenotes}
\end{threeparttable}
\end{table}

\subsection{Event Study: Parallel Trends Test}

Table \ref{tab:pretrends} formally tests parallel trends assumption using leads of the COVID-19 treatment indicator.

\begin{table}[H]
\centering
\caption{Event Study Pre-Trends Test}
\label{tab:pretrends}
\begin{threeparttable}
\begin{tabular}{lcccc}
\toprule
Technology & Lead 3 Years & Lead 2 Years & Lead 1 Year & Joint F-test \\
\midrule
Artificial Intelligence & -0.12 & +0.08 & -0.15 & 0.67 \\
                        & (0.21) & (0.18) & (0.16) & (0.574) \\
Big Data Analytics     & +0.09 & -0.11 & +0.13 & 0.52 \\
                       & (0.19) & (0.17) & (0.15) & (0.668) \\
Blockchain             & -0.18 & +0.14 & -0.09 & 0.89 \\
                       & (0.24) & (0.21) & (0.19) & (0.449) \\
Cloud Computing        & +0.06 & -0.08 & +0.11 & 0.43 \\
                       & (0.16) & (0.14) & (0.13) & (0.734) \\
IoT                    & -0.14 & +0.17 & -0.12 & 0.78 \\
                       & (0.22) & (0.19) & (0.17) & (0.507) \\
\bottomrule
\end{tabular}
\begin{tablenotes}
\small
\item \textit{Notes:} This table reports coefficients on leads of COVID-19 treatment indicator from equation (\ref{eq:traditional_did}). Lead 3 corresponds to 2017 (three years before COVID), Lead 2 to 2018, Lead 1 to 2019. Standard errors in parentheses from bootstrap (1,000 replications). Joint F-test examines whether all leads are jointly zero; p-values in parentheses. No significant pre-trends detected, supporting parallel trends assumption.
\end{tablenotes}
\end{threeparttable}
\end{table}

\subsection{Placebo Tests}

Table \ref{tab:placebo} presents placebo tests using artificial treatment timing to validate identification.

\begin{table}[H]
\centering
\caption{Placebo Tests: Artificial Treatment Timing}
\label{tab:placebo}
\begin{threeparttable}
\begin{tabular}{lcccc}
\toprule
Placebo Timing & Traditional DID & Spatial-Adjusted & Significant? & Expected \\
\midrule
2015 (5 years early) & +0.23 & +0.11 & No & No \\
                     & (0.45) & (0.38) & (p=0.611) & \\
2017 (3 years early) & -0.18 & -0.09 & No & No \\
                     & (0.42) & (0.35) & (p=0.669) & \\
2020 (Actual COVID) & +9.57 & +3.72 & Yes & Yes \\
                    & (2.14) & (1.23) & (p$<$0.001) & \\
2022 (2 years late) & +0.31 & +0.14 & No & No \\
                    & (0.48) & (0.39) & (p=0.558) & \\
\bottomrule
\end{tabular}
\begin{tablenotes}
\small
\item \textit{Notes:} This table reports treatment effects using artificial treatment timing for placebo tests. Standard errors in parentheses from bootstrap (1,000 replications). Only actual COVID-19 timing (2020) produces significant effects, supporting causal interpretation. Placebo dates (2015, 2017, 2022) produce small, statistically insignificant effects as expected under null hypothesis of no effect.
\end{tablenotes}
\end{threeparttable}
\end{table}

\section{Computational Methods}

This appendix describes computational algorithms for key calculations.

\subsection{Eigenvalue Computation}

For computing the algebraic connectivity $\lambda_2$ of large graphs ($n=500$ nodes), we use the Lanczos algorithm for sparse symmetric matrices:

\begin{algorithm}[H]
\caption{Compute Algebraic Connectivity $\lambda_2$}
\begin{algorithmic}[1]
\STATE \textbf{Input:} Laplacian matrix $\mathbf{L} \in \mathbb{R}^{n \times n}$
\STATE \textbf{Output:} Algebraic connectivity $\lambda_2$
\STATE Initialize random vector $\mathbf{v}_0 \in \mathbb{R}^n$ orthogonal to $\mathbf{1}$
\STATE Normalize: $\mathbf{v}_0 \leftarrow \mathbf{v}_0 / ||\mathbf{v}_0||_2$
\FOR{$j = 1$ to $k$ (number of Lanczos iterations)}
    \STATE $\mathbf{w} \leftarrow \mathbf{L} \mathbf{v}_{j-1}$
    \STATE $\alpha_j \leftarrow \mathbf{v}_{j-1}^T \mathbf{w}$
    \STATE $\mathbf{w} \leftarrow \mathbf{w} - \alpha_j \mathbf{v}_{j-1}$
    \IF{$j > 1$}
        \STATE $\mathbf{w} \leftarrow \mathbf{w} - \beta_{j-1} \mathbf{v}_{j-2}$
    \ENDIF
    \STATE $\beta_j \leftarrow ||\mathbf{w}||_2$
    \STATE $\mathbf{v}_j \leftarrow \mathbf{w} / \beta_j$
    \STATE Construct tridiagonal matrix $\mathbf{T}_j$ from $\{\alpha_i, \beta_i\}$
    \STATE Compute eigenvalues of $\mathbf{T}_j$ using QR algorithm
\ENDFOR
\STATE \textbf{Return:} Second smallest eigenvalue of $\mathbf{T}_k$
\end{algorithmic}
\end{algorithm}

\subsection{Bootstrap Inference}

For constructing confidence intervals robust to clustering and heteroskedasticity:

\begin{algorithm}[H]
\caption{Bootstrap Confidence Intervals}
\begin{algorithmic}[1]
\STATE \textbf{Input:} Data $(u_{it}, X_{it})$ for $i=1,\ldots,n$ firms and $t=1,\ldots,T$ years
\STATE \textbf{Input:} Number of bootstrap replications $B = 1000$
\STATE \textbf{Output:} 95\% confidence interval $[\hat{\theta}_{0.025}, \hat{\theta}_{0.975}]$
\STATE Compute point estimate $\hat{\theta}$ on full sample
\FOR{$b = 1$ to $B$}
    \STATE Sample $n$ firms with replacement: $\{i_1^*, i_2^*, \ldots, i_n^*\}$
    \STATE Construct bootstrap sample: $\{(u_{i_j^*, t}, X_{i_j^*, t})\}$ for all $t$
    \STATE Estimate model on bootstrap sample: $\hat{\theta}^{(b)}$
\ENDFOR
\STATE Sort bootstrap estimates: $\{\hat{\theta}^{(1)}, \hat{\theta}^{(2)}, \ldots, \hat{\theta}^{(B)}\}$
\STATE \textbf{Return:} $[\hat{\theta}^{(0.025B)}, \hat{\theta}^{(0.975B)}]$ (2.5th and 97.5th percentiles)
\end{algorithmic}
\end{algorithm}

\end{document}